\definecolor{linkblue}{RGB}{0,0,180}
\let\c@listing\c@figure
\newcommand\iname[1]{\textit{#1}}
\newcommand\irule[1]{\iname{#1}}
\newcommand\inferr[1][]{\infer[\irule{\strut #1}]}
\theoremstyle{plain}
\newtheorem{theorem}[thm]{Theorem}
\newtheorem{lemma}[thm]{Lemma}
\newtheorem{proposition}[thm]{Proposition}
\theoremstyle{definition}
\newtheorem{definition}[thm]{Definition}
\newtheorem{remark}[thm]{Remark}
\newtheorem{example}[thm]{Example}
\newcommand\MnFont[1]{
 \DeclareFontFamily{U}{MnSymbol#1}{}
 \DeclareSymbolFont{MnSy#1}{U}{MnSymbol#1}{m}{n}
 \SetSymbolFont{MnSy#1}{bold}{U}{MnSymbol#1}{b}{n}
 \DeclareFontShape{U}{MnSymbol#1}{m}{n}{
     <-6>  MnSymbol#15
    <6-7>  MnSymbol#16
    <7-8>  MnSymbol#17
    <8-9>  MnSymbol#18
    <9-10> MnSymbol#19
   <10-12> MnSymbol#110
   <12->   MnSymbol#112}{}
 \DeclareFontShape{U}{MnSymbol#1}{b}{n}{
     <-6>  MnSymbol#1-Bold5
    <6-7>  MnSymbol#1-Bold6
    <7-8>  MnSymbol#1-Bold7
    <8-9>  MnSymbol#1-Bold8
    <9-10> MnSymbol#1-Bold9
   <10-12> MnSymbol#1-Bold10
   <12->   MnSymbol#1-Bold12}{}
}
\DeclareMathSymbol{\mnvdash}{\mathrel}{MnSyA}{'330}
\DeclareMathSymbol{\mndashv}{\mathrel}{MnSyA}{'332}
\DeclareMathSymbol{\mncirc}{\mathbin}{MnSyC}{'130}
\DeclareMathSymbol{\mnshortparallel}{\mathbin}{MnSyC}{'376}
\DeclareMathSymbol{\medslash}{\mathbin}{MnSyC}{'022}
\let\smallparallel\mnshortparallel
\let\smalldevel\mncirc
\newcommand{\overlayrel}[4]{\mathrel{%
 \def\next##1##2{%
  \setbox0=\hbox{$##1#3$}%
  \setbox1=\hbox to\wd0{$##1\hfil\mkern#1mu#4\mkern#2mu\hfil$}%
  \dp1=\dp0\ht1=\ht0\wd0=0pt\box0\box1%
 }%
 \mathpalette\next{}%
}}
\newcommand{\fakerel}[3]{\mathrel{%
 \def\next##1##2{%
  \setbox0=\hbox{$##1#1$}%
  \setbox1=\hbox to\wd0{$##1#2\hss#3$}%
  \dp1=\dp0\ht1=\ht0\box1%
 }%
 \mathpalette\next{}%
}}
\newcommand{\prightarrow}{\overlayrel02\rightarrow\smallparallel}
\newcommand{\pleftarrow}{\overlayrel20\leftarrow\smallparallel}
\newcommand{\xprightarrow}[2][]%
 {\overlayrel02{\xrightarrow[#1]{#2}}\smallparallel}
\newcommand{\xpleftarrow}[2][]%
 {\overlayrel20{\xleftarrow[#1]{#2}}\smallparallel}
\newcommand{\orightarrow}{\overlayrel02\rightarrow\smalldevel}
\newcommand{\oleftarrow}{\overlayrel20\leftarrow\smalldevel}
\newcommand{\xorightarrow}[2][]%
 {\overlayrel02{\xrightarrow[#1]{#2}}\smalldevel}
\newcommand{\xoleftarrow}[2][]%
 {\overlayrel20{\xleftarrow[#1]{#2}}\smalldevel}
\newcommand{\uleftrightarrow}{\fakerel{\to}{\mnvdash}{\mndashv}}
\newcommand{\uleftarrow}{\fakerel{\to}{\mnvdash}{\relbar}}
\newcommand{\urightarrow}{\fakerel{\to}{\relbar}{\mndashv}}
\newcommand{\xuleftrightarrow}[2][]{\ext@arrow3095\@uarrow@fill{#1}{#2}}
\def\@uarrow@fill{\arrowfill@\uleftarrow\relbar\urightarrow}
\renewcommand{\xleftrightharpoons}[2][]{\mathrel{%
 \ext@arrow 0095\@leftrightharpoons@fill{#1}{#2}}}
\def\@leftrightharpoons@fill{\arrowfill@\leftharpoonup\relbar\rightharpoondown}
\def\@leftrightharpoons@fill{\arrowfill@%
 {\smash{\raise.22ex\hbox to 0pt{$\leftharpoonup$}}\@rreq}%
 \Relb@r%
 {\smash{\lower.22ex\hbox to 0pt{$\rightharpoondown$}}\@rreq}%
}
\def\Relb@r{\smash{\raise.22ex\hbox to 0pt{$\relbar$}\lower.22ex\hbox{$\relbar$}}}
\def\@rreq{\fakerel\to\Relb@r\Relb@r}
\newcommand{\r@rrow}[3]{%
 \newcommand{#1}[2][]{%
  \def\next{#2\@ifempty{##1}{}{_{##1}}\@ifempty{##2}{}{^{##2}}}%
  \mathchoice{#3[##1]{##2}}{\next}{\next}{\next}%
 }%
}
\newcommand{\l@rrow}[3]{%
 \newcommand{#1}[2][]{%
  \def\next####1{%
   \setbox0=\hbox{$####1\vphantom{#2}\@ifempty{##1}{}{_{\vphantom{##1}}}%
    \@ifempty{##2}{}{^{##2}}$}%
   \setbox1=\hbox{$####1\vphantom{#2}\@ifempty{##1}{}{_{##1}}%
    \@ifempty{##2}{}{^{\vphantom{##2}}}$}%
   \setbox2=\vbox{\hbox to\wd0{}\hbox to\wd1{}}%
   \mathrel{\hskip\wd2\hskip-\wd0\box0\hskip-\wd1\box1{#2}}%
  }%
  \mathchoice{#3[##1]{##2}}{\next\textstyle}%
   {\next\scriptstyle}{\next\scriptscriptstyle}%
 }%
}
\r@rrow{\xr}{\rightarrow}{\xrightarrow}
\r@rrow{\xlr}{\leftrightarrow}{\xleftrightarrow}
\l@rrow{\xl}{\leftarrow}{\xleftarrow}
\r@rrow{\xR}{\Rightarrow}{\xRightarrow}
\r@rrow{\xLR}{\Leftrightarrow}{\xLeftrightarrow}
\l@rrow{\xL}{\Leftarrow}{\xLeftarrow}
\r@rrow{\Xr}{\prightarrow}{\xprightarrow}
\l@rrow{\Xl}{\pleftarrow}{\xpleftarrow}
\r@rrow{\XR}{\orightarrow}{\xorightarrow}
\l@rrow{\XL}{\oleftarrow}{\xoleftarrow}
\r@rrow{\xrw}{\mathrel{\widetilde\Rightarrow}}{\xrightarrow}
\r@rrow{\xrb}{\Rightarrow}{\xrightarrow}
\l@rrow{\xlw}{\mathrel{\widetilde\leftarrow}}{\xleftarrow}
\r@rrow{\xu}{\uleftrightarrow}{\xuleftrightarrow}
\l@rrow{\xlu}{\leftharpoonup}{\xleftharpoonup}
\r@rrow{\xru}{\rightharpoonup}{\xrightharpoonup}
\l@rrow{\xld}{\leftharpoondown}{\xleftharpoondown}
\r@rrow{\xrd}{\rightharpoondown}{\xrightharpoondown}
\newcommand{\xln}[1][]{\xl[#1]{{\mkern-3mu!}\hfill}}
\newcommand{\xrn}[1][]{\xr[#1]{\hfill!\mkern-3mu}}
\newcommand{\x}[1]{\mathcal{#1}}
\newcommand{\m}[1]{\mathsf{#1}}
\newcommand{\CR}{\text{CR}\xspace}
\newcommand{\NFP}{\text{NFP}\xspace}
\newcommand{\UNC}{\text{UNC}\xspace}
\newcommand{\UNR}{\text{UNR}\xspace}
\newcommand\TS{\mathsf{TS}}
\newcommand{\NF}{\mathsf{NF}}
\newcommand{\NNF}{\neg\mathsf{NF}^\circ}
\newcommand\nf[1][]{{\downarrow}_{#1}}
\newcommand{\OO}{\mathcal{O}}
\newcommand{\BB}{\x{B}}
\newcommand{\NN}{\x{N}}
\newcommand{\CC}{\x{C}}
\newcommand{\EE}{\x{E}}
\newcommand{\NNR}{\NN}
\newcommand{\NNU}{\NN_\UU}
\newcommand{\NNV}{\NN_\VV}
\newcommand{\CCR}{\CC}
\newcommand{\CCU}{\CC_\UU}
\newcommand{\CCV}{\CC_\VV}
\newcommand{\EER}{\EE}
\newcommand{\EEU}{\EE_\UU}
\newcommand{\EEV}{\EE_\VV}
\newcommand{\RCR}{\RC}
\newcommand{\RCU}{\RC_\UU}
\newcommand{\RCV}{\RC_\VV}
\newcommand{\EQU}{_\RR}
\newcommand{\FFR}{\FF\EQU}
\newcommand{\FF}{\Sigma} 
\newcommand{\RR}{\x{R}}
\renewcommand{\SS}{\x{S}}
\newcommand{\UU}{\x{U}}
\newcommand{\VV}{\x{V}}
\newcommand{\RC}{\x{F}}
\newcommand{\TT}{\x{T}}
\newcommand{\CSI}{\textsf{CSI}\xspace}
\newcommand{\FORT}{\textsf{FORT}\xspace}
\newcommand{\nimplies}{\overlayrel02\implies{/}}
\newcommand{\mglue}{\mkern-1.5mu}
\newcommand{\mfa}{\m{f{\mglue}a}}
\newcommand{\mfb}{\m{f{\mglue}b}}
\newcommand{\mffb}{\m{f{\mglue}f{\mglue}b}}
\newcommand{\mfffb}{\m{f{\mglue}f{\mglue}f{\mglue}b}}
\begin{document}

\newcommand{\TRSs}{TRSs}
\title[Deciding \CR, \NFP, \UNC, and \UNR for Ground \TRSs]{Deciding Confluence and Normal Form Properties of Ground Term Rewrite Systems Efficiently}

\author[Bertram Felgenhauer]{Bertram Felgenhauer}	
\address{Department of Computer Science, University of Innsbruck, Technikerstrasse 21a, 6020 Innsbruck}	
\email{bertram.felgenhauer@uibk.ac.at}  
\thanks{This research was supported by FWF (Austrian Science Fund) project P27528.}	




\keywords{term rewriting, unique normal forms, confluence, complexity}
\subjclass{F.2 Analysis of Algorithms and Problem Complexity,
F.4 Mathematical Logic and Formal Languages}
\titlecomment{The decision procedure for confluence of ground term rewrite systems previously appeared at RTA 2012.}


\begin{abstract}
It is known that the first-order theory of rewriting
is decidable for ground term rewrite systems,
but the general technique uses tree automata
and often takes exponential time.
For many properties,
including confluence (\CR),
uniqueness of normal forms with respect to reductions (\UNR)
and with respect to conversions (\UNC),
polynomial time decision procedures are known for ground term rewrite systems.
However, this is not the case for the normal form property (\NFP).
In this work, we present a cubic time algorithm for \NFP,
an almost cubic time algorithm for \UNR,
and an almost linear time algorithm for \UNC,
improving previous bounds.
We also present a cubic time algorithm for \CR.
\end{abstract}

\maketitle

\section{Introduction}
\label{sec:intro}

In this article, we consider four properties of finite ground term rewrite systems,
that is, first-order term rewrite systems (TRSs) without variables.
These properties are

\begin{itemize}
\item
confluence or, equivalently, the Church-Rosser property (\CR),
which states that any two convertible terms have a common reduct;
\item
the normal form property (\NFP),
which holds if any term convertible to a normal form can be reduced
to that normal form;
\item
uniqueness of normal forms with respect to conversions (\UNC),
meaning that any two convertible normal forms are equal; and
\item
uniqueness of normal forms with respect to reductions (\UNR),
stating that from any term, at most one normal form can be reached.
\end{itemize}

In seminal work~\cite{DT85},
Dauchet and Tison established that
the first-order theory of ground term rewrite systems is decidable
using tree automata techniques.
This result is applicable to all four properties.
While the procedure is usually exponential,
it yields polynomial time procedures for \UNC and \UNR with a bit of care.
This is elaborated in Section~\ref{sec:rel:dt}.

In fact it is known that \CR, \UNC, and \UNR are decidable in polynomial time for ground TRSs.
In this article,
we are interested in bounding the exponent of the associated polynomials,
which is of importance to implementers.

As far as we know, the best previous result for \UNC is an almost quadratic algorithm
by Verma et al.~\cite{VRL01} with $\OO(\Vert\RR\Vert^2\log\Vert\RR\Vert)$ time complexity,
where $\Vert\RR\Vert$ denotes the sum of the sizes of the rules of $\RR$,
and the size of a rule is the sum of the sizes
of its left-hand side and right-hand side.
In Section~\ref{sec:unc} we present an algorithm that decides \UNC in $\OO(\Vert\RR\Vert\log\Vert\RR\Vert)$ time.
In the case of \UNR for ground TRSs,
Verma~\cite{V09} and Godoy and Jacquemard~\cite{GJ09} have established
that polynomial time algorithms exist, using tree automata techniques.
No precise bound is given by these authors.
In Section~\ref{sec:unr} we present an $\OO(\Vert\RR\Vert^3\log\Vert\RR\Vert)$ time algorithm for deciding \UNR.
Furthermore we present a $\OO(\Vert\RR\Vert^3)$ decision procedure for \NFP
for ground TRSs, which will be covered in Section~\ref{sec:nfp}.
As far as we know, this is the first polynomial time decision procedure
for \NFP in the literature.
Last but not least,
we present a $\OO(\Vert\RR\Vert^3)$ decision procedure for \CR,
closely based on previous work by the author~\cite{F12}.
See Section~\ref{sec:cr} for details.
In Section~\ref{sec:rel} we discuss related work,
and we conclude in Section~\ref{sec:end}.

\section{Preliminaries}
\label{sec:prel}

We assume familiarity with term rewriting and (bottom-up) tree automata.
For an overview of term rewriting, see~\cite{BN98};
for tree automata, please refer to~\cite{tata}.
We recall the notions used in this article.

A signature is a set of function symbols
$\FF$ each associated with an arity (which are natural numbers).
The ground terms $\TT(\FF)$ over $\FF$ are constructed inductively in
the usual way: If $t_1,\dots,t_n \in \TT(\FF)$ and
$f \in \FF$ has arity $n$,
then $f(t_1,\dots,t_n) \in \TT(\FF)$. A position $p$ of a term is a
sequence of natural numbers addressing a subterm $t|_p$.
Replacement of subterms $t[u]_p$, and the size of terms $|t|$ have
their standard definitions~\cite{BN98}. A term $t$ together with a
position $p$ defines a context $C[\cdot] = t[\cdot]_p$.
Contexts can be instantiated, $C[s] = t[s]_p$.
Alternatively, contexts $C$ can be viewed as terms that contain an
extra constant $\square$, representing a hole, exactly once.
Then $C[t]$ denotes the result of replacing $\square$ by $t$.
Multi-hole contexts are terms that may contain several holes;
for a multi-hole context $C$ with $n$ occurrences of $\square$,
$C[t_1,\dots,t_n]$ denotes the term obtained by replacing the holes
in $C$ by $t_1$ to $t_n$ from left to right.
Function symbols with arity $0$ are called constants.
A ground term is flat if it is either a constant
or a function symbol applied to constants.

A set $\RR \subseteq \TT(\FF)^2$ of rules is a ground term rewrite system (ground TRS).
If $(\ell,r) \in \RR$, we also write $\ell \to r \in \RR$.
By $\RR^-$, $\RR^\pm$, $|\RR|$, $\Vert\RR\Vert$ we denote $\RR^{-1}$
(i.e., the inverse of $\RR$, where we view $\RR$ as a relation on ground terms),
$\RR \cup \RR^{-1}$, the number of rules in $\RR$, and
the total size of the rules, $\sum_{\ell \to r \in \RR} (|\ell| + |r|)$,
respectively. Any ground TRS $\RR$ induces a rewrite relation $\xr[\RR]{}$
on ground terms:
$s \xr[\RR]{} t$ if there is a context $C[\cdot]$ such that
$C[\ell] = s$ and $C[r] = t$ for some $\ell \to r \in \RR$.
Properties like flatness extend to rules and TRSs.
For example, a rule is left-flat if its left-hand side is flat; a TRS is
left-flat if all its rules are.
We write $t \trianglelefteq \RR$ if $t$ is a subterm of a side of a rule in $\RR$.

Given a rewrite relation $\xr{}$, we denote by $\xl{}$, $\xlr{}$,
$\xr{=}$, $\xr{*}$, its inverse, symmetric closure,
reflexive closure, reflexive transitive closure, respectively,
and $\cdot$ composes rewrite relations.
A term $s$ is a normal form (w.r.t.\ $\to$)
if there is no $t$ with $s \to t$.
By $\xr{!}$ we denote reduction to normal form: $s \xr{!} t$ if
$s \xr{*} t$ and $t$ is a normal form with respect to $\xr{}$.
Two terms $s$ and $t$ are convertible if $s \xlr{*} t$. They are joinable,
denoted by $s \downarrow t$, if $s \xr{*} \cdot \xl{*} t$.
If we have two rewrite relations, $\xr[1]{}$ and $\xr[2]{}$, then
${\xr[1]{}}/{\xr[2]{}}$ is defined as $\xr[2]{*}{\cdot}\xr[1]{}{\cdot}\xr[2]{*}$;
applied to TRSs, we write $\RR/\SS$ for the relative TRS
that induces ${\xr[\RR]{}}/{\xr[\SS]{}}$ as a rewrite relation.
Given a signature $\FF$, the parallel closure $\xr{\mnshortparallel}$
is defined inductively by
\[
\infer{s \xr{\mnshortparallel} s}{} \qquad
\infer{s \xr{\mnshortparallel} t}{s \xr{} t} \qquad
\infer{f(s_1,\dots,s_n) \xr{\mnshortparallel} f(t_1,\dots,t_n)}{s_1 \xr{\mnshortparallel} t_1 & \dots &
s_n \xr{\mnshortparallel} t_n}
\]
Let us recall the four main properties of interest for this paper.
A rewrite relation $\xr{}$ has \dots
\begin{itemize}
\item
the Church-Rosser property (\CR),
if $s \xlr{*} t$ implies $s \xr{*} {\cdot} \xl{*} t$ for all $s$ and $t$;
\item
the normal form property (\NFP),
if $s \xlr{*} t$ with $t$ in normal form implies $s \xr{*} t$ for all $s$ and $t$;
\item
unique normal forms with respect to conversion (\UNC),
if $s \xlr{*} t$ implies $s = t$
for all normal forms $s$ and $t$; and
\item
unique normal forms with respect to reductions (\UNR),
if  $t \xl{!} s \xr{!} u$ implies $t = u$ for all $s$, $t$ and $u$.
\end{itemize}
We refer to the last three properties collectively as \emph{normal form properties}.
It is well known (and easy to see) that
\[\CR \implies \NFP \implies \UNC \implies \UNR\]
The converse implications are false, as demonstrated by the standard examples in Figure~\ref{fig:not-unr-unc-nfp-cr}.
A rewrite relation $\to$ is terminating if there are no infinite rewrite sequences
$t_0 \to t_1 \to \dots$.
For terminating rewrite relations,
confluence and the three normal form properties are equivalent.
If $\to$ is terminating and confluent,
then $s \nf$ denotes the normal form of $s$.

\begin{figure}
\centerline{
\begin{tikzpicture}[scale=0.7]
\node (1a) at (0,0) {$\circ$};
\node (1b) at (1,1.5) {$\cdot$};
\node (1c) at (2,0) {$\circ$};
\node at (1,-1.5) {$\neg\UNR \phantom{\neg}$};
\draw[->] (1b) -- (1a);
\draw[->] (1b) -- (1c);
\node (2a) at (4,0) {$\circ$};
\node (2b) at (5,1.5) {$\cdot$};
\node (2c) at (6,0) {$\cdot$};
\node (2d) at (7,1.5) {$\cdot$};
\node (2e) at (8,0) {$\circ$};
\node at (6,-1.5) {$\phantom{\neg} \UNR \wedge \neg\UNC$};
\draw[->] (2b) -- (2a);
\draw[->] (2b) -- (2c);
\draw[->] (2c) to [distance=10mm,in=-60,out=-120,loop] (2c);
\draw[->] (2d) -- (2c);
\draw[->] (2d) -- (2e);
\node (3a) at (10,0) {$\circ$};
\node (3b) at (11,1.5) {$\cdot$};
\node (3c) at (12,0) {$\cdot$};
\node at (11,-1.5) {$\phantom{\neg} \UNC \wedge \neg\NFP$};
\draw[->] (3b) -- (3a);
\draw[->] (3b) -- (3c);
\draw[->] (3c) to [distance=10mm,in=-60,out=-120,loop] (3c);
\node (4a) at (14,0) {$\cdot$};
\node (4b) at (15,1.5) {$\cdot$};
\node (4c) at (16,0) {$\cdot$};
\node at (15,-1.5) {$\phantom{\neg} \NFP \wedge \neg\CR$};
\draw[->] (4b) -- (4a);
\draw[->] (4b) -- (4c);
\draw[->] (4a) to [distance=10mm,in=-60,out=-120,loop] (4a);
\draw[->] (4c) to [distance=10mm,in=-60,out=-120,loop] (4c);
\node (5a) at (18,0) {$\circ$};
\node (5b) at (18,1.5) {$\cdot$};
\node at (18,-1.5) {$\CR$};
\draw[->] (5b) -- (5a);
\end{tikzpicture}
}
\caption{Examples showing that
$\top \protect\nimplies
\UNR \protect\nimplies
\UNC \protect\nimplies
\NFP \protect\nimplies
\CR \protect\nimplies
\bot$.}
\label{fig:not-unr-unc-nfp-cr}
\end{figure}

A tree automaton $\x A = (Q,Q_f,\Delta)$ consists of a finite set of states $Q$ disjoint from $\FF$,
a set of final states $Q_f \subseteq Q$,
and a set $\Delta$ of transitions $f(q_1,\dots,q_n) \to q$ and $\epsilon$-transitions $p \to q$,
where $f$ is an $n$-ary function symbol and $q_1,\dots,q_n,p,q \in Q$.
A deterministic tree automaton is an automaton without $\epsilon$-transitions
whose transitions have distinct left-hand sides
(we do not require deterministic tree automata to be completely defined).
Note that $\Delta$ can be viewed as a ground TRS over an extended signature that contains $Q$ as constants.
We write $\xr[\x A]{}$ for $\xr[\Delta]{}$,
where we regard the transitions as rewrite rules.
The language accepted by $\x A$ is $L(\x A) = \{ s \mid s \in \TT(\FF) \text{, } q \in Q_f \text{ and }s \xr[\x A]{*} q \}$.

In the complexity analysis we make use of the fact that systems of Horn
clauses can be solved in linear time, see Dowling and Gallier~\cite{DG84}.
Their procedure finds the smallest solution of a set of Horn clauses,
in the sense that as few atoms as possible become true,
in time linear in the total size of the clauses.
This often allows an elegant description of algorithms that compute finite,
inductively defined sets.
For example, the transitive closure of $R \subseteq I \times I$
can be specified by the inference rules
\[
\inferr[base]{(p,q) \in R^+}{(p,q) \in R} \qquad
\inferr[trans]{(p,r) \in R^+}{(p,q) \in R & r \in I & (q,r) \in R^+}
\]
The relation $R$ and the set $I$ are known in advance,
so we treat $(p,q) \in R$  and $r \in I$ as side conditions
that are either true or false.
On the other hand, the relation $R^+$ is unknown,
so we treat $(p,q) \in R^+$ for $p,q \in I$ as atoms whose truth value should be derived by Horn inference.
There are $\OO(|I|^2)$ Horn clauses for (\iname{base}) and
$\OO(|I|^3)$ Horn clauses for (\iname{trans}).
The size of the individual clauses is $\OO(1)$,
so the transitive closure can be computed in cubic time.

Some of the algorithms presented in this article
use maximally shared terms for efficiency;
this idea is also known as hash consing.
In a maximally shared representation,
each ground term $f(t_1,\dots,t_n)$ is represented by a unique identifier
(e.g., a natural number, or a pointer into memory),
which can be mapped to $f$ and the identifiers of $t_1, \dots, t_n$.
In order to maintain maximal sharing,
a lookup table mapping $f$ and identifiers of $t_1, \dots, t_n$
to the identifier of $f(t_1,\dots,t_n)$ is required.
If the arity of $f$ is bounded,
constructing maximally shared terms incurs a logarithmic overhead
compared to a direct construction.
Crucially though, comparing two maximally shared terms takes constant time.

\section{Common Elements}
\label{sec:common}

\begin{figure}
\centering
\begin{tikzpicture}[
  -latex,
  sibling distance=4cm,
  every node/.style={shape=rectangle,draw}
]
\node[rounded corners] (trs) at (0,0) {\strut input TRS ($\RR$)};
\node (cu) at (0,-1.5) {\strut \hyperref[sec:curry]{currying} ($\RR^\circ$)};
\node (fl) at (0,-3) {\strut \hyperref[sec:flat]{flattening} ($\RR^\flat, \EER$)};
\node (nf) at (4.5,-4.5) {\strut \hyperref[sec:nf]{normal forms} ($\NNR$)};
\node (cc) at (-4.5,-4.5) {\strut \hyperref[sec:cc]{congruence closure} ($\CCR$)};
\node (rc) at (0,-4.5) {\strut \hyperref[sec:rc]{rewrite closure} ($\RCR, \EER$)};
\node[rounded corners] (cr) at (-4.5,-6) {\strut \hyperref[sec:cr]{\CR}};
\node[rounded corners] (nfp) at (-1.5,-6) {\strut \hyperref[sec:nfp]{\NFP}};
\node[rounded corners] (unc) at (1.5,-6) {\strut \hyperref[sec:unc]{\UNC}};
\node[rounded corners] (unr) at (4.5,-6) {\strut \hyperref[sec:unr]{\UNR}};
\draw (trs) to (cu);
\draw (cu) to (fl);
\draw (cu) to [bend left=15] (nf);
\draw (fl) to (nf);
\draw (cu) to [bend right=15] (cc);
\draw (fl) to (cc);
\draw (fl) to (rc);
\draw (rc) to (cr);
\draw (rc) to (unr);
\draw (rc) to (nfp);
\draw (nf) to (nfp);
\draw (nf) to (unc);
\draw (nf) to (unr);
\draw (cc) to (nfp);
\draw (cc) to (unc);
\draw (cc) to (cr);
\draw[dashed] (unc) to [bend left=10] (nfp);
\end{tikzpicture}
\caption{Dependencies of preprocessing steps.}
\label{fig:common}
\end{figure}

In this section, we present computations that are shared between
the decision procedures for the four properties \UNR, \UNC, \NFP and \CR.
The dependencies are as follows (see also Figure~\ref{fig:common}).
\begin{itemize}
\item
Currying (Section~\ref{sec:curry}) and flattening (Section~\ref{sec:flat})
are preparatory steps
used for all four properties.
\item
In Section~\ref{sec:nf},
we construct an automaton recognizing normal forms,
which is used for the three normal form properties \UNR, \UNC, and \NFP.
\item
Congruence closure (Section~\ref{sec:cc}) is used for \UNC, \NFP, and \CR.
\item
Rewrite closure (Section~\ref{sec:rc})
features in the procedures for \UNR, \NFP and \CR.
\end{itemize}
The dashed arrow from \UNC to \NFP in Figure~\ref{fig:common} indicates that
the procedure for \NFP is an extension of the procedure for \UNC.

It is known that \CR and \UNC are preserved by signature extension
for any TRS
(this is a consequence of \CR and \UNC being modular~\cite{M90,T87}).
The same holds for \NFP and \UNR for left-linear TRSs~\cite{M96,M90}.
Since we are concerned with ground systems,
which are trivially left-linear,
this means that we may assume that the signature of the input TRS
consists exactly of the symbols occurring in the input TRS.
In particular, that signature is finite,
and its size is bounded by $\Vert \RR \Vert$.

\begin{remark}
Neither \NFP nor \UNR are preserved by signature extension in general.
Counterexamples have been given by Kennaway et al.~\cite{KKSV96}.
(It is noteworthy that these counterexamples are presented as
counterexamples to the preservation of \NFP and \UNR by currying;
the failure of signature extension is only mentioned in passing.)
\end{remark}

Relatedly,
the properties \CR, \NFP, \UNC, \UNR are normally defined
on terms with variables instead of ground terms;
the variants where rewriting is restricted to ground terms
are called ground-\CR etc.
Fortunately, for ground TRSs, the addition of variables
makes no difference.
This is because any counterexample to one of these properties
(which is a conversion $s \xlr{*} t$ for \CR, \NFP, or \UNC,
or a peak $t \xl{!} s \xr{!} u$ in the case of \UNR)
includes a counterexample with a root step
(which may be obtained by minimizing the size of $s$),
and any conversion (or peak) with a root step consists solely of
ground terms over the TRS's inherent signature.
(This also shows that for ground TRSs, all four properties
(\CR, \NFP, \UNC, \UNR) are preserved by signature extension.)

\begin{example}[running example]
\label{running0}
We demonstrate the constructions
on the following two ground TRSs $\UU$ and $\VV$.
\begin{align*}
\UU &= \{ \m f(\m a) \to \m a, \m f(\m a) \to \m b, \m a \to \m a \}
\\
\VV &= \{ \m{a} \to \m{b}, \m{a} \to \m{f}(\m{a}),
  \m{b} \to \m{f}(\m{f}(\m{b})), \m{f}(\m{f}(\m{f}(\m{b}))) \to \m{b} \}
\end{align*}
\end{example}

\subsection{Currying}
\label{sec:curry}

Currying turns an arbitrary TRS into one over constants and a single binary function symbol,
thereby bounding the maximum arity of the resulting TRS.

\begin{definition}
\label{def:curry}
In order to curry a ground TRS $\RR$, we change all function symbols in $\FF$ to be constants,
and add a fresh, binary function symbol $\circ$.
The resulting signature is $\FF^\circ = \FF \cup \{ {\circ} \}$.
We write $\circ$ as a left-associative infix operator
(i.e., $s \circ t$ stands for ${\circ}(s, t)$,
and $s \circ t \circ u = (s \circ t) \circ u \neq s \circ (t \circ u)$).
The operation $t^\circ$ that curries a term $t$ is defined inductively
by the equation
\[
(f(t_1,\dots,t_n))^\circ = f \circ t_1^\circ \circ \cdots \circ t_n^\circ
\]
The \emph{curried version of $\RR$} is given by $\RR^\circ = \{ \ell^\circ \to r^\circ \mid \ell \to r \in \RR \}$.
\end{definition}

For ground systems, currying reflects and preserves the normal form properties and confluence.
For reflection, a direct simulation argument works
($s \xr[\RR]{} t$ holds if and only if $s^\circ \xr[\RR^\circ]{} t^\circ$,
the image of ${-}^\circ$ is closed under rewriting by $\RR^\circ$,
and $s^\circ$ is an $\RR^\circ$-normal form if and only if $s$ is an $\RR$-normal form).
For preservation, Kenneway et al.~\cite{KKSV96} show that \UNR and \NFP
are preserved by currying for left-linear systems,
and that \UNC is preserved by currying for arbitrary TRSs.
Kahrs~\cite{K95} shows that currying preserves confluence of TRSs.
In the case that $\RR$ is finite,
currying can be performed in $\OO(\Vert\RR\Vert)$ time.
The resulting TRS is at most twice as large as the original TRS,
which follows from the inequality $|s^\circ| \leq 2|s|-1$
that can be shown by induction on $s$.

\begin{example}[continued from Example~\ref{running0}]
\label{running01}
The curried ground TRSs for $\UU$ and $\VV$ are
\begin{align*}
\UU^\circ &= \{ \m f \circ \m a \to \m a, \m f \circ \m a \to \m b, \m a \to \m a \}
\\
\VV^\circ &= \{ \m{a} \to \m{b}, \m{a} \to \m{f} \circ \m{a}, \m{b} \to \m{f} \circ (\m{f} \circ \m{b}), \m{f} \circ (\m{f} \circ (\m{f} \circ \m{b})) \to \m{b} \}
\end{align*}
For convenience, we will use the abbreviations
$\mfa = \m{f} \circ \m{a}$,
$\mfb = \m{f} \circ \m{b}$, 
$\mffb = \m{f} \circ \mfb$, and
$\mfffb = \m{f} \circ \mffb$
in later examples, so we may write
\(
\UU^\circ = \{ \mfa \to \m a, \mfa \to \m b, \m a \to \m a \}
\)
and
\(
\VV^\circ = \{ \m{a} \to \m{b}, \m{a} \to \mfa, \m{b} \to \mffb,
  \mfffb \to \m{b} \}
\).
\end{example}

\subsection{Flattening}
\label{sec:flat}

\newcommand{\add}{\textbf{mk}^{[]}}

\begin{listing}
\raggedright\noindent
$\textbf{flatten}(\RR^\circ)$:
\begin{algorithmic}[1]
\STATE{$\EER \gets \varnothing$, $\RR^\flat \gets \varnothing$}
\FORALL{$\ell \to r \in \RR^\circ$}
\STATE{add $\add(\ell) \to \add(r)$ to $\RR^\flat$}
\COMMENT{$\EER$ is modified by $\add(\cdot)$}
\ENDFOR
\end{algorithmic}
\vskip1ex
$\add(f(s_1,\dots,s_n))$:
\begin{algorithmic}[1]
\FOR{$1 \leq i \leq n$}
\STATE{$c_i \gets \add(s_i)$}
\COMMENT{$c_i = [s_i]$}%
\ENDFOR
\IF{$f(c_1,\dots,c_n) \to c \in \EER$ for some $c$}
\RETURN{$c$}
\COMMENT{$c = [f(s_1,\dots,s_n)]$}%
\ELSE
\STATE{add $f(c_1,\dots,c_n) \to c$ to $\EER$, where $c$ is a fresh constant}
\COMMENT{$c = [f(s_1,\dots,s_n)]$}%
\RETURN{$c$}
\ENDIF
\end{algorithmic}
\caption{Computation of $\EE$ and $\RR^\flat$.}
\label{lst:flat}
\end{listing}
For efficient computations,
it is useful to represent the curried ground TRS $\RR^\circ$
using flat rules.
The idea of flattening goes back to Plaisted~\cite{P93}.
\begin{definition}
\label{def:flat}
Let a curried TRS $\RR^\circ$ be given.
Choose a fresh constant $[s]$ for each distinct subterm $s$ of $\RR^\circ$
($s \trianglelefteq \RR^\circ$),
and let $\FF^{[]} = \{[s] \mid s \trianglelefteq \RR^\circ \}$.
The \emph{flattening of $\RR^\circ$} is given by $(\RR^\flat, \EER)$ with $\EER$ and $\RR^\flat$ defined by
\begin{align*}
\EER &=
 \{ f([s_1], \dots, [s_n]) \to [f(s_1,\dots, s_n)] \mid f(s_1,\dots,s_n) \trianglelefteq \RR^\circ \}\\
\RR^\flat &= \{ [\ell] \to [r] \mid \ell \to r \in \RR^\circ \}
\end{align*}
\end{definition}
Note in the definition of $\EER$, $f$ is either a constant
(with $n = 0$) or $f = {\circ}$ (with $n = 2$).
Hence
$\EER$ contains a rule $c \to [c]$ for each constant subterm $c$ of $\RR^\circ$,
and a rule $[s_1] \circ [s_2] \to [s_1 \circ s_2]$ for the remaining subterms
$s_1 \circ s_2$ of $\RR^\circ$.
The sizes of $\EER$ and $\RR^\flat$ are both
$\OO(\Vert \RR \Vert)$
(which equals $\OO(\Vert \RR^\circ \Vert)$.
We can compute the systems $\EE$ and $\RR^\flat$
in $\OO(\Vert\RR\Vert\log\Vert\RR\Vert)$ time.
To this end, we may represent $\EER$
by a lookup table that maps left-hand sides of $\EER$ to
their corresponding right-hand sides,
and then employ the algorithm from Listing~\ref{lst:flat},
using an auxiliary function $\add(\cdot)$
that maps a ground term $s$ to $[s]$ while maintaining the
necessary rules in $\EER$ and creating fresh constants for
the subterms of $s$ as necessary.

\begin{example}[continued from Example~\ref{running01}]
\label{running1}
We introduce fresh constants
$[\m a]$, $[\m b]$, $[\m f]$,
$[\mfa]$,
$[\mfb]$,
$[\mffb]$, and
$[\mfffb]$.
The flattening of $\UU^\circ$ and $\VV^\circ$ results
in $(\UU^\flat,\EEU)$ and $(\VV^\flat, \EEV)$, respectively, where
\begin{align*}
\UU^\flat &= \{ [\mfa] \to [\m a], [\mfa] \to [\m b], [\m a] \to [\m a] \}
\\
\EEU &= \{ \m a \to [\m a], \m b \to [\m b], \m f \to [\m f],
 [\m f] \circ [\m a] \to [\mfa] \}
\\
\VV^\flat &= \{[\m{a}] \to [\mfa], [\m{a}] \to [\m{b}],
 [\m{b}] \to [\mffb], [\mfffb] \to [\m{b}]\}
\\
\EEV &= \EEU \cup \{
[\m{f}] \circ [\m{b}] \to [\mfb],
[\m{f}] \circ [\mfb] \to [\mffb],
[\m{f}] \circ [\mffb] \to [\mfffb]\}
\end{align*}
\end{example}

Both $\EER$ and $\EER^-$ are terminating and confluent
(in fact, orthogonal) systems.
Note that for all terms $s \trianglelefteq \RR^\circ$,
we have $s \xrn[\EER] [s]$ and $[s] \xrn[\EER^-] s$.
Therefore, we can reconstruct $\RR^\circ$ from $\EER$ and $\RR^\flat$,
and $\RR^\flat$ from $\EER$ and $\RR^\circ$,
as follows:
\begin{equation}
\label{eq:flat}
\RR^\circ = \{ (\ell \nf[\EER^-], r \nf[\EER^-]) \mid \ell \to r \in \RR^\flat \}
\qquad
\text{and}
\qquad
\RR^\flat = \{(\ell \nf[\EER], r \nf[\EER]) \mid \ell \to r \in \RR^\circ \}
\end{equation}
As a consequence of this observation we obtain the following proposition.
\begin{proposition}
\label{prop:flat}
\(
{\xln[\EER^-]} \cdot {\xr[\RR^\flat]{}} \cdot {\xrn[\EER^-]}
\subseteq {\xr[\RR^\circ]{}} \subseteq
{\xr[\EER]{*}} \cdot {\xr[\RR^\flat]{}} \cdot {\xl[\EER]{*}}
\).
\end{proposition}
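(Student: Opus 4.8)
The plan is to reduce both inclusions to the rule-level correspondence supplied by \eqref{eq:flat}: a step $s' \xr[\RR^\flat]{} t'$ uses a rule $[\ell] \to [r] \in \RR^\flat$ if and only if $\ell \to r \in \RR^\circ$, where $[\ell] = \ell\nf[\EER]$, $[r] = r\nf[\EER]$, and dually $\ell = [\ell]\nf[\EER^-]$, $r = [r]\nf[\EER^-]$. Throughout I would use that $\EER$ and $\EER^-$ are confluent and terminating, so that $\nf[\EER]$ and $\nf[\EER^-]$ are well-defined functions, together with the facts $s \xrn[\EER] [s]$ and $[s] \xrn[\EER^-] s$ for $s \trianglelefteq \RR^\circ$.

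For the right inclusion ${\xr[\RR^\circ]{}} \subseteq {\xr[\EER]{*}} \cdot {\xr[\RR^\flat]{}} \cdot {\xl[\EER]{*}}$ I would argue by local flattening. Given $s \xr[\RR^\circ]{} t$ with $s = C[\ell]$ and $t = C[r]$ for a curried context $C$ and a rule $\ell \to r \in \RR^\circ$, I flatten only the redex and the contractum, leaving $C$ untouched: since $\ell \xr[\EER]{*} [\ell]$ and $r \xr[\EER]{*} [r]$ and $\EER$-steps are local, this yields $s = C[\ell] \xr[\EER]{*} C[[\ell]]$ and $t = C[r] \xr[\EER]{*} C[[r]]$. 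By \eqref{eq:flat} the rule $[\ell] \to [r]$ lies in $\RR^\flat$, so $C[[\ell]] \xr[\RR^\flat]{} C[[r]]$, and chaining the three relations, with the second $\EER$-sequence read backwards, gives $s \mathrel{{\xr[\EER]{*}} \cdot {\xr[\RR^\flat]{}} \cdot {\xl[\EER]{*}}} t$.

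For the left inclusion ${\xln[\EER^-]} \cdot {\xr[\RR^\flat]{}} \cdot {\xrn[\EER^-]} \subseteq {\xr[\RR^\circ]{}}$, suppose $s = s'\nf[\EER^-]$, $s' \xr[\RR^\flat]{} t'$, and $t = t'\nf[\EER^-]$, so that $s' = D[[\ell]]$ and $t' = D[[r]]$ for a common context $D$ and a rule $\ell \to r \in \RR^\circ$. The key step, and the main obstacle, is to show that $\EER^-$-normalization distributes over $D$, i.e.\ that there is a single curried context $C$ with $s = C[\ell]$ and $t = C[r]$; the conclusion $s \xr[\RR^\circ]{} t$ is then immediate. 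I would obtain $C$ by treating the hole $\square$ as an inert constant and setting $C := D\nf[\EER^-]$. The decisive observation is that no $\EER^-$-redex can strictly contain the hole: every $\EER^-$-redex is a bracket constant $[c]$ or $[s_1 \circ s_2]$, hence a leaf, so it cannot have the hole as a proper subterm. The redexes of $D[[\ell]]$ therefore split into those lying inside $D$, which are parallel to the hole and independent of its filler, and the redex $[\ell]$ at the hole itself. Normalizing the $D$-part gives $C[[\ell]]$ with the same $C$ regardless of the filler, and then $[\ell] \xr[\EER^-]{*} \ell$ with $\ell$ and $C$ already normal; by confluence and termination $s = C[\ell]$, and the identical computation with $[r]$ gives $t = C[r]$. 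Since $C$ is a context over $\FF^\circ$ and $\ell \to r \in \RR^\circ$, this exhibits the required $\RR^\circ$-step.
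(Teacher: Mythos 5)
Your proof is correct and follows essentially the same route as the paper: the right inclusion by flattening the redex and contractum in place using $\ell \xr[\EER]{*} [\ell]$, $r \xr[\EER]{*} [r]$ and \eqref{eq:flat}, and the left inclusion by observing that every $\EER^-$-redex is a constant from $\FF^{[]}$, hence a leaf, so that $\EER^-$-normalization distributes over contexts --- which is precisely the paper's key remark that $C[s]\nf[\EER^-]$ can be computed independently on $C$ and $s$. Your version merely spells out these details more explicitly.
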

\proof
From \eqref{eq:flat},
\(
{\xr[\RR^\circ]{}} \subseteq
{\xr[\EER]{*}} \cdot {\xr[\RR^\flat]{}} \cdot {\xl[\EER]{*}}
\)
is immediate.
To get the first subset relation,
note that the effect of reducing to normal form with respect to $\EER^-$
is to replace constant subterms $[s] \in \FF^{[]}$ by $s \in \TT(\FF^\circ)$;
hence we can compute $C[s] \nf[\EER^-]$ independently on $C$ and $s$.
In particular this applies to a rewrite step $C[\ell] \xr[\RR^\flat]{} C[r]$
with $\ell \to r \in \RR^\flat$.
Together with \eqref{eq:flat} it follows that
\(
{\xln[\EER^-]} \cdot {\xr[\RR^\flat]{}} \cdot {\xrn[\EER^-]}
\subseteq {\xr[\RR^\circ]{}}
\).
\qed
\begin{remark}
Confluence is actually preserved by flattening,
if one replaces $\RR^\circ$ by $\EER^\pm \cup \RR^\flat$
(this will be proved as part of Lemma~\ref{lem:rcr-cr}).
However, this is not the case for the normal form properties,
because the set of normal forms is not preserved.
Verma has given constructions that preserve $\UNR$ and $\UNC$,
but not confluence, in \cite{V09,VRL01}.
For ground systems
it is actually possible to preserve confluence and the normal form properties at the same time
if one replaces $\RR^\circ$ by $\RR' = \EER \cup \EER' \cup \RR^\flat$
where $\EER'$ is the subset of $\EER^-$ obtained
by only reversing those rules $[s_1] \circ [s_2] \to [s_1 \circ s_2] \in \EER$
for which one of the terms $s_1$ or $s_2$ is not a normal form.
The main points are that $s\nf[\EER]$ will be a normal form with respect $\RR'$ if and only if $s$ is a normal form with respect to $\RR$,
and that $s\nf[\EER] \xr[\RR']{*} t\nf[\EER]$ if and only if
$s \xr[\RR^\circ]{*} t$ for $s,t \in \TT(\FF^\circ)$.
We do not explore this idea here.
\end{remark}
\begin{remark}
An alternative view of the flattening step
that explains some of its utility for decision procedures
beyond restricting the shape of rules
is that it sets up a lookup table (namely, $\EER$)
for maximal sharing of the subterms of $\RR$.
This is precisely what the $\add$ function from
Listing~\ref{lst:flat} does.
\end{remark}

\subsection{Recognizing Normal Forms}
\label{sec:nf}

The set of normal forms of a ground system is a regular language;
in fact this is true for left-linear systems~\cite{tata}.
We give a direct construction for ground TRSs.
We assume that we are given a curried ground TRS $\RR^\circ$
over a signature $\FF$.

With $Q^\flat = Q_f^\flat = \FF^{[]}$ and $\Delta^\flat = \EER$,
where $\FF^{[]}$ and $\EER$ are obtained by flattening
(Section~\ref{sec:flat}),
we obtain a deterministic tree automaton
that accepts precisely the subterms of $\RR^\circ$.
We modify this automaton to recognize normal forms instead.
\begin{definition}
Let $\star$ be a fresh constant
and let
\begin{align*}
Q = {}& Q_f = \FF^\NN = \{ [s] \mid \text{$s \trianglelefteq \RR^\circ$ and $s$ is $\RR^\circ$-normal form} \} \cup \{ [\star] \}
\\
\Delta = {}&\{ f([s_1],\dots,[s_n]) \to [f(s_1,\dots,s_n)] \mid [f(s_1,\dots,s_n)] \in Q_f - \{[\star]\} \} \cup {} \\
&\{ f([s_1],\dots,[s_n]) \to [\star] \mid \text{$f \in
\FF^\circ
$, $[s_1],\dots,[s_n] \in Q_f$, $f(s_1,\dots,s_n) \not\trianglelefteq \RR^\circ$} \}
\end{align*}
The automaton $\NN$ is given by $(Q,Q_f,\Delta)$.
Note that due to the restricted signature $\FF^\circ$,
all transitions have shape
$c \to [c]$, $[s_1] \circ [s_2] \to [s_1 \circ s_2]$, or
$[s_1] \circ [s_2] \to [\star]$,
where in the latter case, $s_1 = \star$ or $s_2 = \star$ may be used.
\end{definition}

\begin{example}[continued from Example~\ref{running1}]
\label{running-nf}
For $\NNU$, we have $Q = Q_f = \{[\m f],[\m b],[\star]\}$ and
$\Delta$ is given by
\begin{xalignat*}{6}
\m f &\to [\m f] &
[\m f] \circ [\m b] &\to [\star] &
[\m f] \circ [\m f] &\to [\star] &
[\m f] \circ [\star] &\to [\star] &
[\star] \circ [\m b] &\to [\star] &
[\star] \circ [\m f] &\to [\star] \\
\m b &\to [\m b] &
[\m b] \circ [\m b] &\to [\star] &
[\m b] \circ [\m f] &\to [\star] &
[\m b] \circ [\star] &\to [\star] &
[\star] \circ [\star] &\to [\star] &
\end{xalignat*}
For $\NNV$, we have $Q = Q_f = \{[\m{f}],[\star]\}$ and
$\Delta$ consists of the transitions
\begin{xalignat*}{5}
\m{f} &\to [\m f] &
[\m f] \circ [\m f] &\to [\star] &
[\m f] \circ [\star] &\to [\star] &
[\star] \circ [\m f] &\to [\star] &
[\star] \circ [\star] &\to [\star]
\end{xalignat*}
\end{example}

\begin{lemma}
The automaton $\NNR = (Q,Q_f,\Delta)$ recognizes the set of $\RR^\circ$-normal forms over $\FF^\circ$.
\end{lemma}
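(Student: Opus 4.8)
The plan is to prove, by induction on the structure of a term $t \in \TT(\FF^\circ)$, a precise description of which states $t$ can reach, and then read off the claim using the fact that $Q = Q_f$. Concretely, I would establish the following trichotomy for every $t \in \TT(\FF^\circ)$: (i) if $t$ is an $\RR^\circ$-normal form with $t \trianglelefteq \RR^\circ$, then $[t] \in Q_f - \{[\star]\}$, and $[t]$ is the only state reachable from $t$, witnessed by $t \xr[\NNR]{*} [t]$; (ii) if $t$ is an $\RR^\circ$-normal form with $t \not\trianglelefteq \RR^\circ$, then $[\star]$ is the only reachable state; (iii) if $t$ is not an $\RR^\circ$-normal form, then $t$ reaches no state at all. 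Since every state is final, $L(\NNR)$ is exactly the set of terms reaching some state, which by (i)--(iii) is precisely the set of $\RR^\circ$-normal forms over $\FF^\circ$.

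For the base case $t = c$ a constant, I would use that the assumed signature contains only symbols occurring in $\RR$, so $c \trianglelefteq \RR^\circ$. If $c$ is a normal form then $[c] \in Q_f - \{[\star]\}$ and the unique applicable transition is $c \to [c]$; if $c$ is not a normal form then $[c] \notin Q_f$ and, because $c \trianglelefteq \RR^\circ$, no transition of the second kind applies either, so $c$ is stuck. This settles cases (i) and (iii) for constants, and case (ii) is vacuous since every constant of $\FF^\circ$ is a subterm of $\RR^\circ$.

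For the inductive step $t = t_1 \circ t_2$, I would first observe that $t$ is a normal form iff $t_1$ and $t_2$ are normal forms and $t$ is not a left-hand side of any rule of $\RR^\circ$, and then apply the induction hypothesis to $t_1$ and $t_2$. If $t_1$ or $t_2$ fails to be a normal form, the corresponding subterm is stuck, hence so is $t$. Otherwise $t_1 \xr[\NNR]{*} q_1$ and $t_2 \xr[\NNR]{*} q_2$ for unique states $q_1, q_2$ given by the induction hypothesis, and everything reduces to analysing which transitions apply to the flat term $q_1 \circ q_2$. Here I would split on whether $t \trianglelefteq \RR^\circ$, using throughout the convention that $\star \circ s$ and $s \circ \star$ are never subterms of $\RR^\circ$ (as $\star$ does not occur in $\RR^\circ$), so that $q_1 = [\star]$ or $q_2 = [\star]$ forces a second-kind transition to $[\star]$ and excludes any first-kind transition.

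The delicate point --- and the main obstacle --- is the case where $t_1, t_2$ are normal forms but $t$ is a redex at the root, i.e.\ $t = t_1 \circ t_2 \trianglelefteq \RR^\circ$ is itself a left-hand side: here I must show that $q_1 \circ q_2$ is stuck. By the induction hypothesis $q_i = [t_i]$, and the only candidate transitions are $[t_1] \circ [t_2] \to [t_1 \circ t_2]$, excluded because $t_1 \circ t_2$ is not a normal form so $[t_1 \circ t_2] \notin Q_f - \{[\star]\}$, and $[t_1] \circ [t_2] \to [\star]$, excluded because $t_1 \circ t_2 \trianglelefteq \RR^\circ$. Thus $t$ is stuck, as required for (iii). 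The remaining normal-form cases are routine: when $t \trianglelefteq \RR^\circ$ the first kind supplies the unique transition $[t_1] \circ [t_2] \to [t]$, and when $t \not\trianglelefteq \RR^\circ$ the second kind supplies the unique transition $q_1 \circ q_2 \to [\star]$. In each case there is at most one applicable transition, which yields uniqueness of the reached state and hence the full trichotomy, completing the induction.
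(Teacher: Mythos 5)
Your proof is correct and follows essentially the same route as the paper: a structural induction on $t$ with the invariant that normal-form subterms of $\RR^\circ$ are accepted exactly in $[t]$, other normal forms exactly in $[\star]$, and reducible terms in no state. The only difference is presentational --- you package the two directions as a single trichotomy that also records determinism of the reachable state, which is a mild strengthening the paper obtains separately from $\NNR$ being a deterministic automaton.
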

\begin{proof}
The idea is that the state $[\star]$ accepts those $\RR^\circ$-normal forms
$s$ that are not subterms of $\RR^\circ$,
while normal form subterms $s \trianglelefteq \RR^\circ$ are accepted in state $[s]$.
We show this by induction on $s$.

If $s = f(s_1,\dots,s_n)$ is accepted by $\NNR$ in state $q$,
then each $s_i$ is accepted by $\NNR$ as well,
which means that they are in normal form by the induction hypothesis.
If $q \in \FF^{[]}$,
then the last transition being used must be $f([s_1],\dots,[s_n]) \to [s]$,
with $s$ in normal form.
Otherwise, $q = [\star]$,
which means that either $s_i = \star$ for some $i$,
whence $s$ has a subterm that is not a subterm of $\RR^\circ$,
or $f(s_1,\dots,s_n) \not\trianglelefteq \RR^\circ$.
In either case, $s$ cannot be a left-hand side of $\RR^\circ$,
so it is in normal form.

If $s = f(s_1,\dots,s_n)$ is a normal form,
then each $s_i$ is also a normal form,
and hence accepted by $\NNR$ in a state $q_i$ by the induction hypothesis.
If any $s_i$ is accepted in state $[\star]$,
then $f(\dots,\star,\dots) \not\trianglelefteq \RR^\circ$ ensures
that $f(q_1,\dots,q_n) \to [\star]$ is a transition of $\NNR$.
Otherwise $q_i = [s_i]$ for all $i$,
and there is a transition $f(q_1,\dots,q_n) \to [s]$ or
$f(q_1,\dots,q_n) \to [\star]$ in $\NNR$
depending on whether $s \trianglelefteq \RR^\circ$.
In either case, $s$ is accepted by $\NNR$ as claimed.
\end{proof}
In order to compute $\NNR$ efficiently,
let $\overline Q_f = \{[s] \mid \text{$s \trianglelefteq \RR^\circ$ and $s$ is $\RR^\circ$-reducible}\}$,
which we may compute by the inference rules
\begin{gather*}
\inferr[side]{[\ell] \in \overline Q_f}{[\ell] \to [r] \in \RR^\flat}
\qquad
\inferr[arg$_i$]{[s_1 \circ s_2] \in \overline Q_f}{
[s_1] \circ [s_2] \to [s_1 \circ s_2] \in \EE &
[s_i] \in \overline Q_f
}
\end{gather*}
where $i \in \{1,2\}$.
The first rule (\iname{side}) expresses that left-hand sides of $\RR^\circ$ are reducible,
whereas the second rule (\iname{arg$_i$}) states that
if an argument of a term is reducible,
then the whole term is reducible as well;
because of the restricted signature of the curried system, the root symbol must be $\circ$ in this case.
This computation can be done in linear time
($\OO(\Vert \RR \Vert)$)
by Horn inference.
Once we have computed $\overline Q_f$, we can determine $Q = Q_f = (\FF^{[]} - \overline Q_f) \cup \{ [\star] \}$.
For $\Delta$, we can set up a partial function
\[
\delta\colon f([s_1],\dots,[s_n]) \mapsto q \quad\text{if}\quad
f([s_1],\dots,[s_n]) \to q \in \Delta
\]
where $[s_i] \in Q_f'$ for $1 \leq i \leq n$
that can be queried in logarithmic time, see Listing~\ref{lst:delta}.

\begin{listing}
\begin{algorithmic}[1]
\IF{there is a rule $f([s_1],\dots, [s_n]) \to [s] \in \EER$}
 \IF{$[s] \in Q_f$}
  \RETURN $[s]$ \COMMENT{normal form subterm of $
\RR^\circ
$}
 \ELSE
  \STATE no result \COMMENT{not a normal form}
 \ENDIF
\ELSE
 \RETURN $[\star]$ \COMMENT{normal form, not a subterm of $
\RR^\circ
$}
\ENDIF
\end{algorithmic}
\caption{Implementation of $\delta(f([s_1],\dots,[s_n]))$.}
\label{lst:delta}
\end{listing}

\subsection{Congruence Closure}
\label{sec:cc}

Congruence closure (introduced by Nelson and Oppen~\cite{NO80}; a clean and fast implementation can be found in~\cite{NO07})
is an efficient method for deciding convertibility of ground terms modulo a set of ground equations; in our case, $\RR^\circ$.

The congruence closure procedure consists of two phases.
In the first phase, one determines the congruence classes (hence the name) among the subterms of the given set of equations,
where two subterms $s$ and $t$ of $\RR^\circ$ are identified
if and only if they are convertible, $s \xlr[\RR^\circ]{*} t$.
We write $[s]\EQU$ for the convertibility class
(also known as congruence class) of $s$ with respect to $\RR^\circ$.
The result of the first phase is a system of rewrite rules $\CCR$
that can be expressed concisely as
\[
\CCR =
\{ f([s_1]\EQU, \dots, [s_n]\EQU) \to [f(s_1,\dots,s_n)]\EQU \mid f(s_1,\dots,s_n) \trianglelefteq \RR^\circ \}
\]
The key point for efficiency is that it is enough to
consider subterms of $\RR^\circ$ when computing the congruence closure.
In an implementation,
a representative subterm of $\RR^\circ$ of $[s]\EQU$ will be stored instead of the whole class.
Let $\FFR = \{ [s]\EQU \mid s \trianglelefteq \RR^\circ \}$,
where we regard the congruence classes $[s]\EQU$ as fresh constants.
The size of $\CCR$ is $\OO(\Vert \RR \Vert)$.

\begin{remark}
The earlier notation $[s]$ for $s \trianglelefteq \RR^\circ$,
which suggests a congruence class, can be justified by viewing $[s]$ as
the congruence class of $s$ with respect to equality.
\end{remark}

In the second phase, given two terms $s$ and $t$,
one computes the normal forms with respect to rules in $\CCR$.
The terms $u$ and $v$ are $\RR^\circ$-convertible if and only if $u \nf[\CCR] = v \nf[\CCR]$.
We observe the following.

\begin{proposition}
With the signature $\FF^\circ \cup \FFR$,
the set $\CCR$ is an ortho\-gonal, terminating, ground TRS whose rules,
regarded as transitions of a tree automaton, are deterministic.
\end{proposition}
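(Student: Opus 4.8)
The plan is to check each of the four assertions separately, exploiting the clean structural separation in the signature $\FF^\circ \cup \FFR$: the symbols of $\FF^\circ$ occur in $\CCR$ only at the roots of left-hand sides, while the (fresh) congruence-class constants of $\FFR$ occur only as their arguments and as right-hand sides. That $\CCR$ is a \emph{ground} TRS is immediate, since every left-hand side $f([s_1]\EQU,\dots,[s_n]\EQU)$ and right-hand side $[f(s_1,\dots,s_n)]\EQU$ is variable-free over $\FF^\circ \cup \FFR$; in particular $\CCR$ is left-linear, so for orthogonality it remains only to exclude critical pairs.

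For determinism I would first show that, viewed as a set, $\CCR$ has pairwise distinct left-hand sides. Suppose two rules, arising from $f(s_1,\dots,s_n) \trianglelefteq \RR^\circ$ and $g(t_1,\dots,t_m) \trianglelefteq \RR^\circ$, share a left-hand side. Then $f = g$, hence $n = m$, and $[s_i]\EQU = [t_i]\EQU$, that is $s_i \xlr[\RR^\circ]{*} t_i$, for every $i$. Because convertibility $\xlr[\RR^\circ]{*}$ is a congruence (the rewrite relation is closed under contexts, and so are its symmetric and reflexive transitive closures), rewriting the arguments one at a time yields $f(s_1,\dots,s_n) \xlr[\RR^\circ]{*} g(t_1,\dots,t_m)$, so the right-hand sides $[f(s_1,\dots,s_n)]\EQU$ and $[g(t_1,\dots,t_m)]\EQU$ denote the same class and the two rules coincide. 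Since moreover every left-hand side has its root in $\FF^\circ$, no rule is an $\epsilon$-transition; together these two facts are exactly the definition of a deterministic tree automaton, and they also rule out root overlaps.

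For the remaining, non-root overlaps I would invoke the structural separation directly. The proper subterms of any left-hand side $f([s_1]\EQU,\dots,[s_n]\EQU)$ are precisely its arguments $[s_i]\EQU \in \FFR$, whereas every left-hand side itself has root in $\FF^\circ$. As $\FF^\circ$ and $\FFR$ are disjoint (the classes are fresh constants), no left-hand side can equal a proper subterm of another, so there are no non-root overlaps. Combined with the absence of root overlaps from the previous step, $\CCR$ has no critical pairs and is therefore orthogonal.

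Finally, for termination I would use the measure $N(t)$ counting the occurrences of $\FF^\circ$-symbols in a term $t$ over $\FF^\circ \cup \FFR$. In any step $C[f([s_1]\EQU,\dots,[s_n]\EQU)] \xr[\CCR]{} C[[f(s_1,\dots,s_n)]\EQU]$ the redex contains exactly one $\FF^\circ$-symbol, namely its root $f$ (its arguments lie in $\FFR$), while the contractum is a single $\FFR$-constant containing none; hence $N$ strictly decreases with each step. As $N$ takes values in the natural numbers, $\CCR$ is terminating. The only step that needs more than a glance is the distinctness of left-hand sides, whose crux is that $\xlr[\RR^\circ]{*}$ is a congruence; all other parts follow mechanically from the $\FF^\circ$/$\FFR$ separation, so I anticipate no real obstacle.
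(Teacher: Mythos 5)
Your proof is correct and its core argument for determinism---that equal left-hand sides force $s_i \xlr[\RR^\circ]{*} t_i$ for all $i$, hence convertible terms and equal right-hand-side classes because $\xlr[\RR^\circ]{*}$ is a congruence---is exactly the paper's argument. The remaining parts (the explicit $\FF^\circ$-symbol-counting measure for termination and the $\FF^\circ$/$\FFR$ separation ruling out non-root overlaps) are details the paper dismisses as obvious, so this is essentially the same proof, just more fully spelled out.
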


\proof
Termination of $\CCR$ is obvious.
Note that if
$f(s_1,\dots,s_n), f(t_1,\dots,t_n) \trianglelefteq \RR^\circ$
are subterms of $\RR^\circ$ such that
$f([s_1]\EQU,\dots,[s_n]\EQU) = f([t_1]\EQU,\dots,[t_n]\EQU)$,
then $s_i \xlr[\RR^\circ]{*} t_i$ for $1 \leq i \leq n$.
Consequently, $f(s_1,\dots,s_n) \xlr[\RR^\circ]{*} f(t_1,\dots,t_n)$,
so $[f(s_1,\dots,s_n)]\EQU = [f(t_1,\dots,t_n)]\EQU$.
Hence the rules are deterministic as claimed.
\qed

Consequently, we may represent $\CCR$ as a deterministic tree automaton
$\CCR = (Q, Q_f, \Delta)$ with $Q = Q_f = \FFR$ and $\Delta = \CCR$.
Each state $[s]\EQU$ accepts precisely the terms convertible to $s$.
Note that the automaton is not completely defined in general:
Only terms $s$ that allow a conversion $s \xlr[\RR^\circ]{*} t$ with a root step are accepted.%
\footnote{
The convertibility relation $\xlr[\RR^\circ]{*}$ is
accepted by the ground tree transducer $(\CCR, \CCR)$ (cf.~\cite{tata}).
}
The computation of $\CCR$ takes $\OO(\Vert\RR\Vert\log\Vert\RR\Vert)$ time~\cite{NO07},
and is based on flattening (Section~\ref{sec:flat}).
As a byproduct of the computation we obtain a map $(\cdot)\EQU : \FF^{[]} \to \FFR$
with $([s])\EQU = [s]\EQU$ for $s \trianglelefteq \RR^\circ$.
We lift this map to arbitrary terms over $\FF^\circ \cup \FF^{[]}$ by letting
$(f(s_1,\dots,s_n))\EQU = f((s_1)\EQU, \dots, (s_n)\EQU)$
for $f \in \FF^\circ$.
By this map, 
the key property of the congruence closure that convertible terms
over $\FF^\circ$
are joinable using rules in $\CCR$ can be strengthened
to incorporate flattening, as follows:

\begin{lemma}
\label{lem:cc}
For terms $s$ and $t$ over $\FF^\circ \cup \FF^{[]}$,
\begin{enumerate}
\item $s \xr[\EER]{} t$ implies $(s)\EQU \xr[\CCR]{} (t)\EQU$;
\item $s \xr[\RR^\flat]{} t$ implies $(s)\EQU = (t)\EQU$; and
\item $(s)\EQU \xr[\CCR]{} (t)\EQU$ implies $s\nf[\EER^-] \xlr[\RR^\circ]{*} t\nf[\EER^-]$.
\end{enumerate}
Consequently,
$s \xlr[\EER \cup \RR^\flat]{*} t$ if and only if
$(s)\EQU \xr[\CCR]{*} {\cdot} \xl[\CCR]{*} (t)\EQU$.
\end{lemma}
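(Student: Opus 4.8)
The plan is to prove the three numbered items by analysing a single rewrite step, and then to assemble the ``consequently'' clause from them using confluence of $\CCR$ and Proposition~\ref{prop:flat}. Throughout I will use that $(\cdot)\EQU$ is a tree homomorphism that relabels each leaf $[v]\in\FF^{[]}$ to the class $[v]\EQU\in\FFR$ while leaving the $\FF^\circ$-structure intact; in particular it commutes with contexts, $(C[u])\EQU=(C)\EQU[(u)\EQU]$, and it is surjective onto terms over $\FF^\circ\cup\FFR$ (every class $[v]\EQU$ has the preimage $[v]$).

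For item (1), a step $s\xr[\EER]{}t$ contracts a redex $f([s_1],\dots,[s_n])$ with $f(s_1,\dots,s_n)\trianglelefteq\RR^\circ$ to $[f(s_1,\dots,s_n)]$ inside a context. Applying $(\cdot)\EQU$ and using that it commutes with contexts turns this into a contraction of $f([s_1]\EQU,\dots,[s_n]\EQU)$ to $[f(s_1,\dots,s_n)]\EQU$, which is exactly the $\CCR$-rule associated with $f(s_1,\dots,s_n)$; hence $(s)\EQU\xr[\CCR]{}(t)\EQU$. For item (2), a step $s\xr[\RR^\flat]{}t$ replaces $[\ell]$ by $[r]$ for some $\ell\to r\in\RR^\circ$; since $\ell\xlr[\RR^\circ]{*}r$ we have $[\ell]\EQU=[r]\EQU$, so applying $(\cdot)\EQU$ leaves the term unchanged, $(s)\EQU=(t)\EQU$.

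Item (3) is the crux, and the main obstacle is that $(\cdot)\EQU$ is not injective, so $s$ and $t$ need not share a context even though $(s)\EQU$ and $(t)\EQU$ do. I would first isolate a ``same image'' lemma: if $(w)\EQU=(w')\EQU$ then $w\nf[\EER^-]\xlr[\RR^\circ]{*}w'\nf[\EER^-]$. This is proved by induction on the common tree shape of $(w)\EQU$, using that $\EER^-$ only rewrites $\FF^{[]}$-leaves (its rules have single-constant left-hand sides), so normalisation acts independently on context and arguments, that $[v]\nf[\EER^-]=v$, and that $[v]\EQU=[v']\EQU$ means $v\xlr[\RR^\circ]{*}v'$; at an $\FF^\circ$-node one closes the convertibilities of the subterms under the common root. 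Given this, for a step $(s)\EQU\xr[\CCR]{}(t)\EQU$ at position $p$ with rule $f([u_1]\EQU,\dots,[u_n]\EQU)\to[f(u_1,\dots,u_n)]\EQU$, shape preservation forces $s|_p=f([v_1],\dots,[v_n])$ with $[v_i]\EQU=[u_i]\EQU$, i.e.\ $v_i\xlr[\RR^\circ]{*}u_i$. I would then form the canonical preimage $t^{*}=s[[f(u_1,\dots,u_n)]]_p$, which satisfies $(t^{*})\EQU=(t)\EQU$, so $t\nf[\EER^-]\xlr[\RR^\circ]{*}t^{*}\nf[\EER^-]$ by the same-image lemma. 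Since $s$ and $t^{*}$ share the context above $p$, normalising gives $s\nf[\EER^-]$ and $t^{*}\nf[\EER^-]$ differing only by $f(v_1,\dots,v_n)$ versus $f(u_1,\dots,u_n)$ at $p$; closing the convertibilities $v_i\xlr[\RR^\circ]{*}u_i$ under this context yields $s\nf[\EER^-]\xlr[\RR^\circ]{*}t^{*}\nf[\EER^-]$, and transitivity finishes (3).

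For the ``consequently'' clause, the forward direction follows by mapping a conversion $s\xlr[\EER\cup\RR^\flat]{*}t$ step by step through items (1) and (2) to a $\CCR$-conversion $(s)\EQU\xlr[\CCR]{*}(t)\EQU$, which is a valley because $\CCR$ is orthogonal, hence confluent. For the backward direction, I would use surjectivity to pick preimages $w_i$ of every term along a $\CCR$-conversion from $(s)\EQU$ to $(t)\EQU$ (taking the endpoints to be $s$ and $t$) and apply item (3) to each step, obtaining $s\nf[\EER^-]\xlr[\RR^\circ]{*}t\nf[\EER^-]$; Proposition~\ref{prop:flat} turns each $\RR^\circ$-step into $\EER\cup\RR^\flat$-steps, and since $w\xlr[\EER]{*}w\nf[\EER^-]$ for every $w$, we conclude $s\xlr[\EER\cup\RR^\flat]{*}t$.
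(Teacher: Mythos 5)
Your proof is correct and follows essentially the same route as the paper: items (1) and (2) are proved identically, your ``same-image'' lemma plus the positional analysis of a single $\CCR$-step is a repackaging of the paper's induction on $s$ for item (3), and both arguments settle the final claim via confluence of $\CCR$ together with Proposition~\ref{prop:flat}. The only cosmetic difference is that in the forward direction of the final claim you transport the conversion directly through items (1) and (2), whereas the paper detours through the $\EER^-$-normal forms of $s$ and $t$; both land on the same $\CCR$-conversion to which confluence is then applied.
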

\begin{proof}
(1)~We have
$f([s_1]\EQU,\dots,[s_n]\EQU) \to [f(s_1,\dots,s_n)]\EQU \in \CCR$
for each
$f([s_1],\dots,[s_2]) \to [f(s_1,\dots,s_n)] \in \EER$,
so $s \xr[\EER]{} t$ implies $(s)\EQU \xr[\CCR]{} (t)\EQU$.
(2)~For $[\ell] \to [r] \in \RR^\flat$,
$\ell$ and $r$ are convertible by $\RR^\circ$,
and $[\ell]\EQU = [r]\EQU$ follows.
Consequently, $s \xr[\RR^\flat]{} t$ implies $(s)\EQU = (t)\EQU$.
(3)~Assume that $(s)\EQU \xr[\CCR]{=} (t)\EQU$.
We show $s\nf[\EER^-] \xlr[\RR^\circ]{*} t\nf[\EER^-]$ by induction on $s$.
If $(s)\EQU \xr[\CCR]{} (t)\EQU$ by a root step,
then $s$ and $t$ can be written as
$s = f([s_1],\dots,[s_n])$,
$t = [t']$,
and there is a subterm $f(s_1',\dots,s_n') \trianglelefteq \RR^\circ$ such that
\[
(s)\EQU = f([s_1]\EQU,\dots,[s_n]\EQU) =
 f([s_1']\EQU,\dots,[s_n']\EQU) \xr[\CCR]{} [f(s_1',\dots,s_n')]\EQU =
 [t']\EQU = (t)\EQU
\]
Recalling that $[s_i]\nf[\EER^-] = s_i$,
and that $[s_i]\EQU = [s_i']\EQU$ implies $s_i \xlr[\RR^\circ]{*} s_i'$,
we see that
\[
s\nf[\EER^-] =
f(s_1,\dots,s_n) \xlr[\RR^\circ]{*} f(s_1',\dots,s_n')
\xlr[\RR^\circ]{*} t' = t \nf[\EER^-]
\]
If $(s)\EQU \xr[\CCR]{=} (t)\EQU$ is not a root step, then either
$s = [s']$, $t = [t']$ with $[s']\EQU = [t']\EQU$,
and $s\nf[\EER^-] = s' \xlr[\RR^\circ]{*} t' = t\nf[\EER^-]$ follows,
or we can write $s = f(s_1,\dots,s_n)$, $t = f(t_1,\dots,t_n)$
such that $(s_i)\EQU \xr[\CCR]{=} (t_i)\EQU$ for $1 \leq i \leq n$.
We conclude by the induction hypothesis.

For the final claim, assume that $s \xlr[\EER \cup \RR^\flat]{*} t$.
Let $s' = s\nf[\EER^-]$ and note that $s' \xr[\EER]{*} s$,
and that $s' \in \TT(\FF^\circ)$.
Therefore, $s' = (s')\EQU \xr[\CCR]{*} (s)\EQU$.
Hence $s \xlr[\EER \cup \RR^\flat]{*} t$ implies
$s\nf[\EER^-] \xlr[\RR^\circ]{*} t\nf[\EER^-]$
by Proposition~\ref{prop:flat},
$(s)\EQU \xl[\CCR]{*} s\nf[\EER^-] \xlr[\CCR]{*} t\nf[\EER^-]\xr[\CCR]{*} (t)\EQU$
by the previous observation and the fact that $s\nf[\EER^-]$ and $t\nf[\EER^-]$ are terms over $\FF^\circ$,
and $(s)\EQU \xr[\CCR]{*} {\cdot} \xl[\CCR]{*} (t)\EQU$ by confluence of $\CCR$.
Conversely, $(s)\EQU \xr[\CCR]{*} {\cdot} \xl[\CCR]{*} (t)\EQU$
implies
$s \xl[\EER]{*} s\nf[\EER^-] \xlr[\RR^\circ]{*} t\nf[\EER^-] \xr[\EER]{*} t$
by part (3)
and
$s \xlr[\EER \cup \RR^\flat]{*} t$
by Proposition~\ref{prop:flat}.
\end{proof}

\begin{example}[continued from Example~\ref{running1}]
\label{running-cc}
For $\UU^\circ$,
restricted to the subterms of the system,
there are two congruence classes,
$[\m{f}]_\UU = \{\m{f}\}$ and
$[\m{a}]_\UU = \{\m{a},\m{b},\mfa\}$.
We have
\[
\CCU = \{
\m{f} \to [\m{f}]_\UU,
\m{a} \to [\m{a}]_\UU,
\m{b} \to [\m{a}]_\UU,
[\m{f}]_\UU \circ [\m{a}]_\UU \to [\m{a}]_\UU
\}
\]
For $\VV^\circ$,
restricted to the subterms of the system,
the congruence classes are $[\m{f}]_\VV = \{\m{f}\}$ and
$[\m{a}]_\VV = \{\m{a},\mfa,\m{b},\mfb,\mffb,\mfffb\}$,
and $\CCV$ is essentially the same as $\CCU$:
\[
\CCV = \{
\m{f} \to [\m{f}]_\VV,
\m{a} \to [\m{a}]_\VV,
\m{b} \to [\m{a}]_\VV,
[\m{f}]_\VV \circ [\m{a}]_\VV \to [\m{a}]_\VV
\}
\]
\end{example}

\subsection{Rewrite Closure}
\label{sec:rc}

The rewrite closure is based on the flattened view
$(\RR^\flat,\EER)$ of the curried TRS $\RR^\circ$ (Section~\ref{sec:flat}).
In the following, $p$ and $q$ range over $\FF^{[]}$.
\begin{definition}
\label{def:rc}
The \emph{rewrite closure of $\RR^\circ$} is given by
\[
\textstyle
\RCR = \{ p \to q \mid p,q \in \FF^{[]} \text{ and } p \xr[\RR^\flat \cup \EE^\pm]{*} q \}
\]
\end{definition}
Note that whenever $p \to q \in \RCR$, $(p)\EQU = (q)\EQU$;
consequently, the map $(\cdot)\EQU$ is invariant under rewriting by $\RCR$.
\begin{proposition}
\label{prop:rcx}
$s \xr[\RCR]{} t$ implies $(s)\EQU = (t)\EQU$.
\qed
\end{proposition}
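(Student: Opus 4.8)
The plan is to split the argument into a single-rule claim and a routine context closure. The single-rule claim is the observation already recorded just before the statement: for every rule $p \to q \in \RCR$ one has $(p)\EQU = (q)\EQU$. Granting this, an arbitrary step $s \xr[\RCR]{} t$ factors as $s = C[p]$, $t = C[q]$ for some context $C$ over $\FF^\circ \cup \FF^{[]}$ and some $p \to q \in \RCR$; since $(\cdot)\EQU$ is defined to commute with every $f \in \FF^\circ$ (and the path to the hole of $C$ runs only through such symbols), applying $(\cdot)\EQU$ produces the same transformed context on both sides, carrying $(p)\EQU$ in one slot and $(q)\EQU$ in the other. With $(p)\EQU = (q)\EQU$ in hand, $(s)\EQU = (t)\EQU$ is immediate. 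I expect this half to be entirely mechanical.

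The content is therefore the single-rule claim, which I would derive from Lemma~\ref{lem:cc}. By Definition~\ref{def:rc}, $p \to q \in \RCR$ records a derivation $p \xr[\RR^\flat \cup \EE^\pm]{*} q$, and since $\EE^\pm = \EE \cup \EE^-$ this is in particular an $(\EER \cup \RR^\flat)$-conversion $p \xlr[\EER \cup \RR^\flat]{*} q$. The concluding equivalence of Lemma~\ref{lem:cc} then yields that the images are $\CCR$-joinable, $(p)\EQU \xr[\CCR]{*} {\cdot} \xl[\CCR]{*} (q)\EQU$.

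It remains to upgrade joinability to equality, which is the only delicate point. Since $p, q \in \FF^{[]}$, their images $(p)\EQU, (q)\EQU$ are elements of $\FFR$ viewed as constants, whereas every left-hand side of $\CCR$ has root symbol in $\FF^\circ$; hence $(p)\EQU$ and $(q)\EQU$ are $\CCR$-normal forms. A common reduct of two normal forms can only be each of them, so $(p)\EQU = (q)\EQU$, completing the single-rule claim. (Equivalently, one may bypass the concluding equivalence and track $(\cdot)\EQU$ along the derivation directly: each $\RR^\flat$-step preserves it by Lemma~\ref{lem:cc}(2), and each $\EE$-step, resp.\ reversed $\EE^-$-step, induces a forward, resp.\ backward, $\CCR$-step by Lemma~\ref{lem:cc}(1), again giving $(p)\EQU \xlr[\CCR]{*} (q)\EQU$ and then the same normal-form collapse.) The main obstacle, modest as it is, lies exactly in recognizing that flat constants are sent to $\CCR$-irreducible class representatives, so that $\CCR$-convertibility of the images degenerates to plain equality.
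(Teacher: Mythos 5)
Your proposal is correct and matches the paper's (implicit) argument: the paper states the proposition with no proof beyond the preceding observation that every rule $p \to q \in \RCR$ satisfies $(p)\EQU = (q)\EQU$, and your write-up simply supplies the missing details — reading $p \to q \in \RCR$ as an $(\EER \cup \RR^\flat)$-conversion, invoking Lemma~\ref{lem:cc}, and collapsing $\CCR$-joinability of the two $\FFR$-constants to equality because no element of $\FFR$ heads a left-hand side of $\CCR$ — together with the routine context closure. Both the main route and your parenthetical alternative are sound.
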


The point of the rewrite closure is that reachability in $\EER^\pm \cup \RR^\flat$
can be decomposed into a \emph{decreasing} sequence of steps in $\EER$ and $\RCR$,
followed by an \emph{increasing} sequence of steps in $\EER^-$ and $\RCR$.
Formally, we have the following lemma.

\begin{lemma}
\label{lem:rc}
$s \xr[\EER^\pm \cup \RR^\flat]{*} t$ if and only if
$s \xr[\EER \cup \RCR]{*} \cdot \xr[\EER^- \cup \RCR]{*} t$.
\end{lemma}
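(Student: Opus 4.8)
The plan is to prove the two inclusions separately, the right-to-left direction being routine and the left-to-right direction carrying all the weight. For the easy direction I would simply unfold the definition of $\RCR$: by Definition~\ref{def:rc} every single $\RCR$-step, placed in a context, expands into a conversion $\xr[\RR^\flat \cup \EER^\pm]{*}$, so using closure under contexts together with $\EER \subseteq \EER^\pm$ and $\EER^- \subseteq \EER^\pm$ yields $\xr[\EER \cup \RCR]{*} \cdot \xr[\EER^- \cup \RCR]{*} \subseteq \xr[\EER^\pm \cup \RR^\flat]{*}$.

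For the hard direction I would first note that $\RR^\flat \subseteq \RCR$ (each $[\ell] \to [r] \in \RR^\flat$ connects two constants of $\FF^{[]}$ in one step), so an arbitrary $\xr[\EER^\pm \cup \RR^\flat]{*}$-sequence may be regarded as a sequence of $\EER$-, $\EER^-$- and $\RCR$-steps, and the goal becomes to sort any such sequence into the valley shape $\xr[\EER \cup \RCR]{*} \cdot \xr[\EER^- \cup \RCR]{*}$. Since $\RCR$ is allowed in both phases, a sequence has this shape exactly when no $\EER$-step follows any $\EER^-$-step. I would run a well-founded induction on the pair (length of the sequence, number of inversions), an inversion being an $\EER^-$-step occurring before a later $\EER$-step. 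If there are no inversions we are done; otherwise I pick an innermost inversion, namely an $\EER^-$-step at a position $e_1$ and the first following $\EER$-step at a position $e_2$ with only $\RCR$-steps in between, and transform this chunk so that the measure strictly drops.

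The heart of the argument is a position analysis of $e_1$ against $e_2$, whose driving observation is that $\RCR$-steps rewrite fresh constants of $\FF^{[]}$ to fresh constants of $\FF^{[]}$: they neither collapse a $\circ$-term into a constant nor turn a symbol of $\FF$ into a fresh constant, and in particular they never create or destroy $\circ$-structure. This rules out the case that $e_1$ lies at or below a child of $e_2$ — the child created or touched by the expansion at $e_1$ could never become the fresh constant needed to fire the collapse at $e_2$ using only intervening $\RCR$-steps — and symmetrically rules out $e_2$ lying strictly below $e_1$. Only two cases survive. If $e_1 = e_2$, the chunk (the $\EER^-$-step, the $\RCR$-steps below $e_1$, and the $\EER$-step) starts and ends at fresh constants of $\FF^{[]}$, so by the very definition of $\RCR$ it collapses to a single $\RCR$-step and the length strictly decreases. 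If $e_1$ and $e_2$ are disjoint, the increasing bundle at $e_1$ (the $\EER^-$-step together with the $\RCR$-steps below $e_1$) commutes with the decreasing bundle at $e_2$ (the $\EER$-step together with the $\RCR$-steps below $e_2$) and with all remaining steps, so I reorder the chunk to perform the decreasing bundle first; this removes the chosen inversion without creating new ones, leaving the length unchanged.

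I expect the main obstacle to be exactly this gating phenomenon: $\RCR$ behaves as a congruence through $\circ$ only when the combined term is a subterm of $\RR^\circ$, so one cannot naively commute an individual $\EER$-step past an individual $\RCR$-step (after the $\RCR$-step changes a child, the $\EER$-pattern need no longer be a subterm of $\RR^\circ$). The resolution is to avoid such single-step commutations altogether: instead I absorb entire constant-to-constant chunks into one $\RCR$-step — which is immune to gating because $\RCR$ is defined as full reachability between constants — and I exploit that $\RCR$ cannot alter $\circ$-structure, which simultaneously forbids the bad nestings and makes the two disjoint bundles commute.
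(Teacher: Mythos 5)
Your proposal is correct and follows essentially the same route as the paper's proof: both reduce to $\EER^\pm \cup \RCR$ via $\RR^\flat \subseteq \RCR$, pick an innermost inversion (an $\EER^-$ step followed by an $\EER$ step with only $\RCR$ steps in between), rule out nested positions because $\RCR$ only moves within $\FF^{[]}$ while $\EER$ left-hand sides are rooted in $\FF^\circ$, collapse the equal-position case into a single $\RCR$ step by the definition of $\RCR$, and commute the parallel case. The only (immaterial) difference is your lexicographic measure on (length, inversions) versus the paper's direct minimality of the inversion count.
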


\begin{proof}
Consider Definition~\ref{def:rc}.
Since rules of $\RR^\flat$ have elements of $\FF^{[]}$ on both sides,
we have $\RR^\flat \subseteq \RCR$.
We also have $\RCR \subseteq {\xr[\EER^\pm \cup \RR^\flat]{*}}$.
Therefore, the reachability relation is preserved by the rewrite closure,
i.e.,
${\xr[\EER^\pm \cup \RR^\flat]{*}} = {\xr[\EER^\pm \cup \RCR]{*}}$.

Now consider a reduction $s \xr[\EER^\pm \cup \RR^\flat]{*} t$.
Then $s \xr[\EER^\pm \cup \RCR]{*} t$,
because of $\RR^\flat \subseteq \RCR$.
Assume that this rewrite
sequence is not of the shape
$s \xr[\EER \cup \RCR]{*} \cdot \xr[\EER^- \cup \RCR]{*} t$,
but has a minimal number of inversions between $\EER$ and $\EER^-$ steps
among the reductions $s \xr[\EER^\pm \cup \RCR]{*} t$
(an inversion is any pair of an $\EER$ step following an $\EER^-$ step,
not necessarily directly).
Then there is a subsequence of shape
$s' \xr[p,\EER^-]{} s'' \xr[\RCR]{*} t'' \xr[q,\EER]{} t'$, starting
with an $\EER^-$ step at position $p$ and a final $\EER$ step at position $q$.
The cases $p < q$ or $p > q$ are impossible,
because the rules of $\RCR$ only affect constants from $\FF^{[]}$,
and all rules from $\EER$ have a function symbol from $\FF^\circ$
at the root of their left-hand side, with constants from $\FF^{[]}$
as arguments.

If $p = q$ then
$s'|_p \xr[\EE^\pm \cup \RCR]{*} t'|_p$, which implies
$s'|_p \xr[\EE^\pm \cup \RR^\flat]{*} t'|_p$.
We also have $s'|_p, t'|_p \in \FF^{[]}$ because these two terms
are right-hand sides of rules in $\EE$.
Consequently, $s'|_p \to s'|_p \in \RCR$ follows by Definition~\ref{def:rc}.
Hence we can delete the two $\EER^\pm$ steps and the $\RCR$ steps
of $s'|_p \xr[\EE^\pm \cup \RCR]{*} t'|_p$,
and replace them by a single $\RCR$ step using the rule $s'|_p \to t'|_p$.
This decreases the number of inversions between $\EER$ and $\EER^-$ steps,
contradicting our minimality assumption.
Finally, if $p \parallel q$, then we
can reorder the rewrite sequence $s' \xr[\EER^\pm\cup\RCR]{*} t'$ as
$s' \xr[>q,\RCR]{*} \cdot \xr[q,\EER]{} \cdot \xr[\RCR]{*}\cdot\xr[p,\EER^-]{}\cdot\xr[>p,\RCR]{*} t'$,
commuting mutually parallel rewrite steps.
This reduces the number of inversions between $\EER$ and $\EER^-$ steps,
and again we reach a contradiction.
\end{proof}

The definition of $\RCR$ does not lend itself to an effective computation.
Lemma~\ref{lem:rc0} shows that we could alternatively define
$\RCR = {\leadsto}$,
where $\leadsto$ is determined by the inference rules in Figure~\ref{fig:rc}.
Note that the congruence rule (\iname{cong}) is specialized to the binary symbol $\circ$.
This is because all other elements $c \in \FF^\circ$ are constants,
and the congruence arising from $c \to [c] \in \EER$
is an instance of (\iname{refl}) using $p = [c]$.

\begin{lemma}
\label{lem:rc0}
We have $p \leadsto q$ if and only if $p \to q \in \RCR$.
\end{lemma}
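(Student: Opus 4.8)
The plan is to prove the two inclusions separately. Soundness, that $p \leadsto q$ implies $p \to q \in \RCR$, is routine: since every $\leadsto$ derivation relates flat constants in $\FF^{[]}$, it suffices to exhibit a reduction $p \xr[\RR^\flat\cup\EE^\pm]{*} q$ for each inference rule and then invoke Definition~\ref{def:rc}. I would induct on the derivation. Reflexivity (\iname{refl}) gives the empty reduction, the base rule for $[\ell]\to[r]\in\RR^\flat$ gives a single step, and transitivity composes. The congruence rule (\iname{cong}) is handled by the reduction $[s_1\circ s_2]\xr[\EE^-]{}[s_1]\circ[s_2]\xr[\RR^\flat\cup\EE^\pm]{*}[t_1]\circ[t_2]\xr[\EE]{}[t_1\circ t_2]$, whose middle part is obtained by applying, in context, the reductions $[s_i]\xr[\RR^\flat\cup\EE^\pm]{*}[t_i]$ that the induction hypotheses $[s_i]\to[t_i]\in\RCR$ provide through Definition~\ref{def:rc}.

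For completeness I would first pass to the curried world. Put $s = p\nf[\EE^-]$ and $t = q\nf[\EE^-]$, so that $s,t\trianglelefteq\RR^\circ$ and $p = [s]$, $q = [t]$. A short check shows that $p\xr[\RR^\flat\cup\EE^\pm]{*}q$ implies $s\xr[\RR^\circ]{*}t$: the $\EE^\pm$-steps preserve the $\EE^-$-normal form, while an $\RR^\flat$-step decodes to an $\RR^\circ$-step because $\nf[\EE^-]$ can be computed compositionally, which is precisely the first inclusion of Proposition~\ref{prop:flat} applied inside a context. It therefore suffices to prove the following generalized claim: for all subterms $s,t\trianglelefteq\RR^\circ$, if $s\xr[\RR^\circ]{*}t$ then $[s]\leadsto[t]$.

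I would prove this claim by well-founded induction on the pair consisting of the length of the reduction and the size of $s$, ordered lexicographically. If the reduction uses a root step, I split it at the first one, $s\xr[\RR^\circ]{*}\ell'\xr[\RR^\circ]{}r'\xr[\RR^\circ]{*}t$ with $\ell'\to r'\in\RR^\circ$ applied at the root; then $\ell',r'\trianglelefteq\RR^\circ$, the two surrounding reductions are strictly shorter, and the induction hypothesis gives $[s]\leadsto[\ell']$ and $[r']\leadsto[t]$, while the base rule gives $[\ell']\leadsto[r']$, so transitivity finishes. If the reduction uses no root step, then either $s=t$ (apply \iname{refl}) or $s = s_1\circ s_2$ and the reduction projects onto the two arguments as $s_1\xr[\RR^\circ]{*}t_1$ and $s_2\xr[\RR^\circ]{*}t_2$ with $t = t_1\circ t_2$; here $s_1,s_2,t_1,t_2\trianglelefteq\RR^\circ$ and the sources are strictly smaller, so the induction hypothesis yields $[s_i]\leadsto[t_i]$, and since $s_1\circ s_2$ and $t_1\circ t_2$ are subterms of $\RR^\circ$ the two $\EE$-rules required by (\iname{cong}) exist, giving $[s]\leadsto[t]$.

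The main obstacle is that a reduction between two flat constants necessarily runs through terms that are neither flat constants nor subterms of $\RR^\circ$, so a plain induction on reduction length stalls: the intermediate endpoints have no names to feed back into $\leadsto$. The root-step decomposition is what circumvents this, since between consecutive root steps the reduction stays below the root and hence factors through the smaller argument subterms --- exactly the configuration that (\iname{cong}) is built to absorb. The one delicate point is choosing the induction measure so that both the congruence case, which shrinks the term but not the reduction length, and the root-step case, which shrinks the length, decrease it; the lexicographic order on (length, size) does the job.
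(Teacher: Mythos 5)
Your proof is correct. The direction from $p \leadsto q$ to $p \to q \in \RCR$ is essentially the paper's: induct on the derivation and, for (\iname{cong}), route the reduction through $p_1 \circ p_2$ and $q_1 \circ q_2$ using one $\EER^-$ step, the reductions supplied by the induction hypotheses, and one $\EER$ step. For the converse you take a genuinely different path. The paper stays in the flattened world: it picks a counterexample $p = t_0 \to \dots \to t_n = q$ with $n$ minimal among $(\EER^\pm \cup \RCR)$-reductions and classifies the intermediate terms --- splitting at any $t_i \in \FF^{[]}$ via (\iname{trans}), cancelling the adjacent $\EER^-$/$\EER$ pair around any constant $t_i \in \FF^\circ$, and otherwise projecting the interior of the reduction onto the two arguments so that (\iname{cong}) applies. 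You instead decode the hypothesis back into the curried world with $\nf[\EER^-]$ (justified by the compositional computation of $\EER^-$-normal forms from the proof of Proposition~\ref{prop:flat}) and prove that $s \xr[\RR^\circ]{*} t$ between subterms of $\RR^\circ$ implies $[s] \leadsto [t]$, by lexicographic induction on reduction length and term size, splitting at the first root step. Your route ensures that every term you must name is a subterm of $\RR^\circ$, so the difficulty of unnamed intermediate terms disappears by construction rather than by case analysis; it also makes the base of the induction transparent, whereas the paper's $n = 1$ case yields only $p \to q \in \RCR$ and implicitly still requires the underlying $\RR^\flat \cup \EER^\pm$ reduction to be unfolded before (\iname{refl}) or (\iname{base}) can fire. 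The paper's route, in exchange, avoids the round trip through $\RR^\circ$ and reuses the inversion-minimality style of reasoning already established for Lemma~\ref{lem:rc}. Both arguments are sound.
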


\proof
By Definition~\ref{def:rc},
$p \to q \in \RCR$ is equivalent to $p \xr[\EER^\pm \cup \RCR]{*} q$.
All the inference rules in Figure~\ref{fig:rc} are consistent with
the requirement that $p \leadsto q$ implies $p \xr[\EER^\pm \cup \RCR]{*} q$.
The most interesting case is the (\irule{cong}) rule,
for which the following reduction from $p$ to $q$ is obtained:
\[
p \xr[\EER^-]{}
p_1 \circ p_2 \xr[\EER^\pm \cup \RCR]{*}
q_1 \circ p_2 \xr[\EER^\pm \cup \RCR]{*}
q_1 \circ q_2 \xr[\EER]{} q
\]

We show that $p \xr[\EER^\pm \cup \RCR]{*} q$ implies $p \leadsto q$ by contradiction.
Assume that $p \xr[\EER^\pm \cup \RCR]{*} q$ but not $p \leadsto q$.
Let $p = t_0 \to \dots \to t_n = q$ be a shortest sequence of
$(\EER^{\pm} \cup \RCR)$ steps from $p$ to $q$, and pick $p$ and $q$
such that $n$ is minimal. If $n = 0$ then $p = q$,
and $p \leadsto q$ by (\iname{refl}).
If $n = 1$ then $p \to q \in \RCR$ since $\EER$-rules
are rooted by elements of $\FF^\circ$ on their left-hand sides.
If $t_i \in \FF^{[]}$ for any $0 < i < n$, then
$p \leadsto t_i \leadsto q$ by minimality of
$p \xr[\EER^\pm \cup \RCR]{*} q$, and $p \leadsto q$ by transitivity
(\iname{trans}).
If any $t_i = c \in \FF$ then
we must have $t_{i-1} = [c] \xr[\EER^-]{} t_i \xr[\EER]{} [c] = t_{i+1}$,
and removing these two rewrite steps from the sequence
results in a shorter reduction from $p$ to $q$, contradicting minimality.
In the remaining case, we have $t_i = p_i \circ q_i$ for all
$0 < i < n$, and hence $p_1 \xr[\EER^\pm \cup \RCR]{*} p_{n-1}$
and $q_1 \xr[\EER^\pm \cup \RCR]{*} q_{n-1}$ since any root step
would have a constant from $\FF^{[]}$ as source or target. But
these two rewrite sequences have length at most $n-2$,
and therefore $p_1 \leadsto p_{n-1}$ and $q_1 \leadsto q_{n-1}$ hold.
This implies $p \leadsto q$ by the (\iname{cong}) rule.
In all cases we found that $p \leadsto q$, a contradiction.
\qed

\begin{figure}
\begin{gather*}
\inferr[refl]{p \leadsto p}{p \in \FF^{[]}}
\qquad
\inferr[base]{p \leadsto q}{p \to q \in \RR^\flat}
\qquad
\inferr[trans]{p \leadsto r}{p,q,r \in \FF^{[]} & p \leadsto q & q \leadsto r}
\\
\inferr[cong]{p \leadsto q}{p_1 \circ p_2 \to p \in \EER & p_1 \leadsto q_1 & p_2 \leadsto q_2 & q_1 \circ q_2 \to q \in \EER}
\end{gather*}
\caption{Inference rules for rewrite closure}
\label{fig:rc}
\end{figure}
\begin{listing}
\raggedright\noindent
$\textbf{rewrite-closure}(n, \EER, \RR^\flat)$: compute rewrite closure $\RCR = {\leadsto}$\\
\begin{algorithmic}[1]
\REQUIRE
$\FF^{[]} = \{ 1, \dots, n \}$, which can be achieved as part of the flattening step
\STATE $(\leadsto) \gets \varnothing \subseteq \{ 1, \dots, n \}^2$ \qquad (represented by an array).
\FOR{$1 \leq p \leq n$}
 \STATE $\textbf{add}(p,p)$ \COMMENT{(\iname{refl})}
\ENDFOR
\FORALL{$p \to q \in \RR^\flat$}
 \STATE $\textbf{add}(p,q)$ \COMMENT{(\iname{base})}
\ENDFOR
\end{algorithmic}

\vskip1ex

$\textbf{add}(p,q)$: add $p \leadsto q$ and process implied (\iname{trans}) and (\iname{cong}) rules\\
\begin{algorithmic}[1]
\IF {$p \leadsto q$}
 \RETURN
\ENDIF
\STATE extend $\leadsto$ by $p \leadsto q$
\FOR{$1 \leq r \leq n$}
 \IF{$r \leadsto p$}
  \STATE $\textbf{add}(r,q)$ \COMMENT{(\iname{trans})}
 \ENDIF
 \IF{$q \leadsto r$}
  \STATE $\textbf{add}(p,r)$ \COMMENT{(\iname{trans})}
 \ENDIF
\ENDFOR
\FORALL{$p \circ p_2 \to p_r \in \EER$ and $q \circ q_2 \to q_r \in \EER$ with $p_2 \leadsto q_2$}
 \STATE $\textbf{add}(p_r,q_r)$ \COMMENT{(\iname{cong})}
\ENDFOR
\FORALL{$p_1 \circ p \to p_r \in \EER$ and $q_1 \circ q \to q_r \in \EER$ with $p_1 \leadsto q_1$}
 \STATE $\textbf{add}(p_r,q_r)$ \COMMENT{(\iname{cong})}
\ENDFOR
\end{algorithmic}

\caption{Algorithm for rewrite closure}
\label{lst:rc-impl}
\end{listing}

\begin{example}[continued from Example~\ref{running1}]
\label{running-rc}
We present $\RCU$ and $\RCV$ as tables, where non-empty
entries correspond to the rules contained in each TRS,
and we leave out the surrounding brackets for the elements of $\FF^{[]}$.
For example, $[\m{a}] \to [\mfa] \notin \RCU$
but $[\m{a}] \to [\mfa] \in \RCV$.
The letters indicate the inference rule used to derive the entry, while
the superscripts indicate stage numbers---each inference uses only
premises that have smaller stage numbers.
\[
\RCU =
\begin{array}{r|cccc}
&\m{f}&\m{a}&\mfa&\m{b}\\
\hline
\m{f}& r^0 &     &     &     \\
\m{a}&     & r^0 &     &     \\
\mfa &     & b^0 & r^0 & b^0 \\
\m{b}&     &     &     & r^0 \\
\end{array}
\qquad
\RCV =
\begin{array}{r|cccccccc}
&\m{f}&\m{a}&\mfa&\m{b}&\mfb&\mffb&\mfffb\\
\hline
\m{f} & r^0 &     &     &     &     &     &     \\
\m{a} &     & r^0 & b^0 & b^0 & t^2 & t^1 & t^3 \\
\mfa  &     &     & r^0 & t^3 & c^1 & t^3 & t^2 \\
\m{b} &     &     &     & r^0 & t^4 & b^0 & t^4 \\
\mfb  &     &     &     & t^2 & r^0 & t^2 & c^1 \\
\mffb &     &     &     & t^4 & c^3 & r^0 & c^3 \\
\mfffb&     &     &     & b^0 & t^4 & t^1 & r^0 \\
\end{array}
\]
\end{example}

The size of $\leadsto$ (and hence $\RCR$) is bounded by $|\FF^{[]}|^2 \in \OO(\Vert\RR\Vert^2)$.
We can view the inference rules in Figure~\ref{fig:rc} as a system
of Horn clauses with atoms of the form $p \leadsto q$ ($p, q \in \FF^{[]}$).
This system can be solved in time proportional to the total size of the
clauses, finding a minimal solution for
the set $\RCR$.
There are
$|\FF^{[]}|$ instances of (\iname{refl}),
$|\RR^\flat| = |\RR|$ instances of (\iname{base}),
$|\FF^{[]}|^3$ instances of (\iname{trans})
and at most $|\FF^{[]}|^2$ instances of (\iname{cong}),
noting that $p_1$, $p_2$ are determined by $p$
and $q_1$, $q_2$ are determined by $q$.
Therefore, we can compute $\RCR$ in time $\OO(\Vert \RR \Vert^3)$.

\begin{remark}
In our implementation, we do not generate these Horn clauses explicitly.
Instead, whenever we make a new inference $p \to q \in \RCR$, we check all
possible rules that involve $p \to q \in \RCR$ as a premise. The result
is an incremental algorithm (see Listing~\ref{lst:rc-impl}).
From an abstract point of view, however,
this is essentially the same as solving the Horn clauses as stated above.
This remark also applies to inference rules presented later.
\end{remark}

\section{Deciding \UNC}
\label{sec:unc}

In this section we develop an algorithm that decides \UNC
for a finite ground TRS $\RR$
in $\OO(\Vert\RR\Vert \log \Vert\RR\Vert)$ time for a given finite ground TRS $\RR$.
As preprocessing steps, we curry the TRS to bound its arity
while preserving and reflecting \UNC
(Section~\ref{sec:curry}),
compute the automaton $\NNR$ that accepts the $\RR^\circ$-normal forms
(Section~\ref{sec:nf}),
and the congruence closure $\CCR$
that allows for an efficient checking of convertibility (Section~\ref{sec:cc}).

First note that if we have two distinct convertible normal forms $s \xlr[\RR^\circ]{*} t$ such that the conversion does not contain a root step,
then there are strict subterms of $s$ and $t$ that are convertible and distinct.
Therefore, \UNC reduces to the question whether any state of $\CCR$,
the automaton produced by the congruence closure of $\RR^\circ$ (cf.~Section~\ref{sec:cc}),
accepts more than one normal form.
Let $\CCR \times \NNR$ be the result of the product construction on $\CCR$ and $\NNR$,
where $\NNR$ is the automaton accepting the $\RR^\circ$-normal forms (cf.~Section~\ref{sec:nf}).
We can decide $\UNC$ by enumerating accepting runs $t \xr[\CCR \times \NNR]{*} (p, q)$ in a bottom-up fashion (see Figure~\ref{fig:unc-enum}) until either
\begin{itemize}
\item
we obtain two distinct accepting runs ending in $(p_1,q_1)$ and $(p_2,q_2)$ with $p_1 = p_2$ (which means that the two corresponding terms are convertible normal forms),
in which case $\UNC(\RR)$ does not hold; or
\item
we have exhausted all runs,
in which case $\UNC(\RR)$ holds.
\end{itemize}
This enumeration of accepting runs is performed by the algorithm in Listing~\ref{lst:unc}.
Note that for achieving the desired complexity it is crucial that
on lines~\ref{c11} and~\ref{c13},
rather than iterating over all transitions of $\CCR$ and $\NNR$,
appropriate indices are used;
for line~\ref{c13}, this is the partial function $\delta$ from Section~\ref{sec:nf},
while for line~\ref{c11},
one can precompute an array that maps each state $p$ of $\CCR$ to
a list of transitions from $\CCR$
where $p$ occurs on the left-hand side.

\begin{example}[continued from Examples~\ref{running-nf} and~\ref{running-cc}]
\label{running-unc}
Consider $\UU$. The constant normal forms are $\m f$ and $\m b$,
so we add $([\m f]_\UU, [\m f], \m f)$ and $([\m b]_\UU = [\m a]_\UU, [\m b], \m b)$
to the \textit{worklist}. The main loop (lines~\ref{c5}--\ref{c18}),
is entered three times.
\begin{enumerate}
\item
$([\m f]_\UU, [\m f], \m f)$ is taken from the \textit{worklist},
and we assign $\textit{seen}([\m f]_\UU) = ([\m f], \m f)$.
There are no transitions in $\CCU$ satisfying the conditions in lines~\ref{c11} and \ref{c12}.
\item
$([\m a]_\UU, [\m b], \m b)$ is taken from the \textit{worklist},
and we assign $\textit{seen}([\m a]_\UU) = ([\m b],\m b)$.
This time, the transitions
$[\m f]_\UU \circ [\m a]_\UU \to [\m a]_\UU \in \CCU$
and $[\m f] \circ [\m b] \to [\star] \in \NNR$ satisfy the
conditions in lines \ref{c11}--\ref{c13},
and we push $([\m a]_\UU, [\star], \m f \circ \m b)$ to the \textit{worklist}.
\item
$([\m a]_\UU, [\star], \m f \circ \m b)$ is taken from the \textit{worklist}.
Since $\textit{seen}([\m a]_\UU)$ is already defined,
we conclude that $\UNC(\UU)$ does not hold.
Indeed $\m b$
(from $\textit{seen}([\m a]_\UU) = ([\m b],\m b)$)
and $\m f \circ \m b$ are convertible normal forms for $\UU^\circ$:
$\m b \xl{} \m f \circ \m a \xl{} \m f \circ (\m f \circ \m a) \xr{} \m f \circ \m b$.
\end{enumerate}

For $\VV$, there is only one constant normal form, namely $\m f$.
Hence, initially, we add $([\m f]_\VV, [\m f], \m f)$ to the \textit{worklist}.
Then we enter the main loop (lines~\ref{c5}--\ref{c18}).
On line~\ref{c10}, $\textit{seen}([\m f])$ is set to $([\m f]_\VV,\m f)$.
At this point,
the only way that the check on line~\ref{c12} could succeed
would be having a transition with left-hand side $[\m f]_\VV \circ [\m f]_\VV$
in $\CCV$, but there is no such a transition.
Hence the main loop terminates and we conclude that $\UNC(\VV)$ is true.
\end{example}

\begin{figure}
\begin{gather*}
\inferr[const]{c \xr[\CCR \times \NNR]{*} ([c]\EQU,[c])}{c \trianglelefteq \RR^\circ & \text{$c$ normal form}} \\
\inferr[app]{t_1 \circ t_2 \xr[\CCR \times \NNR]{*} (p,q)}
{t_1 \xr[\CCR \times \NNR]{*} (p_1,q_1) &
 t_2 \xr[\CCR \times \NNR]{*} (p_2,q_2) &
 p_1 \circ p_2 \xr[\CCR]{} p &
 q_1 \circ q_2 \xr[\NNR]{} q
}
\end{gather*}
\caption{Enumerating runs of $\CCR \times \NNR$.}
\label{fig:unc-enum}
\end{figure}
\begin{listing}
\begin{algorithmic}[1]
\STATE compute $\CCR$ and a representation of $\NNR$ \label{c1}
\STATE let $\mathit{seen}(p)$ be undefined for all $p \in \FFR$ (to be updated below)
\FORALL{constants $c \trianglelefteq \RR^\circ$ that are normal forms} \label{c2}
 \STATE push $([c]\EQU,[c], c)$ to \textit{worklist} \label{c3}
  \COMMENT{(\iname{const})}
\ENDFOR \label{c4}
\WHILE{\textit{worklist} not empty} \label{c5}
 \STATE $(p,q,s) \gets$ pop \textit{worklist} \label{c6}
\COMMENT{$s \xr[\CCR \times \NNR]{*} (p,q)$}
 \IF{$\mathit{seen}(p)$ is defined} \label{c7}
  \RETURN $\UNC(\RR)$ is false \label{c8}
  \COMMENT{distinct convertible normal forms}
 \ENDIF \label{c9}
 \STATE $\mathit{seen}(p) \gets (q,s)$ \label{c10}
 \FORALL{transitions $p_1 \circ p_2 \to p_r \in \CCR$ with $p \in \{p_1,p_2\}$} \label{c11}
  \IF{$\mathit{seen}(p_1) = (q_1,s_1)$ and $\mathit{seen}(p_2) = (q_2,s_2)$ are defined} \label{c12}
   \IF{there is a transition $q_1 \circ q_2 \to q_r \in \NNR$} \label{c13}
    \STATE push $(p_r,q_r,s_1 \circ s_2)$ to \textit{worklist} \label{c14}
    \COMMENT{(\iname{app})}
   \ENDIF \label{c15}
  \ENDIF \label{c16}
 \ENDFOR \label{c17}
\ENDWHILE \label{c18}
\RETURN $\UNC(\RR)$ is true \label{c19}
\COMMENT{runs exhausted}
\end{algorithmic}
\caption{Deciding $\UNC(\RR)$}
\label{lst:unc}
\end{listing}

Correctness of the procedure hinges on two key facts:
First, the automaton $\CCR \times \NNR$ is deterministic,
which means that distinct runs result from distinct terms.
Secondly, the set of $\RR^\circ$-normal forms is closed under taking subterms,
so we can skip non-normal forms in the enumeration.

\begin{lemma}
\label{lem:unc-inv}
Whenever the algorithm in Listing~\ref{lst:unc} reaches line~\ref{c5},
the following conditions are satisfied:
\begin{enumerate}
\item
$seen(p) = (q,s)$ or $(p,q,s) \in \mathit{worklist}$ implies a run $s \xr[\CCR \times \NNR]{*} (p,q)$;
\item
the elements of $\mathit{worklist}$ are distinct,
and
$seen(p) = (q,s)$ implies $(p,q,s) \notin \mathit{worklist}$; and
\item
for any run $s \xr[\CCR \times \NNR]{*} (p,q)$, either
$seen(p) = (q,s)$, or
there is a subterm $s' \trianglelefteq s$
and a run $s' \xr[\CCR \times \NNR]{*} (p',q')$
such that $(p',q',s') \in \mathit{worklist}$.
\end{enumerate}
\end{lemma}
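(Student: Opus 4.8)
**

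The plan is to prove all three invariants simultaneously by induction on the number of iterations of the main loop (lines~\ref{c5}--\ref{c18}), since they interact: the correctness of a \texttt{push} operation depends on invariant~(1) holding for the items already processed. I would first establish that the invariants hold when line~\ref{c5} is reached for the very first time, i.e.\ after the initialization loop (lines~\ref{c2}--\ref{c4}). At that point \textit{seen} is everywhere undefined, so conditions~(1) and~(2) reduce to checking that each triple $([c]\EQU,[c],c)$ pushed for a constant normal form $c$ genuinely admits a run $c \xr[\CCR \times \NNR]{*} ([c]\EQU,[c])$; this is exactly the (\iname{const}) rule from Figure~\ref{fig:unc-enum}, valid because $c \xr[\CCR]{} [c]\EQU$ and $c \xr[\NNR]{} [c]$ are transitions. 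Distinctness of the worklist entries follows because distinct constants yield distinct terms in the third component. For condition~(3), I would argue that any run $s \xr[\CCR \times \NNR]{*} (p,q)$ contains, at the leaves, a run on some constant normal form; pushing the base triples records enough information that $s$ either is itself such a constant (so $(p,q,s)$ is on the worklist) or has a subterm $s'$ that is, giving the required witness.

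For the inductive step, I would assume the invariants hold at the top of the loop and show they are restored by the time line~\ref{c5} is reached again, after popping $(p,q,s)$, setting $\textit{seen}(p) \gets (q,s)$, and performing the inner loop (lines~\ref{c11}--\ref{c17}). Preservation of~(1) is the easiest part: the only new \textit{seen} entry is $\textit{seen}(p)=(q,s)$, which is correct because the popped triple satisfied~(1) by hypothesis; and each triple $(p_r,q_r,s_1\circ s_2)$ pushed on line~\ref{c14} admits a run by the (\iname{app}) rule, using the runs for $s_1$ and $s_2$ guaranteed by the now-defined \textit{seen} entries together with the transitions $p_1\circ p_2 \xr[\CCR]{} p_r$ and $q_1\circ q_2 \xr[\NNR]{} q_r$ checked on lines~\ref{c11} and~\ref{c13}. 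Preservation of~(2) requires that the newly pushed triples are distinct from each other and from existing worklist entries, and that popping $(p,q,s)$ and defining $\textit{seen}(p)$ does not create an overlap; here determinism of $\CCR \times \NNR$ is the crucial ingredient, since it guarantees that distinct terms $s_1 \circ s_2$ yield distinct runs and hence distinct states, preventing duplicate insertions.

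The hard part will be invariant~(3), which is the heart of the completeness argument and the reason all three conditions must be carried together. The claim is that the enumeration never ``loses'' a reachable run: for every run $s \xr[\CCR \times \NNR]{*} (p,q)$, either its target state has already been seen with exactly this run recorded, or some subterm's run is still pending on the worklist. I would prove preservation of~(3) by a case analysis on the structure of $s$. If $s$ is a constant, the base case already handled it. If $s = s_1 \circ s_2$, I apply invariant~(3) from the top of the loop to the subruns on $s_1$ and $s_2$: each is either already seen or has a pending worklist witness. The delicate case is when both $s_1$ and $s_2$ become seen (so that the $(\iname{app})$ step for $s$ is actually fireable); I must argue that the iteration defining the second of the two \textit{seen} entries is precisely the one whose inner loop (lines~\ref{c11}--\ref{c17}) enumerates the transition producing the run for $s$, thereby pushing $(p,q,s)$ onto the worklist---so that $s$ itself acquires a pending witness exactly when it first could. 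If instead a proper subterm of $s_1$ or $s_2$ remains pending, that subterm is also a subterm of $s$ and serves as the witness for~(3). The main obstacle is bookkeeping: ensuring that the ``second subterm becomes seen'' event is correctly matched to the inner loop iteration that fires on line~\ref{c11} for the state just recorded, using the indexing of transitions by the states occurring on their left-hand sides. Once these cases are assembled, the invariants are restored and the induction closes.
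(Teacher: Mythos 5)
Your plan follows the paper's proof essentially step for step: invariant (1) via the (\iname{const}) and (\iname{app}) rules at the two push sites, invariant (2) via determinism of $\CCR \times \NNR$ plus the uniqueness of the push event, and invariant (3) by induction on $s$ with the key observation that when both subterms become \textit{seen}, the later assignment is exactly the loop iteration whose inner enumeration fires the push of $(p,q,s)$. The only slight imprecision is in invariant (2), where determinism by itself only shows that $s$ determines $(p,q)$; the absence of duplicates really rests on the fact (which you do state under invariant (3)) that a compound term is pushed only in the single iteration where the second of its two subterms is popped.
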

\proof
For the first property,
note that only lines~\ref{c3} and~\ref{c14} extend the \textit{worklist}.
On line~\ref{c3}, we have $c \xr[\CCR \times \NNR]{*} ([c]\EQU,[c])$ by (\iname{const})
from Figure~\ref{fig:unc-enum}.
On line~\ref{c14},
we have $s_1 \circ s_2 \xr[\CCR \times \NNR]{*} (p_r,q_r)$
by (\iname{app}).
Further note that when we update $\textit{seen}(p)$ on line~\ref{c10},
$s \xr[\CCR \times \NNR]{*} (p,q)$ holds because $(p,q,s)$ was previously on the $\mathit{worklist}$.

For the second property, first note that $s$ determines the values of $p$ and $q$ in all relevant triples $(p,q,s)$,
so we can focus on the terms produced in lines~\ref{c3} and~\ref{c14}.
Line~\ref{c3} produces all possible constants for $s$, each exactly once.
As long as the procedure doesn't terminate,
any term removed from the \textit{worklist} on line~\ref{c6} ends up in the $\mathit{seen}$ map on line~\ref{c10},
so as long as the invariant is maintained, the terms $s$ obtained on line~\ref{c6} will be distinct.
Now a term $s_1 \circ s_2$ will only be produced on line~\ref{c14} when
the term $s$ just taken from the \textit{worklist} equals $s_1$ or $s_2$,
and furthermore, either $s_1 = s_2$, or the other term had previously
been taken from the \textit{worklist}.
Clearly this can happen only once, establishing the invariant.

For the third invariant,
let us say that the invariant holds for $s$ if
there is a run $s \xr[\CCR \times \NNR]{*} (p,q)$ and the
third invariant holds for that run.
First note that every run $s \xr[\CCR \times \NNR]{*} (p,q)$ must use (\iname{const})
at the leafs,
so any constant subterm $c \trianglelefteq s$ will
satisfy the invariant after the initialization on lines \ref{c1}--\ref{c4}
is completed.
We show by induction on $s$
that the invariant for $s$ is maintained in the main loop.
Note that once $\mathit{seen}(p) = (q,s)$,
that will remain true because line~\ref{c10} never updates a defined value of $\mathit{seen}$.
Furthermore, if $(p,q,s) \in \textit{worklist}$,
then the invariant will be maintained until $(p,q,s)$ is taken from the \textit{worklist} on line~\ref{c6},
but then the invariant will be restored on line~\ref{c10}.
In particular, the invariant is maintained for constants $s$.
So assume that $s$ is not constant, i.e., $s = s_1 \circ s_2$,
and that neither $\textit{seen}(p) = (q,s)$ nor $(p,q,s) \in \textit{worklist}$.
We can decompose the run $s \xr[\CCR \times \NNR]{*} (p,q)$ as
$s_1 \circ s_2 \xr[\CCR \times \NNR]{*} (p_1,q_1) \circ (p_2,q_2) \xr[\CCR \times \NNR]{} (p,q)$.
At the end of the loop body (line~\ref{c17}),
the invariant will hold for $s_1 \xr[\CCR \times \NNR]{*} (p_1,q_1)$ and
$s_2 \xr[\CCR \times \NNR]{*} (p_2,q_2)$.
This ensures the invariant for $s$ except when
$\textit{seen}(p_1) = (q_1,s_1)$ and $\textit{seen}(p_2) = (q_2,s_2)$.
But in that case,
because the invariant held for $s \xr[\CCR \times \NNR]{*} (p,q)$ at the beginning of the loop iteration,
one of $\textit{seen}(p_1)$ or $\textit{seen}(p_2)$ must just have been assigned,
i.e., either $(p_1,q_1,s_1)$ or $(p_2,q_2,s_2)$ must have been taken from the \textit{worklist} on line~\ref{c6}.
Hence the transitions $p_1 \circ p_2 \to p \in \CCR$ and $q_1 \circ q_2 \to q \in \NNR$ satisfy the conditions on lines~\ref{c12}--\ref{c14}
and $(p,q,s)$ is added to the \textit{worklist} on line~\ref{c15},
restoring the invariant for $s$.
\qed

\begin{theorem}
\label{thm:unc}
The algorithm in Listing~\ref{lst:unc} is correct and runs in $\OO(\Vert\RR\Vert \log \Vert\RR\Vert)$ time.
\end{theorem}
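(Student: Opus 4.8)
The plan is to read correctness off the three invariants of Lemma~\ref{lem:unc-inv} and to charge the running time of the main loop against the transitions of $\CCR$ together with the logarithmic-cost queries into $\NNR$. I would first dispose of termination: line~\ref{c10} only assigns entries of $\mathit{seen}$ that are still undefined, and its domain $\FFR$ is finite, so at most $|\FFR|$ iterations of the main loop can assign a value; any subsequent pop of a triple whose state is already recorded returns immediately on line~\ref{c8}. Combined with invariant~(2), which ensures every triple is pushed at most once, this shows the loop halts and that the \emph{worklist} is exhausted unless we return false earlier.

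For the verdict I would use the reduction stated at the opening of Section~\ref{sec:unc}: by passing to minimal distinct convertible normal forms, whose conversion then involves a root step, $\UNC(\RR)$ fails exactly when some state $p$ of $\CCR$ accepts two distinct $\RR^\circ$-normal forms, i.e.\ when there are runs $s_1 \xr[\CCR\times\NNR]{*}(p,q_1)$ and $s_2 \xr[\CCR\times\NNR]{*}(p,q_2)$ with $s_1 \neq s_2$. If the algorithm returns false on line~\ref{c8}, invariant~(1) turns the popped triple $(p,q,s)$ and the recorded value $\mathit{seen}(p)=(q',s')$ into two such runs, and invariant~(2) forces $s \neq s'$ (were the triples equal, $(p,q,s)$ could not simultaneously sit on the \emph{worklist} and agree with $\mathit{seen}(p)$); hence $\UNC(\RR)$ genuinely fails. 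Conversely, if the loop exhausts the \emph{worklist} and returns true on line~\ref{c19}, then invariant~(3) with empty \emph{worklist} yields $\mathit{seen}(p)=(q,s)$ for every run $s \xr[\CCR\times\NNR]{*}(p,q)$, so each state records a single source term and no state can accept two distinct normal forms. Determinism of $\CCR\times\NNR$—each term admits at most one run—is what makes it meaningful to speak of the run recorded at a state and guarantees that a source term $s$ determines both $p$ and $q$.

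Turning to complexity, I would argue that both the preprocessing and the main loop stay within $\OO(\Vert\RR\Vert\log\Vert\RR\Vert)$. The setup on line~\ref{c1} costs $\OO(\Vert\RR\Vert\log\Vert\RR\Vert)$ for congruence closure and $\OO(\Vert\RR\Vert)$ for the representation of $\NNR$, whose $\delta$ answers queries in $\OO(\log\Vert\RR\Vert)$ time. For the loop I would bound the number of pushes: line~\ref{c3} adds $\OO(\Vert\RR\Vert)$ constants, and each push on line~\ref{c14} happens while scanning, for a freshly recorded state $p$, the $\CCR$-transitions in which $p$ occurs on the left. Because every state is recorded at most once and line~\ref{c11} uses the precomputed incidence index rather than iterating over all of $\CCR$ (each transition being scanned at most twice, once per argument position), the cumulative number of such scans is $\OO(|\CCR|)=\OO(\Vert\RR\Vert)$; each scan performs constant-time $\mathit{seen}$ lookups, one $\delta$ query of cost $\OO(\log\Vert\RR\Vert)$ on line~\ref{c13}, and at most one push whose witness $s_1\circ s_2$ is built with maximal sharing in $\OO(\log\Vert\RR\Vert)$ time. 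Summing yields the claimed bound.

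The hard part will be the complexity bookkeeping rather than the correctness, which is essentially a corollary of Lemma~\ref{lem:unc-inv}. The two delicate points are that line~\ref{c11} must consult the incidence index so that the total cost of scanning transitions is linear rather than quadratic in $|\CCR|$, and that the witness terms—potentially large—are touched only through maximal sharing, so that building $s_1\circ s_2$ and testing states stays inside the logarithmic-overhead budget. Once this accounting is in place, the stated running time follows.
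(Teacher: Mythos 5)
Your proposal is correct and follows essentially the same route as the paper: correctness is read off the three invariants of Lemma~\ref{lem:unc-inv} together with determinism of $\CCR\times\NNR$ (so that a term determines its run and distinct recorded runs mean distinct convertible normal forms), and the running time is charged to the $\OO(\Vert\RR\Vert)$ states and indexed $\CCR$-transitions with an $\OO(\log\Vert\RR\Vert)$ overhead for the $\delta$ queries and shared term construction. The only cosmetic difference is that you argue termination directly from the finiteness of $\FFR$, whereas the paper derives it from the complexity bound.
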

\begin{proof}
The first two invariants established in Lemma~\ref{lem:unc-inv} ensure
that if $\mathit{seen}(p) = (q',s')$ on line~\ref{c8},
then we have distinct runs $s \xr[\CCR \times \NNR]{*} (p,q)$
and $s' \xr[\CCR \times \NNR]{*} (p,q')$ witnessing
that $s$ and $s'$ are distinct convertible normal forms, so \UNC does not hold.
On the other hand, if line~\ref{c19} is reached,
then by the third invariant,
we have $seen(p) = (q,s)$
for all runs $s \xr[\CCR \times \NNR]{*} (p,q)$,
showing that any normal form in the convertibility class represented by $p$
equals $s$, which establishes \UNC.
Termination follows from the complexity bound.

So let us focus on the complexity.
Let $n = \Vert\RR\Vert$.
As stated in Section~\ref{sec:cc},
computing $\CCR$ takes $\OO(n \log n)$ time.
In Section~\ref{sec:nf} we presented a $\OO(n \log n)$ time computation
for $\NNR$ where the transitions are represented
by a partial function $\delta$ that can be queried in $\OO(\log n)$ time
per left-hand side.
This covers line~\ref{c1} of the algorithm.
Lines~\ref{c2} and~\ref{c4} take $\OO(n)$ time
(the normal form constants are already enumerated in the computation of $\NNR$).
Observe that lines \ref{c10}--\ref{c17} update $\mathit{seen}$,
so they are executed at most once per element of $\FFR$, i.e., $\OO(n)$ times.
The precomputation for line~\ref{c11} takes $\OO(n)$ time as well.
In line~\ref{c11}, each transition $p_1 \circ p_2 \to p_r \in \CCR$
is encountered at most twice, once for $p_1$
and a second time for $p_2$,
which means that lines \ref{c12}--\ref{c16}
are executed at most twice per transition in $\CCR$, so $\OO(n)$ times.
The check on line~\ref{c13} takes $\OO(\log n)$ time per iteration, so $\OO(n \log n)$ time in total.
Finally, we note that line~\ref{c14} is executed $\OO(n)$ times, so no more than $\OO(n)$ items are ever added to the \textit{worklist},
which means that lines \ref{c5}--\ref{c9} are executed $\OO(n)$ times.
Overall the algorithm executes in $\OO(n \log n)$ time, as claimed.
\end{proof}
\begin{remark}
The third component of the elements of $\mathit{worklist}$
as well as the second component of the results of $\mathit{seen}$
in Listing~\ref{lst:unc}
were added to aid the proof of Theorem~\ref{thm:unc}
and can be omitted in an actual implementation.
\end{remark}

\section{Deciding \UNR}
\label{sec:unr}

In this section we present a procedure that decides \UNR
for a finite ground TRS $\RR$
in $\OO(\Vert\RR\Vert^3 \log \Vert\RR\Vert)$ time.
As preprocessing, we curry the TRS to bound its arity
while preserving and reflecting \UNR
(Section~\ref{sec:curry}),
compute the automaton $\NNR$ that accepts the $\RR^\circ$-normal forms
(Section~\ref{sec:nf}),
and the rewrite closure $(\RCR, \EER)$ that
allows checking reachability (Section~\ref{sec:rc}).

\subsection{Peak Analysis}

In order to derive conditions for \UNR,
assume that \UNR does not hold and consider
a peak $s \xl[\RR^\circ]{!} {\cdot} \xr[\RR^\circ]{!} t$
between distinct $\RR^\circ$-normal forms $s$ and $t$
of minimum total size.
If the peak has no root step then we can project it to the arguments,
and obtain a smaller counterexample.
So, without loss of generality,
we may assume that the peak has a root step in its left part.
Using the rewrite closure (cf.\ Lemma~\ref{lem:rc}),
the peak can be decomposed as
\[
s \xr[\EER \cup \RCR^-]{*} {\cdot} \xl[\EER \cup \RCR]{*} {\cdot} \xr[\EER \cup \RCR]{*} {\cdot} \xl[\EER \cup \RCR^-]{*} t
\]
where, in fact, the left part has a root step.
This means that there is a constant $p$ such that
\begin{equation}
\label{peak-2}
s \xr[\EER \cup \RCR^-]{*} p \xl[\EER \cup \RCR]{*} {\cdot} \xr[\EER \cup \RCR]{*} {\cdot} \xl[\EER \cup \RCR^-]{*} t
\end{equation}
First we consider the special case
\(
s \xr[\EER \cup \RCR^-]{*} p \xl[\EER \cup \RCR^-]{*} t
\),
which implies that any $p$ is reachable from at most one normal form
using rules from $\EER \cup \RCR^-$.
\begin{definition}
\label{def:unr-1}
The \emph{first \UNR-condition} holds if
\(
s \xr[\EER \cup \RCR^-]{*} p \xl[\EER \cup \RCR^-]{*} t
\)
for $\RR^\circ$-normal forms $s, t \in \TT(\FF^\circ)$
and $p \in \FF^{[]}$
implies $s = t$.
\end{definition}
If the first \UNR-condition is violated, then \UNR clearly does not hold.
Assume that the first \UNR-condition is satisfied and
let $w$ be the partial function that witnesses this fact:
\[
w(p) = t \iff \text{$t$ is a normal form and $\textstyle t \xr[\EER \cup \RCR^-]{*} p$}
\]
In order to analyze the conversion~\eqref{peak-2} further,
note that every forward $\EE$ step between $p$ and $t$ is preceded
by an inverse $\EE$ step at the same position,
inducing a peak between two constants from $\FF^{[]}$.
This situation is captured by the following definition.
\begin{definition}
\label{def:mc}
Two constants $p,q \in \FF^{[]}$ are \emph{meetable}
if $p \xl[\EER \cup \RCR]{*} {\cdot} \xr[\EER \cup \RCR]{*} q$.
In this case we write $p \uparrow q$.
\end{definition}
Using $\uparrow^\mnshortparallel$,
the parallel closure of the relation $\uparrow$,
we can find a multi-hole context $C$ and
constants $q_1, \dots, q_n$, $p_1,\dots,p_n$
such that~\eqref{peak-2} becomes
\begin{equation}
\label{peak-4}
s \xr[\EER \cup \RCR^-]{*} q \xl[\EER \cup \RCR]{*} C[q_1,\dots,q_n] \uparrow^\mnshortparallel C[p_1,\dots,p_n] \xl[\EER \cup \RCR^-]{*} t
\end{equation}
Note that because $s$ and $t$ are normal forms,
we have $s = w(q)$ and $t = C[w(p_1),\dots,w(p_n)]$.
\begin{definition}
\label{def:unr-2}
The \emph{second \UNR-condition} holds if
\eqref{peak-4} for $\RR^\circ$-normal forms $s, t \in \TT(\FF^\circ)$
implies $s = t$.
\end{definition}
The analysis of this section is summarized by the following lemma.
\begin{lemma}
\label{lem-unr}
A curried ground TRS is \UNR if and only if the first and second
\UNR-conditions are satisfied.
\qed
\end{lemma}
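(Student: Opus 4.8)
The statement is an equivalence, and the analysis preceding it already manufactures, from a putative counterexample, the shapes \eqref{peak-2} and \eqref{peak-4}; the plan is to organise this into a proof of both directions, resting on one translation principle between reachability in $\RR^\circ$ and reachability in $\EER$ and $\RCR$. For a term $u$ over $\FF^\circ \cup \FF^{[]}$ write $\hat u = u\nf[\EER^-]$ for the $\FF^\circ$-term it represents; this is well defined and distributes over contexts, as in the proof of Proposition~\ref{prop:flat}. I would first record that every $\EER$- or $\EER^-$-step preserves $\hat u$, and that every $\RCR$-step $a \to b$ satisfies $\hat a \xr[\RR^\circ]{*} \hat b$, since $a \to b \in \RCR$ unfolds by Definition~\ref{def:rc} to $a \xr[\EER^\pm \cup \RR^\flat]{*} b$ and each $\RR^\flat$-step becomes an $\RR^\circ$-step on representatives. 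Thus an $\EER \cup \RCR$ sequence translates to a forward $\RR^\circ$-reduction of representatives and an $\EER \cup \RCR^-$ sequence to a backward one, which for a normal form $s \in \TT(\FF^\circ)$ and $p \in \FF^{[]}$ yields the key equivalence
\[
s \xr[\EER \cup \RCR^-]{*} p \iff \hat p \xr[\RR^\circ]{*} s .
\]
Left-to-right is immediate; for right-to-left I would chain $p \xr[\EER^-]{*} \hat p \xr[\RR^\circ]{*} s$ into $p \xr[\EER^\pm \cup \RR^\flat]{*} s$, apply Lemma~\ref{lem:rc}, and note that as $p$ is an $\EER$-normal form the decreasing part of the decomposition never leaves $\FF^{[]}$ and uses only $\RCR$, so the whole sequence lies in $\EER^- \cup \RCR$ and reverses to the claim.

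Assuming \UNR, I would establish each condition by contraposition. If the first fails, there are distinct normal forms $s,t$ and a constant $p$ with $s \xr[\EER \cup \RCR^-]{*} p \xl[\EER \cup \RCR^-]{*} t$; the equivalence gives $\hat p \xr[\RR^\circ]{*} s$ and $\hat p \xr[\RR^\circ]{*} t$, so $\hat p$ reaches two distinct normal forms, contradicting \UNR. If the second fails, \eqref{peak-4} holds with $s \neq t$, and I would translate its legs: the outer legs give $\hat q \xr[\RR^\circ]{*} s$ and $C[\hat p_1,\dots,\hat p_n] \xr[\RR^\circ]{*} t$, the inner left leg gives $C[\hat q_1,\dots,\hat q_n] \xr[\RR^\circ]{*} \hat q$, and each meet $q_i \uparrow p_i$ (Definition~\ref{def:mc}) supplies an $m_i$ with $\hat m_i \xr[\RR^\circ]{*} \hat q_i$ and $\hat m_i \xr[\RR^\circ]{*} \hat p_i$. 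Then $C[\hat m_1,\dots,\hat m_n]$ reaches $s$ through the $q$-side and $t$ through the $p$-side, again contradicting \UNR.

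For the converse I would argue by contradiction from a counterexample peak $s \xl[\RR^\circ]{!} {\cdot} \xr[\RR^\circ]{!} t$ with $s \neq t$ of minimum total size. If the peak contains no root step, then $s$ and $t$ share a root symbol and projecting to the arguments yields a strictly smaller counterexample; so by minimality there is a root step, and by the $s \leftrightarrow t$ symmetry we may place it in the left part. Lifting to the flattened system and decomposing with Lemma~\ref{lem:rc}, the root step forces passage through a single constant $p \in \FF^{[]}$, which is precisely \eqref{peak-2}. If its central peak is trivial we are in the special case of Definition~\ref{def:unr-1}, and the first \UNR-condition forces $s = t$. Otherwise I would factor the central peak through the meetability relation and its parallel closure exactly as in the derivation of \eqref{peak-4}, so that the second \UNR-condition forces $s = t$. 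In either case the counterexample collapses, establishing \UNR.

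The delicate step is the passage from \eqref{peak-2} to \eqref{peak-4}: one must extract a maximal common multi-hole context $C$ so that both descending legs to the normal forms use only $\EER \cup \RCR^-$ (which is what lets the equivalence identify $s = w(q)$ and $t = C[w(p_1),\dots,w(p_n)]$), while all residual disagreement is confined to meetable constants $q_i \uparrow p_i$ at the holes. Getting this factorisation right — commuting mutually parallel steps and separating the root-level $\RCR$ activity from the parallel closure — is the crux; the two translation computations and the projection/minimality argument are then routine.
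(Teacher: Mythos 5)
Your proposal is correct and follows essentially the same route as the paper, which presents this lemma with no separate proof body precisely because the preceding peak analysis (the derivation of \eqref{peak-2} and \eqref{peak-4} from a minimal counterexample, plus the observation that a violation of either condition yields a peak) is intended as the proof. Your explicit translation principle between $\xr[\EER \cup \RCR^-]{*}$-reachability of constants and $\xr[\RR^\circ]{*}$-reachability of representatives makes precise what the paper leaves implicit in phrases like ``If the first \UNC-condition is violated, then \UNR clearly does not hold,'' but it is the same underlying argument.
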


\subsection{Computing Meetable Constants}
\label{sec:mc}

The meetable constant relation $\uparrow$ can be computed in a
way similar to the rewrite closure from Section~\ref{sec:rc},
using the inference rules in Figure~\ref{fig:mc}.
So for this subsection, let $\uparrow$ be defined by those inference rules.
The following lemma shows that
$\uparrow$ coincides with the meetable constants relation,
justifying the symbol.

\begin{lemma}
\label{lem:mc}
For $p,q \in \FF^{[]}$,
$p \uparrow q$
if and only if $p \xl[\EER \cup \RCR]{*} {\cdot} \xr[\EER \cup \RCR]{*} q$.
\end{lemma}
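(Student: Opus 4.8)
The plan is to follow the same two-part pattern as the proof of Lemma~\ref{lem:rc0}: a soundness direction showing that every rule of Figure~\ref{fig:mc} is consistent with the existence of a common source, and a completeness direction that reconstructs a derivation from a minimal witness. Throughout, I read $p \xl[\EER \cup \RCR]{*} {\cdot} \xr[\EER \cup \RCR]{*} q$ as the existence of a term $r$ with $r \xr[\EER \cup \RCR]{*} p$ and $r \xr[\EER \cup \RCR]{*} q$, and call such an $r$ a \emph{common source} of $p$ and $q$.

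For the ``only if'' direction I would argue by induction on the derivation of $p \uparrow q$, checking that each rule preserves the property of having a common source. Reflexivity is witnessed by $r = p$, and any rule that extends a leg by an $\RCR$ step is sound because $\RCR \subseteq {\xr[\EER \cup \RCR]{*}}$, so the step may simply be appended to the corresponding reduction. The only interesting case is (\iname{cong}): from common sources $r_1$ of $(p_1,q_1)$ and $r_2$ of $(p_2,q_2)$ together with $p_1 \circ p_2 \to p \in \EER$ and $q_1 \circ q_2 \to q \in \EER$, the term $r_1 \circ r_2$ reduces by steps below the root to $p_1 \circ p_2$ and to $q_1 \circ q_2$, and the two $\EER$ root steps then reach $p$ and $q$; hence $r_1 \circ r_2$ is a common source of $p$ and $q$.

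For the ``if'' direction I would take a common source $r$ together with reductions $r \xr[\EER \cup \RCR]{*} p$ and $r \xr[\EER \cup \RCR]{*} q$ of minimal total length and induct on that length, distinguishing cases on the shape of $r$. If $r \in \FF^{[]}$, then no $\EER$ rule applies at a constant of $\FF^{[]}$, so both legs consist solely of $\RCR$ steps; since $\RCR$ is reflexive and transitive this gives $r \to p, r \to q \in \RCR$, and $p \uparrow q$ follows from (\iname{refl}) together with the rules propagating meetability along $\RCR$ steps. If $r = c \in \FF^\circ$ is a constant, its only step is $c \to [c] \in \EER$, which must begin both legs, leaving shorter reductions from $[c] \in \FF^{[]}$ to which the induction hypothesis applies. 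Finally, if $r = r_1 \circ r_2$, then---because $p$ and $q$ are constants while $r$ is $\circ$-rooted---each leg must contain a root step; the first such step cannot be an $\RCR$ step (those act on $\FF^{[]}$ constants) and so is an $\EER$ step $[s_1] \circ [s_2] \to [s_1 \circ s_2]$. Thus each leg factors as independent reductions of the two children to constants, a single $\EER$ root step, and a trailing block of $\RCR$ steps (no $\EER$ step being possible once a constant of $\FF^{[]}$ is reached). The child reductions exhibit $r_1$ as a common source of the two first children and $r_2$ of the two second children, both of strictly smaller total length, so the induction hypothesis yields $p_1 \uparrow q_1$ and $p_2 \uparrow q_2$; (\iname{cong}) then gives meetability of the two post-root constants, and appending the $\RCR$ tails produces $p \uparrow q$.

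The main obstacle is the $r = r_1 \circ r_2$ case, specifically the positional decomposition of a reduction from a $\circ$-rooted term to an $\FF^{[]}$ constant into (i) independent reductions of the two children, (ii) a single $\EER$ root step, and (iii) a block of $\RCR$ steps. Justifying this requires the same reordering of mutually parallel steps used in the proof of Lemma~\ref{lem:rc}, together with the observations that a root step must occur and is necessarily the $\EER$ rule, and that $\FF^{[]}$ constants admit no further $\EER$ step. Everything else is bookkeeping that parallels Lemma~\ref{lem:rc0}.
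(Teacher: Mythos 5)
Your proof is correct and follows essentially the same route as the paper's: soundness of each inference rule in one direction, and in the other a minimal-peak/induction argument whose cases line up with (\iname{refl}), (\iname{step$_l$}), (\iname{step$_r$}) and (\iname{cong}), with the congruence case handled by splitting off the two forced $\EER$ root steps and recursing on the child peaks. The only cosmetic differences are that you organize the case analysis by the shape of the common source rather than by the last steps of the two legs, and that the parallel-step reordering you flag as the main obstacle is not really needed, since each leg is a single reduction whose decomposition into non-root steps, one $\EER$ root step, and an $\RCR$ tail is already forced.
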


\proof
The proof follows the same principles as that of Lemma~\ref{lem:rc0},
so let us be brief.
First note that all rules in Figure~\ref{fig:mc} are consistent
with the requirement that $p \uparrow q$ implies $p \xl[\EER \cup \RCR]{*} {\cdot} \xr[\EER \cup \RCR]{*} q$.

On the other hand,
assume that there is peak
$p \xl[\EER \cup \RCR]{*} {\cdot} \xr[\EER \cup \RCR]{*} q$
such that $p \uparrow q$ does not hold.
Choose such a peak of minimal length.
Then either $p = q$, and (\irule{refl}) applies,
or $p \xl[\RCR]{} p' \uparrow q$ and (\irule{step$_l$}) applies,
or $p \uparrow q' \xr[\RCR]{} q$ and (\irule{step$_r$}) applies,
or $p \xl[\EE]{} p_1 \circ p_2 \uparrow^\mnshortparallel q_1 \circ q_2 \xr[\EE]{} q$,
in which case (\irule{cong}) applies.
In each case, $p \uparrow q$ follows, contradicting the assumption.
\qed

\begin{figure}
\begin{gather*}
\inferr[refl]{p \uparrow p\mathstrut}{p \in \FF^{[]}}
\qquad
\inferr[cong]{p \uparrow q}{p_1 \circ p_2 \to p \in \EER & p_1 \uparrow q_1 & p_2 \uparrow q_2 & q_1 \circ q_2 \to q \in \EER}
\\
\inferr[step\/$_l$]{p \uparrow r}{q \to p \in \RCR & q \uparrow r & r \in \FF^{[]}}
\qquad
\inferr[step\/$_r$]{p \uparrow r}{p \in \FF^{[]} & p \uparrow q & q \to r \in \RCR}
\end{gather*}
\caption{Inference rules for meetable constants}
\label{fig:mc}
\end{figure}
Note that as in the case of the rewrite closure,
there is no (\iname{cong}) rule for constants $c \in \FF^\circ$,
because they would be instances of (\iname{refl}).
There are $\OO(\Vert \RR \Vert)$ instances of (\iname{refl}),
There are $\OO(\Vert \RR \Vert^2)$ instances of (\iname{cong})
(because $p$ and $q$ determine $p_1$, $p_2$, $q_1$, and $q_2$),
and $\OO(\Vert \RR \Vert^3)$ instances each of (\iname{step}$_l$)
and (\iname{step}$_r$).
Using Horn inference with $p \uparrow q$ as atoms,
the computation of $\uparrow$ takes $\OO(\Vert\RR\Vert^3)$ time.

\begin{example}[continued from Example~\ref{running-rc}]
\label{running-mc}
In the same spirit as Example~\ref{running-rc},
we present the relations $\uparrow_\UU$ and $\uparrow_\VV$ as tables,
with the entries indicating rules and stage numbers.
\[
\uparrow_\UU =
\begin{array}{r|ccccc}
&\m{f}&\m{a}&\mfa&\m{b}\\
\hline
\m{f}& r^0 &     &       &       \\
\m{a}&     & r^0 & s^1_l & s^2_r \\
\mfa &     & s^1_r & r^0 & s^1_r \\
\m{b}&     & s^2_l & s^1_l & r^0
\end{array}
\qquad
\uparrow_\VV =
\begin{array}{r|cccccccc}
&\m{f}&\m{a}&\mfa&\m{b}&\mfb&\mffb&\mfffb\\
\hline
\m{f} & r^0 &     &     &     &     &     &     \\
\m{a} &     & r^0 & s^1_l & s^1_l & s^1_l & s^1_l & s^1_l \\
\mfa  &     & s^1_r & r^0 & s^1_l & s^1_l & s^1_l & s^1_l \\
\m{b} &     & s^1_r & s^1_r & r^0 & s^1_l & s^1_l & s^1_l \\
\mfb  &     & s^1_r & s^1_r & s^1_r & r^0 & s^1_l & s^1_l \\
\mffb &     & s^1_r & s^1_r & s^1_r & s^1_r & r^0 & s^1_l \\
\mfffb&     & s^1_r & s^1_r & s^1_r & s^1_r & s^1_r & r^0 \\
\end{array}
\]
\end{example}

\subsection{Checking \UNR}
\begin{figure}
\begin{gather*}
\inferr[base]{c \in W([c],[c])}{c \trianglelefteq \RR^\circ & \text{$c$ normal form}}
\qquad
\inferr[step$_\RCR$]{s \in W(p',q)}{s \in W(p,q) & p' \to p \in \RCR}
\\
\inferr[step$_\EER$]{s_1 \circ s_2 \in W(p,q)}{
s_1 \in W(p_1,q_1) &
s_2 \in W(p_2,q_2) &
p_1 \circ p_2 \to p \in \EER &
q_1 \circ q_2 \to q \in \NNR}
\end{gather*}
\caption{Inference rules for $s \in W(p,q)$}
\label{fig:unr-1}
\end{figure}
\begin{listing}
\begin{algorithmic}[1]
\STATE compute rewrite closure $(\RCR, \EER)$ and a representation of $\NNR$ \label{d1}
\STATE let $w(p)$ and $n(p)$ be undefined for all $p \in \FF^{[]}$ (to be updated below)
\FORALL{constants $c \trianglelefteq \RR^\circ$ that are normal forms} \label{d2}
 \STATE push $([c],[c], c)$ to \textit{worklist} \label{d3}
\COMMENT{(\irule{base})}
\ENDFOR \label{d4}
\WHILE{\textit{worklist} not empty} \label{d5}
 \STATE $(p,q,s) \gets$ pop \textit{worklist} \label{d6}
\COMMENT{$s \in W(p,q)$}
 \IF[$t \in W(p,n(p))$]{$w(p) = t$ is defined} \label{d7}
  \IF{$s \neq t$} \label{d8}
   \RETURN $\UNR(\RR)$ is false \label{d9}
\COMMENT{first \UNR-condition violated by $s$ and $t$}
  \ENDIF \label{d10}
  \STATE \textbf{continue} at~\ref{d5}\label{d11}
 \ENDIF \label{d12}
 \STATE $w(p) \gets s$, $n(p) \gets q$ \label{d13}
 \FORALL{rules $p_1 \circ p_2 \to p_r \in \EER$ with $p \in \{p_1,p_2\}$} \label{d14}
  \IF{$w(p_1) = s_1$ and $w(p_2) = s_2$ are defined} \label{d15}
   \IF{there is a transition $n(p_1) \circ n(p_2) \to q_r \in \NNR$} \label{d16}
    \STATE push $(p_r,q_r,s_1 \circ s_2)$ to \textit{worklist} \label{d17}
\COMMENT{(\irule{step$_\EER$})}
   \ENDIF \label{d18}
  \ENDIF \label{d19}
 \ENDFOR \label{d20}
 \FORALL{rules $p' \to p \in \RCR$} \label{d21}
  \STATE push $(p',q,s)$ to \textit{worklist} \label{d22}
\COMMENT{(\irule{step$_\RCR$})}
 \ENDFOR \label{d23}
\ENDWHILE \label{d24}
\end{algorithmic}
\caption{Checking the first \UNR-condition and computing $w$ and $n$.}
\label{lst:unr-1}
\end{listing}

We start by checking the first \UNR-condition.
To perform this computation efficiently,
we make use of the automaton $\NNR$ that recognizes normal forms,
cf.~Section~\ref{sec:nf}.
The fact that $s$ is a $\RR^\circ$-normal form is witnessed by a run $s \xr[\NNR]{*} q$.
Let
\[
\textstyle
W(p,q) = \{ s \mid
s \in \TT(\FF^\circ) \text{, }
s \xr[\EER \cup \RCR^-]{*} p \in \FF^{[]} \text{ and }
s \xr[\NNR]{*} q \in \FF^\NN
\}
\]
\begin{lemma}
\label{lem:w}
The predicate $s \in W(p,q)$ is characterized by the inference rules in Figure~\ref{fig:unr-1}.
\end{lemma}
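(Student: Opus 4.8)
The plan is to prove the two inclusions separately. For \emph{soundness} — that every $s$ derivable by the rules of Figure~\ref{fig:unr-1} satisfies $s \in W(p,q)$ — I would induct on the derivation and check that each rule preserves the two defining conditions of $W(p,q)$. For (\irule{base}) the steps $c \xr[\EER]{} [c]$ and $c \xr[\NNR]{} [c]$ (the latter available because $c$ is a normal-form subterm) witness $c \in W([c],[c])$. For (\irule{step$_\RCR$}) a rule $p' \to p \in \RCR$ gives $p \xr[\RCR^-]{} p'$, so a witness $s \xr[\EER \cup \RCR^-]{*} p$ extends to $s \xr[\EER \cup \RCR^-]{*} p'$ while leaving the run $s \xr[\NNR]{*} q$ untouched. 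For (\irule{step$_\EER$}) I concatenate the argument reductions with the single steps $p_1 \circ p_2 \to p \in \EER$ and $q_1 \circ q_2 \to q \in \NNR$ to obtain $s_1 \circ s_2 \xr[\EER \cup \RCR^-]{*} p$ and $s_1 \circ s_2 \xr[\NNR]{*} q$.

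The substantial direction is \emph{completeness}, which I would prove by induction on the structure of $s \in \TT(\FF^\circ)$. If $s = c$ is a constant, the witnessing run forces $q = [c]$ (for $c \trianglelefteq \RR^\circ$ the only $\NNR$-transition out of $c$ is $c \to [c]$), while $c \xr[\EER \cup \RCR^-]{*} p$ must start with $c \xr[\EER]{} [c]$ and then consist solely of $\RCR^-$ steps, since no $\EER$ rule applies to a lone $\FF^{[]}$ constant. Hence (\irule{base}) gives $c \in W([c],[c])$ and replaying the chain $[c] \xr[\RCR^-]{*} p$ via (\irule{step$_\RCR$}) yields $c \in W(p,[c])$.

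The crux is the inductive step $s = s_1 \circ s_2$, where I would isolate the \emph{first root step} of the reduction $s_1 \circ s_2 \xr[\EER \cup \RCR^-]{*} p$. Since the target $p$ is a constant while the source is $\circ$-rooted, a root step must occur; and because $\RCR^-$ rewrites only $\FF^{[]}$ constants, no root step can fire before the $\circ$ is eliminated, so the first root step is an $\EER$ step $[t_1] \circ [t_2] \to [t_1 \circ t_2]$. Everything before it lies strictly below the root and projects to $s_1 \xr[\EER \cup \RCR^-]{*} [t_1]$ and $s_2 \xr[\EER \cup \RCR^-]{*} [t_2]$, while everything after it acts on the single constant $[t_1 \circ t_2]$ and is therefore $\RCR^-$, giving $[t_1 \circ t_2] \xr[\RCR^-]{*} p$. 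Decomposing the run as $s_1 \xr[\NNR]{*} q_1$, $s_2 \xr[\NNR]{*} q_2$, $q_1 \circ q_2 \to q \in \NNR$, the induction hypothesis on the strict subterms $s_1, s_2$ supplies derivations of $s_1 \in W([t_1],q_1)$ and $s_2 \in W([t_2],q_2)$; then (\irule{step$_\EER$}) gives $s_1 \circ s_2 \in W([t_1 \circ t_2],q)$, and replaying $[t_1 \circ t_2] \xr[\RCR^-]{*} p$ through (\irule{step$_\RCR$}) concludes $s_1 \circ s_2 \in W(p,q)$.

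I expect the main obstacle to be making this factorization of the reduction — argument reductions, then one root $\EER$ step, then $\RCR^-$ steps — watertight. This is exactly the positional phenomenon already exploited in Lemmas~\ref{lem:rc0} and~\ref{lem:mc}: $\RCR$ rules touch only $\FF^{[]}$ constants, whereas the $\EER$ rules that create or destroy a $\circ$ carry a genuine $\circ$ at the root of their left-hand side. I would therefore reuse that analysis rather than reprove the commutation of parallel steps from scratch.
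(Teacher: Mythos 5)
Your proof is correct and matches the paper's argument in substance: both directions rest on the same factorization of $s \xr[\EER \cup \RCR^-]{*} p$ into argument reductions, a single root $\EER$ step, and trailing $\RCR^-$ steps on a constant, which is forced by the shapes of the $\EER$ and $\RCR$ rules exactly as you observe. The paper organizes the completeness direction as an inductive analysis of the \emph{last} step of the reduction rather than your induction on $s$ isolating the \emph{first root step}, but this is only a difference in bookkeeping, not in the underlying idea.
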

\begin{proof}
The inference rules follow by an inductive analysis of the last step
of the $s \xr[\EER \cup \RCR^-]{*} p$ reduction,
where $s \xr[\NNR]{*} q$.
Recall that $\NNR$ is deterministic, so $q$ is determined by $s$.
\begin{itemize}[align=left]
\item[(\irule{base})]
If there is a single step, then it must be using a rule $s = c \to [c] = p$ from $\EER$,
where $c \in \FF$, and $c$ is a $\RR^\circ$-normal form,
which ensures that $c \to [c] = q \in \NNR$ as well.
\item[(\irule{step$_\RCR$})]
If the last step is an $\RCR^-$ step,
then $s \xr[\EER \cup \RCR^-]{*} p' \xr[\RCR^-]{} p$,
and there is a $q$ with $s \in W(p,q)$.
\item[(\irule{step$_\EER$})]
If the last step is an $\EER$ step but $s$ is not a constant,
then $s = s_1 \circ s_2 \xr[\EER \cup \RCR^-]{*} p_1 \circ p_2 \xr[\EER]{} p$,
and there are $q_1$, $q_2$ with $s_1 \in W(p_1,q_1)$ and $s_2 \in W(p_2,q_2)$.
\end{itemize}
Conversely, each derivation of $s \in W(p,q)$ by these three
inference rules gives rise to rewrite sequences
$s \xr[\EER \cup \RCR^-]{*} p$ and $s \xr[\NNR]{*} q$.
\end{proof}
The corresponding code is given in Listing~\ref{lst:unr-1}.
In addition to $w(q)$ (which we introduced immediately after
Definition~\ref{def:unr-1})
we also compute a partial function $n(q)$
which returns the state of $\NNR$ that accepts $w(q)$
if the latter is defined.
The computed witnesses may have exponential size
(see Example~\ref{ex:exp-unr1}),
so in order to make the check on line~\ref{d8} efficient,
it is crucial to use maximal sharing.

\begin{example}[continued from Example~\ref{running-nf}]
\label{running-unr1}
Let us check the first \UNR-condition for
$\UU^\circ = \{ \mfa \to \m a, \mfa \to \m b, \m a \to \m a \}$
according to Listing~\ref{lst:unr-1}.
The constant normal forms are $\m f$ and $\m b$,
so we add $([\m f],[\m f],\m f)$ and $([\m b],[\m b],\m b)$
to the \textit{worklist}.
Then we enter the main loop.
We may ignore duplicate entries on the \textit{worklist},
because they are skipped on line~\ref{d11}.
Taking this into account, the main loop is executed 3 times.
\begin{enumerate}[align=left]
\item
$([\m f],[\m f],\m f)$ is taken from the \textit{worklist},
and we assign $w([\m f]) = \m f$ and $n([\m f]) = \m f$.
The conditions on lines~\ref{d14} and~\ref{d15} are never satisfied.
Because $[\m f] \to [\m f] \in \RCU$,
$([\m f],[\m f],\m f)$ is added to the \textit{worklist} again
by the final loop (lines~\ref{d21} and~\ref{d23}).
\item[(1$'$)]
$([\m f],[\m f],\m f)$ (which is a duplicate)
is taken from the \textit{worklist},
but now $w([\m f]) = \m f$ is defined and we reach line~\ref{d11}.
\item
$([\m b],[\m b],\m b)$ is taken from the \textit{worklist},
and we assign $w([\m b]) = \m b$ and $n([\m b]) = [\m b]$.
Line~\ref{d17} is not reached.
Line~\ref{d22} is reached for $[\m b] \to [\m b] \in \RCU$ and
$[\mfa] \to [\m b] \in \RCU$ and we add
$([\m b],[\m b],\m b)$ (a duplicate) and $([\mfa],[\m b],\m b)$ to the \textit{worklist}.
\item
$([\mfa],[\m b],\m b)$ is taken from the \textit{worklist},
and we assign $w([\mfa]) = \m b$ and $n([\mfa]) = [\m b]$.
$([\mfa],[\m b],\m b)$ (a duplicate)
is added to the \textit{worklist} on line~\ref{d22},
because $[\mfa] \to [\mfa] \in \RCU$.
\end{enumerate}
The loop terminates without reaching line~\ref{d9},
so the first \UNR-condition is satisfied for $\UU$.
We have derived the following exhaustive list of
instances of $s \in W(p,q)$ derivable by the rules in Figure~\ref{fig:unr-1},
corresponding to the normal forms $\m f$ and $\m b$.
\begin{xalignat*}{3}
\m f &\in W([\m f],[\m f]) &
\m b &\in W([\m b],[\m b]) &
\m b &\in W([\mfa],[\m b])
\end{xalignat*}

For $\VV$,
there is only one constant normal form, namely $\m f$.
Hence, initially, we add $([\m f]\EQU, [\m f], \m f)$ to the \textit{worklist}.
Then we enter the main loop (lines~\ref{d5}--\ref{d24}).
On line~\ref{d13}, $w([\m f])$ is set to $\m f$ and $n([\m f])$ is assigned $[\m f]$.
The only way that the check on line~\ref{d12} could succeed
would be having a rule with left-hand side $[\m f] \circ [\m f]$
in $\EEV$, which is not the case.
On line~\ref{d22}, we add $([\m f]\EQU, [\m f], \m f)$
to the \textit{worklist} again,
but in the next loop iteration, line~\ref{d11} is reached.
The loop terminates, having recorded the normal form $\m f$
with $\m f \in W([\m f],[\m f])$.
\end{example}

\begin{example}
\label{ex:exp-unr1}
We exhibit a class of TRSs with exponential witness size.
To this end, fix $k > 0$ and consider the rules
\begin{xalignat*}{2}
\m a_k &\to \m b &
\m a_i &\to \m a_{i-1} \circ \m a_{i-1}
\end{xalignat*}
where $0 < i \leq k$.
The check of the first \UNR-condition
will find the two normal forms $\m b$ and $t_k$ of $\m a_k$,
where $t_0 = \m a_0$ and $t_{i+1} = t_i \circ t_i$ for $0 < i \leq k$.
The term $t_k$ has size $2^{k+1}-1$,
but only $k-1$ distinct subterms.
\end{example}
\begin{remark}
The check of the first \UNR-condition (Listing~\ref{lst:unr-1})
is similar to the check of \UNC (Listing~\ref{lst:unc}),
with a few crucial differences:
\begin{itemize}
\item
First, we use the automaton $\x A = (\FF^{[]}, \FF^{[]}, \EER \cup \RCR^-)$
instead of $\CCR$.
\item
Because $\x A$ is not deterministic
(the rules of $\RCR^-$ are $\epsilon$-transitions),
different runs may result in the same term.
Hence the check on line~\ref{d8} is needed,
and witnesses need to be stored,
using maximal sharing for efficient equality tests.
\item
Furthermore, in addition to lines \ref{c11}--\ref{c17} in Listing~\ref{lst:unc},
which correspond to lines \ref{d14}--\ref{d20} in Listing~\ref{lst:unr-1},
we need a similar loop processing the $\epsilon$-transitions from $\RCR^-$,
cf.\ lines~\ref{d21}--\ref{d23} in Listing~\ref{lst:unr-1}.
The latter change increases the complexity from $\OO(\Vert\RR\Vert\log\Vert\RR\Vert)$ to $\OO(\Vert\RR\Vert^2)$.
\end{itemize}
\end{remark}
\begin{figure}
\begin{gather*}
\inferr[base$'$]{s \in W'(p,q)}{s \in W(p',q) & p \uparrow p'} \qquad
\inferr[step$'_\RCR$]{s \in W'(p',q)}{s \in W'(p,q) & p \to p' \in \RCR}
\\
\inferr[step$'_\EER$]{s_1 \circ s_2 \in W'(p,q)}{
s_1 \in W'(p_1,q_1) &
s_2 \in W'(p_2,q_2) &
p_1 \circ p_2 \to p \in \EER &
q_1 \circ q_2 \to q \in \NNR}
\end{gather*}
\caption{Inference rules for $s \in W'(p,q)$}
\label{fig:unr-2}
\end{figure}
\begin{listing}
\begin{algorithmic}[1]
\STATE check first \UNR-condition (obtaining $w(\cdot)$ and $n(\cdot)$), and compute $\uparrow$ \label{e1}
\STATE let $w'(p,q)$ be undefined for all $p \in \FF^{[]}$, $q \in \FF^{[]} \cup \{[\star]\}$ (to be updated below)
\FORALL{$p,q$ with $p \uparrow q$ and $w(q)$ defined} \label{e2}
 \STATE push $(p,n(q),w(q))$ to \textit{worklist} \label{e3}
\COMMENT{(\irule{base$'$})}
\ENDFOR \label{e4}
\WHILE{\textit{worklist} not empty} \label{e5}
 \STATE $(p,q,s) \gets$ pop \textit{worklist} \label{e6}
\COMMENT{$s \in W'(p,q)$}
 \IF[$t \in W'(p,q)$]{$w'(p,q) = t$ is defined} \label{e7}
  \IF{$s = t$ or $t = \infty$} \label{e8}
   \STATE \textbf{continue} at~\ref{e5}\label{e9}
  \ENDIF \label{e10}
  \STATE $w'(p,q) \gets \infty$, $s \gets \infty$ \label{e11}
\COMMENT{$|W'(p,q)| \geq 2$}
 \ELSE \label{e12}
  \STATE $w'(p,q) \gets s$ \label{e13}
 \ENDIF \label{e14}
 \IF{$w(p) = t$ is defined and $t \neq s$} \label{e15}
  \RETURN $\UNR(\RR)$ is false \label{e16}
\COMMENT{second \UNR-condition violated by $t$ and $s$}
 \ENDIF \label{e17}
 \FORALL{$p_1 \circ p_2 \to p_r \in \EER$
   and states $q_1,q_2$ of $\NNR$ with $(p,q) \in \{(p_1,q_1),(p_2,q_2)\}$} \label{e18}
  \IF{$w'(p_1,q_1) = s_1'$ and $w'(p_2,q_2) = s_2'$ are defined} \label{e19}
   \IF{there is a transition $q_1 \circ q_2 \to q_r \in \NNR$} \label{e20}
    \STATE push $(p_r,q_r,s_1' \circ s_2')$ to \textit{worklist} \label{e21}
\COMMENT{(\irule{step$'_\EER$})}
   \ENDIF \label{e22}
  \ENDIF \label{e23}
 \ENDFOR \label{e24}
 \FORALL{rules $p \to p' \in \RCR$} \label{e25}
  \STATE push $(p',q,s)$ to \textit{worklist} \label{e26}
\COMMENT{(\irule{step$'_\RCR$})}
 \ENDFOR \label{e27}
\ENDWHILE \label{e28}
\RETURN{$\UNR(\RR)$ is true} \label{e29}
\end{algorithmic}
\caption{Checking the second \UNR-condition.}
\label{lst:unr-2}
\end{listing}
For the second \UNR-condition,
let $W'(p,q)$ be the set of $\RR^\circ$-normal forms $t \in \TT(\FF^\circ)$
that are accepted by $\NNR$ in state $q$
and satisfy the right part of \eqref{peak-4}, i.e.,
\begin{equation}
\label{peak-6}
p \xl[\EER \cup \RCR]{*} C[p_1,\dots,p_n] \uparrow^\mnshortparallel C[q_1,\dots,q_n] \xl[\EER \cup \RCR^-]{*} t \xr[\NNR]{*} q
\end{equation}
\begin{lemma}
\label{lem:w'}
The predicate $s \in W'(p,q)$
is characterized by the inference rules in Figure~\ref{fig:unr-2}.
\end{lemma}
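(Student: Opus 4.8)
The plan is to prove the two directions separately, in the style of Lemmas~\ref{lem:rc0}, \ref{lem:mc}, and~\ref{lem:w}: the three rules are \emph{sound} (each instance yields a genuine member of $W'$ in the sense of~\eqref{peak-6}) and \emph{complete} (every element of $W'$ is derivable). Soundness is direct: I read each rule top-down and assemble the witness~\eqref{peak-6} for its conclusion from those of its premises. For~\irule{base$'$} I take the single-hole context $C = \square$, so that $s \in W(p',q)$ unfolds to $s \xr[\EER \cup \RCR^-]{*} p'$ and $s \xr[\NNR]{*} q$, and the side condition $p \uparrow p'$ (which lies in $\uparrow^\mnshortparallel$ for a single hole) yields $p \xl[\EER \cup \RCR]{*} p \uparrow^\mnshortparallel p' \xl[\EER \cup \RCR^-]{*} s$. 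For~\irule{step$'_\RCR$} I append the step $p \to p' \in \RCR$ to the top reduction $C[\dots] \xr[\EER \cup \RCR]{*} p$, leaving the rest of the witness intact. For~\irule{step$'_\EER$} I combine the two sub-witnesses with the context $C = C_1 \circ C_2$: the two top reductions together with $p_1 \circ p_2 \to p \in \EER$ reduce $C[\dots]$ to $p$; the congruence rule of the parallel closure $\uparrow^\mnshortparallel$ merges the two meetings; the two bottom reductions run in parallel positions; and $q_1 \circ q_2 \to q \in \NNR$ extends the runs to $s_1 \circ s_2 \xr[\NNR]{*} q$.

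For completeness I would argue by induction on the pair $(|s|, \ell)$ ordered lexicographically, where $\ell$ is the length of the top reduction $C[\dots] \xr[\EER \cup \RCR]{*} p$ in a chosen witness~\eqref{peak-6} (with $t = s$), distinguishing cases by the last step of that reduction and the shape of $C$. If its last step is an $\RCR$-step $p'' \to p$, I peel it off with~\irule{step$'_\RCR$} and apply the induction hypothesis to the witness of $s \in W'(p'',q)$ with shorter top reduction. If the top reduction is empty, then $C[\dots] = p \in \FF^{[]}$ forces $C = \square$, so $s \xr[\EER \cup \RCR^-]{*} p'$ for the single hole filler $p'$ with $p \uparrow p'$; together with $s \xr[\NNR]{*} q$ this gives $s \in W(p',q)$, and~\irule{base$'$} applies. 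Otherwise the last step is an $\EER$-step $p_1 \circ p_2 \to p$, forcing $C$ to be rooted at $\circ$; then, since $\EER \cup \RCR^-$ steps never create a $\circ$-root out of a constant, the $\circ$-rooted target $C[q_1,\dots,q_n]$ forces $s = s_1 \circ s_2$, the bottom reduction and the parallel meeting split along the two arguments, and determinism of $\NNR$ factors the run as $s_i \xr[\NNR]{*} q_i$ with $q_1 \circ q_2 \to q \in \NNR$. The induction hypothesis yields $s_i \in W'(p_i,q_i)$, and~\irule{step$'_\EER$} concludes.

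I expect the main obstacle to be the bookkeeping in the completeness direction, namely reorganising the witness~\eqref{peak-6} so that exactly one rule applies at the top. As in Lemma~\ref{lem:rc0}, this hinges on the fact that $\RCR$ rewrites only constants of $\FF^{[]}$ while every $\EER$-rule is rooted by a symbol of $\FF^\circ$; this is what lets me separate the final root-level $\EER$-collapse from the $\RCR$-steps and from the parallel meetings, and what guarantees that a $\circ$-rooted meeting point can only arise from a $\circ$-rooted source $s$. A secondary point is that the meetable middle $\uparrow^\mnshortparallel$ is not consumed step by step but absorbed wholesale by~\irule{base$'$} once the context has been reduced to a single hole, which is where reflexivity of $\uparrow$ and the inclusion $\uparrow \subseteq \uparrow^\mnshortparallel$ are used.
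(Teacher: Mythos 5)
Your proof is correct and follows essentially the same route as the paper's: an inductive case analysis on the last step of the top reduction $C[p_1,\dots,p_n] \xr[\EER \cup \RCR]{*} p$ in the witness~\eqref{peak-6} (empty prefix gives \irule{base$'$}, an $\RCR$-step is peeled off by \irule{step$'_\RCR$}, and a root $\EER$-step splits the witness componentwise for \irule{step$'_\EER$}), with the parallel meeting $\uparrow^\mnshortparallel$ absorbed wholesale by \irule{base$'$} once the context is a single hole. The only detail you gloss over is the degenerate root $\EER$-step $c \to [c]$ with $c \in \FF$ a constant, which forces $t = c$ and is again subsumed by \irule{base$'$}; the paper dispatches this case in a single clause.
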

\proof
The inference rules follow by an inductive analysis on the left-most step of the
$p \xl[\EER \cup \RCR]{*} C[p_1,\dots,p_n]$ subreduction
of~\eqref{peak-6}:
\begin{itemize}[align=left]
\item[(\irule{base$'$})]
If the sequence is empty,
we have $p = p_1$, $p_1 \uparrow q_1$, and $t \in W(q_1,q)$.
Conversely,
we have $t \in W'(p,q)$ whenever $t \in W(p',q)$ and $p \uparrow p'$.
\item[(\irule{step$'_\RCR$})]
If the leftmost step is an $\RCR$ step, we have
$p \xl[\RCR]{} p' \xl[\EER \cup \RCR]{*} C[q_1,\dots,q_n]$,
and $t \in W'(p,q)$ for some $q$;
in that case, $t \in W'(p',q)$ follows.
\item[(\irule{step$'_\EER$})]
If the leftmost step is an $\EER$ step,
then either $p \xl[\EER]{} t \in \FF$,
but that case is already covered by (\irule{base$'$}),
or $p \xl[\EER]{} p_1 \circ p_2$, $t = t_1 \circ t_2$,
and there are states $q_1$, $q_2$
with $t_1 \in W'(p_1,q_1)$, $t_2 \in W'(p_2,q_2)$,
and a transition $q_1 \circ q_2 \to q_r \in \NNR$.
\qed
\end{itemize}
Because $W'(p,q)$ may be an infinite set,
we instead compute the partial function $w'(p,q)$ that returns
$s$ if $W'(p,q) = \{s\}$ is a singleton set,
or the special value $\infty$ if $W'(p,q)$ has at least two elements,
where $\infty$ is distinct from any term and satisfies
$\infty \circ t = t \circ \infty = \infty \circ \infty = \infty$
for all terms $t$.
The system has the \UNR property if $w'(p,q) = w(p)$ whenever $w'(p,q)$ and $w(p)$ are both defined.

The procedure is given in Listing~\ref{lst:unr-2}.
It maintains a \textit{worklist} of tuples $(p,q,s)$ where
$s$ is either a term with $s \in W'(p,q)$,
or has the special value $s = \infty$.
Note that each value $w'(p,q)$ may be updated up to two times:
it starts out as undefined, may be updated to an element of $W'(p,q)$,
and later to $\infty$ if $W'(p,q)$ has at least two elements.
These updates are performed by lines~\ref{e7}--\ref{e14}.

\begin{example}[continued from Examples~\ref{running-mc} and~\ref{running-unr1}]
\label{running-unr2}
Let us check the second \UNR-condition for
$\UU^\circ = \{ \mfa \to \m a, \mfa \to \m b, \m a \to \m a \}$
according to Listing~\ref{lst:unr-2}.
On line~\ref{e3}, we put the following items on the \textit{worklist}:
\begin{itemize}
\item
$([\m f],[\m f],\m f)$,
because $w([\m f]) = \m f$ and $[\m f]\uparrow[\m f]$;
\item
$([\m b],[\m b],\m b)$,
because $w([\m b]) = \m b$ and $[\m b]\uparrow[\m b]$,
and $w([\mfa]) = \m b$ and $[\m b]\uparrow[\mfa]$;
\item
$([\m a],[\m b],\m b)$,
because $w([\m b]) = \m b$ and $[\m a]\uparrow[\m b]$,
and $w([\mfa]) = \m b$ and $[\m a]\uparrow[\mfa]$; and
\item
$([\mfa],[\m b],\m b)$,
because $w([\m b]) = \m b$ and $[\mfa]\uparrow[\m b]$,
and $w([\mfa]) = \m b$ and $[\mfa]\uparrow[\mfa]$
\end{itemize}
Ignoring duplicates on the \textit{worklist}
(which are skipped on line~\ref{e9}),
the main loop (lines \ref{e5}--\ref{e28}) is executed 3 times.
\begin{enumerate}
\item
$([\m f],[\m f],\m f)$ is taken from the \textit{worklist},
and we let $w'([\m f],[\m f]) = \m f$.
$([\m f],[\m f],\m f)$ is added to the \textit{worklist} on line~\ref{e28}.
\item
$([\m a],[\m b],\m b)$ is taken from the \textit{worklist},
and we let $w'([\m a],[\m b]) = \m b$.
Now line~\ref{e21} is reached for
$[\m f] \circ [\m a] \to [\mfa] \in \EEU$,
$w'([\m f],[\m f]) = \m f$, $w'([\m a],[\m b]) = \m b$,
and $[\m f] \circ [\m b] \to [\star] \in \NNU$.
Hence $([\mfa],[\star],\m f \circ \m b)$ is added to the \textit{worklist}.
Furthermore, $([\m a],[\m b],\m b)$ is added to the \textit{worklist}
on line~\ref{e28} because $[\m a] \to [\m a] \in \RCU$.
\item
$([\mfa],[\star],\m f \circ \m b)$ is taken from the \textit{worklist}.
We set $w'([\mfa],[\star]) = \m f \circ \m b$.
The check on line~\ref{e15} succeeds
($w([\mfa]) = \m b \neq \m f \circ \m b$),
so the second \UNR-condition is not satisfied.
\end{enumerate}
We conclude that $\UNR(\UU)$ does not hold,
a fact witnessed by
$\m f \circ \m b \xl{} \m f \circ (\m f \circ \m a) \xr{} \m f \circ \m a \xr{} \m b$.

For $\VV$, $[\m f]\uparrow[\m f]$ is the only meetable constant relation
with defined witness $w(q)$,
so $([\m f],[\m f],\m f)$ is the only item ever pushed to the \textit{worklist}
(twice, because $[\m f] \to [\m f] \in \RCV$),
corresponding to $\m f \in W'([\m f],[\m f])$.
The second \UNR-condition holds for $\VV$,
implying $\UNR(\VV)$.
\end{example}

In order to achieve the desired complexity,
care must be taken with the enumeration on line~\ref{e18}:
instead of iterating over all elements of $\EER$,
an index mapping $p$ to the rules $p_1 \circ p_2 \to p_r \in \EER$
with $p \in \{p_1,p_2\}$ should be used.

The following two lemmas establish key invariants for showing
that the procedures in Listings~\ref{lst:unr-1} and~\ref{lst:unr-2}
faithfully implement the inference rules in Figures~\ref{fig:unr-1} and~\ref{fig:unr-2},
respectively.

\begin{lemma}
\label{lem:unr-inv-1}
Whenever line~\ref{d5} is reached in Listing~\ref{lst:unr-1}, we have
\begin{enumerate}
\item
If $(p,q,s) \in \textit{worklist}$ or $s = w(p)$ is defined and $q = n(p)$,
then $s \in W(p,q)$ holds.
\item
Assume that $\hat s \in W(\hat p,\hat q)$ can be inferred using
an inference rule from Figure~\ref{fig:unr-1}
with premises $P_1,\dots,P_n$.
Then either $(\hat p,\hat q,\hat s) \in \textit{worklist}$,
or $\hat s = w(\hat p)$ and $\hat q = n(\hat p)$,
or there is a premise $P_i = s' \in W(p',q')$
such that $w(p')$ is undefined or $s' \neq w(p')$.
\end{enumerate}
\end{lemma}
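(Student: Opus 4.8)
The plan is to prove invariants~(1) and~(2) simultaneously by induction on the number of times line~\ref{d5} is reached, checking that they hold after initialization (lines~\ref{d2}--\ref{d4}) and are preserved by every execution of the loop body (lines~\ref{d6}--\ref{d24}) that does not return. Throughout I would use that $\NNR$ is deterministic, so that $s \in W(p,q)$ determines $q$ as the unique state with $s \xr[\NNR]{*} q$.

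For the base case, after initialization the \textit{worklist} holds exactly the triples $([c],[c],c)$ for constant normal forms $c \trianglelefteq \RR^\circ$, and $w,n$ are undefined. Invariant~(1) holds since $c \in W([c],[c])$ by (\irule{base}); for invariant~(2), a (\irule{base}) inference has its conclusion on the \textit{worklist}, while every (\irule{step$_\RCR$}) or (\irule{step$_\EER$}) inference has a premise $s' \in W(p',q')$ with $w(p')$ undefined, so the third alternative applies.

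For the inductive step I would pop $(p,q,s)$, so $s \in W(p,q)$ by invariant~(1). If $w(p)$ is already defined as $t$, then either $s \neq t$ and we return (so line~\ref{d5} is not reached again), or $s=t$, whence $q=n(p)$ by determinism; removing $(p,q,s)$ preserves both invariants, as this conclusion is now recorded via $w(p),n(p)$ and no premise confirmations change. In the remaining case $w(p)$ is undefined, and we set $w(p)\gets s$, $n(p)\gets q$ before running the loops. Invariant~(1) for the new assignment and for the triples pushed on lines~\ref{d17} and~\ref{d22} is routine: it follows from the side conditions of (\irule{step$_\EER$}) and (\irule{step$_\RCR$}), using the induction hypothesis to witness the premises $w(p_1),w(p_2)$ on line~\ref{d17} and the freshly established $s\in W(p,q)$ on line~\ref{d22}.

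The main obstacle is invariant~(2) in this last case. Defining $w(p)$ can invalidate the third alternative of an inference only through a premise $s'\in W(p',q')$ with $p'=p$, and by determinism such a premise becomes confirmed precisely when $s'=s$ (so $q'=n(p)$). For a newly confirmed (\irule{step$_\RCR$}) premise the conclusion $s\in W(p'',q)$ for each $p''\to p\in\RCR$ is pushed on line~\ref{d22}; for a (\irule{step$_\EER$}) inference, if its other premise is not confirmed the third alternative still applies through it, while if it is confirmed the loop on lines~\ref{d14}--\ref{d20}, ranging over all $\EER$-rules with $p\in\{p_1,p_2\}$ and checking $w(p_1),w(p_2)$ and the required $\NNR$-transition, pushes the conclusion. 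The symmetric test $p\in\{p_1,p_2\}$ is exactly what guarantees this push happens when the \emph{last} of the two premises becomes confirmed, in either order, and determinism of $\NNR$ ensures $n(p_1)\circ n(p_2)\to q_r$ is the relevant transition; together these close the argument.
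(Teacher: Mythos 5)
Your proof is correct and takes essentially the same route as the paper: establish both invariants after initialization, then preserve them through the loop body, with the key case analysis on how invariant~(2) could be invalidated (the popped triple itself becoming recorded in $w,n$, versus a premise becoming confirmed by the new assignment $w(p) \gets s$), using determinism of $\NNR$ and the symmetric test $p \in \{p_1,p_2\}$ on line~\ref{d14} to show the conclusion is pushed when the last premise is confirmed. No gaps worth noting.
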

\proof
For the first invariant,
first note that $w(p)$ and $n(p)$ are updated simultaneously on line~\ref{d13}
with values that are taken from the \textit{worklist} on line~\ref{d6},
so we may focus on the addition of items to the \textit{worklist},
which happens on lines~\ref{d3}, \ref{d17} and~\ref{d22}.
On line~\ref{d3}, $c$ is a normal form,
and $c \in W([c],[c])$ holds by rule (\irule{base}).
On line~\ref{d17}, we have $s_1 \in W(p_1,q_1)$ and $s_2 \in W(p_2,q_2)$,
and rules $p_1 \circ p_2 \to p_r \in \EER$ and $q_1 \circ q_2 \to q_r \in \NNR$,
allowing us to infer $s_1 \circ s_2 \in W(p_r,q_r)$ by (\irule{step$_\EER$}).
Similarly, on line~\ref{d22},
we have $s \in W(p,q)$ and $p' \to p \in \RCR$,
and $s \in W(p',q)$ follows by (\irule{step$_\RCR$}).

Consider the second invariant
immediately after the loop on lines~\ref{d1}--\ref{d4}.
If $\hat s \in W(\hat p,\hat q)$ can be derived by (\irule{base}),
then it is put on the \textit{worklist} by that loop.
All other inferences have a premise $s' \in W(p',q')$
for which $w(p')$ is undefined, since $w(\cdot)$ is nowhere defined.
So initially, the invariant holds.
Noting that once $w(\hat p)$ is set, it will never be changed,
the invariant can be invalidated in only two ways.
\begin{enumerate}
\item
$(\hat p,\hat q,\hat s) \in \textit{worklist}$ is the item taken from the \textit{worklist}
on line~\ref{d6}.
In this case,
either the algorithm aborts early on line~\ref{d9},
or we reach line~\ref{d11},
which ensures $w(\hat p) = \hat s$ and $n(\hat p) = \hat q$ since $\hat s$ determines $\hat q$,
or we reach line~\ref{d13},
which assigns $w(\hat p) = \hat s$ and $n(\hat p) = \hat q$.
So the invariant is maintained.
\item
There is a premise $s' \in W(p',q')$
and $w(p')$ is assigned $s'$ on line~\ref{d13};
in that case, $(p',q',s')$ must be
the most recent item taken from the \textit{worklist} on line~\ref{d6}.
This can only happen if $\hat s \in W(\hat p,\hat q)$
is derived by one of the rules
(\irule{step$_\EER$}) or (\irule{step$_\RCR$}) in the last step.

If the (\irule{step$_\EER$}) rule is used,
let us assume that $s' = s_1$, $p' = p_1$ and $q' = q_1$
(the case that $s' = s_2$, $p' = p_2$ and $q' = q_2$ is completely analogous).
So $s_1 \in W(p_1,q_1)$ holds.
If the other premise $s_2 \in W(p_2,q_2)$ does not satisfy $w(p_2) = s_2$,
then the invariant remains true.
If both $w(p_1) = s_1$ and $w(p_2) = s_2$,
then $n(p_1) = q_1$ and $n(p_2) = q_2$
follow (because $\NNR$ is deterministic);
since $\hat s \in W(\hat p,\hat q)$ is derivable by (\irule{step$_\EER$}),
there must also be rules $p_1 \circ p_2 \to \hat p \in \EER$
(where $(p',q')$ is one of $(p_1,q_1)$ or $(p_2,q_2)$)
and $q_1 \circ q_2 \to \hat q \in \NNR$.
Consequently, $(\hat p = p_r,\hat q = q_r,\hat s = s_1 \circ s_2)$ will be added to the \textit{worklist} on line~\ref{d17}.

If the (\irule{step$_\RCR$}) is used,
$\hat q = q'$ holds and
there must be a step $\hat p \to p' \in \RCR$;
hence $(\hat p,\hat q,\hat s)$ will be put on the \textit{worklist} on line~\ref{d22}.
\qed
\end{enumerate}

\begin{lemma}
\label{lem:unr-inv-2}
Whenever line~\ref{e5} is reached in Listing~\ref{lst:unr-2}, we have
\begin{enumerate}
\item
If $(p,q,s) \in \textit{worklist}$ or $s=w'(p,q)$ is defined,
then $s \in W'(p,q)$ or $s = \infty$ and $|W'(p,q)| > 1$.
\item
Assume that $\hat s \in W'(\hat p,\hat q)$ can be inferred using
an inference rule from Figure~\ref{fig:unr-2}
with premises $P_1,\dots,P_n$.
Then $(\hat p,\hat q,\hat s) \in \textit{worklist}$,
$w'(\hat p,\hat q) = \hat s$, $w'(\hat p,\hat q) = \infty$,
or there is a premise $P_i = s' \in W'(p',q')$ such that $w'(p',q')$ is
not equal to $s'$ or $\infty$.
\end{enumerate}
\end{lemma}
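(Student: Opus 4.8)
The plan is to follow the proof of Lemma~\ref{lem:unr-inv-1} closely, since the two statements are structurally parallel; the genuinely new ingredients are the auxiliary value $\infty$, the fact that each entry $w'(p,q)$ may be written twice (first to a term, then to $\infty$), and that the base case now feeds off the predicate $W$ rather than $W'$, through the rule (\irule{base$'$}).

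For the first (soundness) invariant I would inspect the three points where the \textit{worklist} grows---lines~\ref{e3}, \ref{e21}, and~\ref{e26}---and the two points where $w'$ is assigned---lines~\ref{e13} and~\ref{e11}. On line~\ref{e3} the pushed triple $(p,n(q),w(q))$ is justified by (\irule{base$'$}): Lemma~\ref{lem:unr-inv-1} gives $w(q)\in W(q,n(q))$, and together with $p\uparrow q$ this yields $w(q)\in W'(p,n(q))$. Line~\ref{e21} is an instance of (\irule{step$'_\EER$}) and line~\ref{e26} of (\irule{step$'_\RCR$}); in each case I would also discharge the $\infty$ bookkeeping, using $\infty\circ t=t\circ\infty=\infty$ and the observation that composing a single term with a set of at least two elements again yields a set of at least two, so that a recorded value $\infty$ always matches the claim $|W'(p,q)|>1$. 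The assignment on line~\ref{e13} merely copies a \textit{worklist} value, while line~\ref{e11} writes $\infty$ exactly when two distinct witnesses for the same pair $(p,q)$ have been seen, certifying $|W'(p,q)|>1$.

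For the second (completeness) invariant I would first check it right after the initialization loop on lines~\ref{e2}--\ref{e4}. Every instance of (\irule{base$'$}) has a $W$-premise which, because the first \UNR-condition has already passed, must be of the form $w(q)\in W(q,n(q))$, and all such instances are enumerated and pushed; every instance of (\irule{step$'_\RCR$}) or (\irule{step$'_\EER$}) has a premise $s'\in W'(p',q')$ with $w'(p',q')$ still undefined, so the final disjunct holds. For maintenance, the invariant can be disturbed only when a triple is popped on line~\ref{e6}, threatening its \textit{worklist} disjunct, or when some $w'(p',q')$ is assigned on line~\ref{e13} or~\ref{e11}, threatening the final disjunct for a premise reading $s'\in W'(p',q')$. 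The popped-triple case is restored at once: after lines~\ref{e7}--\ref{e14} we always have $w'(\hat p,\hat q)=\hat s$ or $w'(\hat p,\hat q)=\infty$. The assignment case is handled by the loops on lines~\ref{e18}--\ref{e24} and~\ref{e25}--\ref{e27}, which fire exactly the (\irule{step$'_\EER$}) and (\irule{step$'_\RCR$}) inferences that use the newly-assigned premise and push their conclusions back onto the \textit{worklist}; if the partner premise of a (\irule{step$'_\EER$}) inference is not yet recorded with the matching witness, the final disjunct instead survives through that partner.

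The delicate point---and the step I expect to be the main obstacle---is the interaction of these loops with the absorbing value $\infty$. If the assignment just made is $w'(p_1,q_1)=\sigma$ and a rule $p_1\circ p_2\to p_r\in\EER$ with transition $q_1\circ q_2\to q_r\in\NNR$ has partner witness $w'(p_2,q_2)=\infty$, then the push on line~\ref{e21} produces $(p_r,q_r,\infty)$ rather than the individual term $\sigma\circ s_2$ for a particular $s_2\in W'(p_2,q_2)$, so the \textit{worklist} disjunct is not literally reinstated for that term-level fact. I would close this gap by arguing that $w'(p_2,q_2)=\infty$ forces $|W'(p_r,q_r)|>1$ (by soundness and the composition property above), so that once the pushed $\infty$-triple is processed $w'(p_r,q_r)$ becomes $\infty$ and the third disjunct then holds uniformly for every term-level fact at $(p_r,q_r)$. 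Accordingly I would phrase the completeness invariant so that an $\infty$-triple on the \textit{worklist}, or the value $w'(p_r,q_r)=\infty$, already counts as covering all of $W'(p_r,q_r)$, and then verify that this reading survives each line of the loop body; with that convention the remaining cases collapse to the same analysis as in Lemma~\ref{lem:unr-inv-1}.
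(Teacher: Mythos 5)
Your proposal is correct and takes essentially the same route as the paper: soundness by inspecting the worklist pushes on lines~\ref{e3}, \ref{e21} and~\ref{e26} and the assignments on lines~\ref{e11} and~\ref{e13}, and completeness by verifying the invariant after initialization and then tracking the ways each disjunct can be invalidated, with the loops on lines~\ref{e18}--\ref{e27} re-firing the affected inferences. The $\infty$ subtlety you single out is genuine, and the paper resolves it just as you propose---a pushed triple $(p_r,q_r,\infty)$ is taken to cover every term-level fact at $(p_r,q_r)$ because processing it forces $w'(p_r,q_r)=\infty$---though the paper spells this out less explicitly than you do.
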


\proof
Consider the first invariant.
First note that $w'(p,q)$ is only updated on lines~\ref{e11} and~\ref{e13}.
In this case, $(p,q,s) \in \textit{worklist}$ was true
at the beginning of the loop,
so $s \in W'(p,q)$ or $s = \infty$ and $|W'(p,q)| > 1$.
This justifies setting $w'(p,q) = s$ on line~\ref{e13}.
On line~\ref{e11}, we additionally have $t \in W'(p,q)$;
we conclude that $|W'(p,q)| > 1$
(justifying $w'(p,q) = \infty$)
because either $s = \infty$, or $s \neq t$ and $s,t \in W'(p,q)$.

Hence we may focus on the items put on the \textit{worklist}.
On line~\ref{e3}, since $w(q) \in W(q,n(q))$ and $p \uparrow q$,
$w(q) \in W'(p,n(q))$ follows.
On line~\ref{e21}, we have $p_1 \circ p_2 \to p_r \in \EER$
and $q_1 \circ q_2 \to q_r \in \EER$.
Moreover, we have either $s_1' \in W'(p_1,q_1)$ or $s_1' = \infty$ and $|W'(p_1,q_1)| > 1$; and
either $s_2' \in W'(p_2,q_2)$ or $s_2' = \infty$ and $|W'(p_2,q_2)| > 1$.
If neither $s_1' = \infty$ nor $s_2' = \infty$,
then we have $s_1' \circ s_2' \in W'(p,q)$ by (\irule{step$_\EER$}).
Otherwise, since we can derive $t_1 \circ t_2 \in W'(p,q)$ by (\irule{step$_\EER$})
for any $t_1 \in W'(p_1,q_1)$ and $t_2 \in W'(p_2,q_2)$,
$|W'(p,q)| > 1$ follows, and $s_1' \circ s_2' = \infty$.
So the invariant holds.
Finally, on line~\ref{e26}, we have $p \to p' \in \RCR$,
and
either $s \in W'(p,q)$ or $s = \infty$ and $|W'(p,q)| > 1$.
In the former case, $s \in W'(p',q)$ by (\irule{step$_\RCR$}),
while in the latter case, $t \in W'(p',q)$ for any $t \in W'(p,q)$;
either way, the invariant holds.

Next we consider the second invariant.
Immediately after the loop on lines~\ref{e2}--\ref{e4},
if $\hat s \in W'(\hat p,\hat q)$ follows by (\irule{base$'$}) then
$(\hat p,\hat q,\hat s)$ will be on the \textit{worklist}.
For all other inferences of some $\hat s \in W'(\hat p,\hat q)$,
there is a premise $s' \in W'(p',q')$ such that $w'(p',q')$ is undefined,
because initially, $w'$ is nowhere defined.
The invariant for $\hat s \in W'(\hat p,\hat q)$ may be invalidated in three ways.
\begin{enumerate}
\item
$(\hat p,\hat q,\hat s)$ is
the item taken from the \textit{worklist} on line~\ref{e6}.
In this case, lines~\ref{e7}--\ref{e14} ensure
that $w'(\hat p,\hat q) = \hat s$ or $w'(\hat p,\hat q) = \infty$
at the next loop iteration.
\item
$w'(\hat p,\hat q) = \hat s$ holds and the value of $w'(\hat p,\hat q)$ is updated;
this may only happen on line~\ref{e11},
and the invariant still holds with $w'(\hat p,\hat q) = \infty$.
\item
There is a premise $s' \in W'(p',q')$,
and $w'(p',q')$ is set to $s'$ or $\infty$ on line~\ref{e11} or~\ref{e13}.
This means that the item taken from the \textit{worklist} on line~\ref{e6}
satisfies $p = p'$, $q = q'$, and $s = s'$ or $s = \infty$.
Note that $s = w'(p,q)$ holds at line~\ref{e15}.

If (\irule{step$'_{\EER}$}) is used to infer $\hat s \in W'(\hat p,\hat q)$,
let us assume that $s' = s_1$, $p' = p_1$ and $q' = q_1$;
(the case that $s' = s_2$, $p' = p_2$ and $q' = q_2$ is analogous).
If $w'(p_2,q_2)$ is not equal to $s_2$ nor $\infty$,
then the invariant is maintained.
Note that we have $p_1 \circ p_2 \to \hat p \in \EER$ and
$q_1 \circ q_2 \to \hat q \in \NNR$,
with $(p',q') \in \{(p_1,q_1),(p_2,q_2)\}$.
Hence line~\ref{e21} is reached with $p_r = \hat p$, $q_r = \hat q$.
At that point, $s_1 = w'(p_1,q_1)$, $s_2 = w'(p_2,q_2)$.
So either $s_1 \circ s_2 = \hat s$, or $s_1 \circ s_2 = \infty$
so putting the item $(p_r,q_r,s_1 \circ s_2)$ on the \textit{worklist}
restores the invariant.

If (\irule{step$'_{\RCR}$}) is used to infer $\hat s \in W'(\hat p,\hat q)$,
then $q' = \hat q$ and there must be a rule $p' \to \hat p \in \RCR$.
Hence line~\ref{e26} will put $(\hat p,\hat q,s)$ with
$s = \hat s$ or $s = \infty$ on the \textit{worklist},
restoring the invariant.
\qed
\end{enumerate}

\begin{theorem}
The procedure in Figures~\ref{lst:unr-1} and~\ref{lst:unr-2} is correct
and takes $\OO(\Vert\RR\Vert^3 \log \Vert\RR\Vert)$ time.
\end{theorem}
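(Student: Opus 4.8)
The plan is to decouple correctness from complexity and to obtain correctness by stitching together the structural lemmas already proved, rather than arguing about the code directly. By Lemma~\ref{lem-unr} the system is \UNR iff both \UNR-conditions hold, so it suffices to show that Listing~\ref{lst:unr-1} decides the first condition (and, when it passes, computes $w$ and $n$ correctly) and that Listing~\ref{lst:unr-2} decides the second. For the first condition: if the procedure returns false on line~\ref{d9}, then by the first invariant of Lemma~\ref{lem:unr-inv-1} the popped triple and the stored witness give $s \in W(p,q)$ and $w(p)=t \in W(p,n(p))$ with $s \neq t$; unfolding $W$ via Lemma~\ref{lem:w} produces distinct normal forms with $s \xr[\EER \cup \RCR^-]{*} p \xl[\EER \cup \RCR^-]{*} t$, violating the first \UNR-condition. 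Conversely, if the worklist empties without returning false, I would prove by induction on derivations that every derivable $s \in W(p,q)$ has $s=w(p)$ and $q=n(p)$: by the induction hypothesis the premises match $w$ and $n$, so the third disjunct of the second invariant of Lemma~\ref{lem:unr-inv-1} fails, and the empty worklist rules out the first, forcing $s=w(p)$. Hence $w(p)$ is the unique normal form reaching $p$, and the first condition holds.

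The second condition is handled symmetrically, using $w$, $n$ as fully precomputed inputs. At normal termination (line~\ref{e29}) the worklist is empty, so an analogous induction with the second invariant of Lemma~\ref{lem:unr-inv-2} shows that every derivable $s \in W'(p,q)$ satisfies $w'(p,q) \in \{s,\infty\}$. If the second \UNR-condition were violated, peak~\eqref{peak-4} would yield, via Lemmas~\ref{lem:w}, \ref{lem:w'} and~\ref{lem:mc}, a meeting constant $p$ with $w(p)=s$ and a derivable $t \in W'(p,q)$, $t \neq s$; then $w'(p,q)$ is set to $t$ or $\infty$, and the check on line~\ref{e15} at the moment of that assignment would have fired (since $w(p)$ is already defined and differs), contradicting normal termination. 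Conversely, when line~\ref{e16} returns false we reassemble an instance of~\eqref{peak-4} with distinct $s=w(p)$ and a witness from $W'(p,q)$, extracting a concrete second element when $w'(p,q)=\infty$ from $|W'(p,q)| \geq 2$. Termination of both loops follows from the complexity bound.

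For complexity, write $n = \Vert\RR\Vert$. Preprocessing costs $\OO(n^3)$ for $\RCR$ and for $\uparrow$, and $\OO(n\log n)$ for $\NNR$. In Listing~\ref{lst:unr-1} each $w(p)$ is assigned once, so lines~\ref{d14}--\ref{d23} run once per $p$, and the (\irule{step$_\RCR$}) loop emits $\sum_p |\{p' : p' \to p \in \RCR\}| = \OO(n^2)$ items, dominating that listing at $\OO(n^2)$. The critical cost is Listing~\ref{lst:unr-2}: there are $\OO(n^2)$ pairs $(p,q)$ and each $w'(p,q)$ is updated at most twice, so lines~\ref{e18}--\ref{e27} execute $\OO(n^2)$ times. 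The (\irule{step$'_\RCR$}) loop then contributes $\sum_{(p,q)} |\{p' : p \to p' \in \RCR\}| = \OO(n)\cdot\OO(n^2) = \OO(n^3)$ pushes. For (\irule{step$'_\EER$}) the key point is that each combination of a rule $p_1 \circ p_2 \to p_r \in \EER$ and a transition $q_1 \circ q_2 \to q_r \in \NNR$ is handled $\OO(1)$ times---only when $(p_1,q_1)$ or $(p_2,q_2)$ is updated; since $|\EER| = \OO(n)$ and $\NNR$ has $\OO(n^2)$ transitions, this bounds the work by $\OO(n^3)$. Thus $\OO(n^3)$ items are pushed overall, each push costing an $\NNR$-lookup through $\delta$ and a maximally shared construction $s_1' \circ s_2'$, both $\OO(\log n)$, while the equality tests on lines~\ref{e8} and~\ref{e15} are $\OO(1)$. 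Multiplying yields $\OO(n^3\log n)$, which subsumes all other terms.

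The main obstacle I expect is this last accounting for (\irule{step$'_\EER$}): one must ensure that the enumeration on line~\ref{e18}, even when realized through indices on $\EER$ and a sweep over the $\OO(n)$ states of $\NNR$, never exceeds $\OO(n^3)$ iterations, using $\sum_p |\{\text{rules of }\EER\text{ with argument }p\}| = \OO(n)$ together with the $\OO(n^2)$ bound on $\NNR$-transitions. Tightly coupled with this is the role of maximal sharing: Example~\ref{ex:exp-unr1} shows the witnesses stored in $w$, $n$ and $w'$ can be exponentially large, so only hash consing keeps construction at $\OO(\log n)$ and equality at $\OO(1)$; without it the complexity bound would collapse.
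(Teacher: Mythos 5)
Your proposal is correct and follows essentially the same route as the paper's proof: correctness is obtained from the invariants of Lemmas~\ref{lem:unr-inv-1} and~\ref{lem:unr-inv-2} (the first invariant justifying the early returns, the second plus an induction on derivations and the empty worklist justifying the positive answer, combined via Lemma~\ref{lem-unr}), and the complexity is obtained by counting updates of $w$ and $w'$, charging the $(\irule{step'_\EER})$ enumeration at $\OO(n^3)$ via indexing of $\EER$, and attributing the $\log n$ factor to the $\delta$-lookups and hash-consed term construction. Your per-combination accounting of the $\EER$/$\NNR$ enumeration and your explicit appeal to Example~\ref{ex:exp-unr1} for the necessity of maximal sharing are only cosmetic variations on the paper's argument.
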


\begin{proof}
Consider the check of the first \UNR-condition (Listing~\ref{lst:unr-1}).
Note that by the first invariant of Lemma~\ref{lem:unr-inv-1},
the check on line~\ref{d8} succeeds only
if the first \UNR-condition is violated,
since at that point $s \in W(p,q)$, $t \in W(p,n(q))$ and $s \neq t$,
so $s \xr[\EER \cup \RCR^-]{*} p \xl[\EER \cup \RCR^-]{*} t$ and
$s$ and $t$ are $\RR^\circ$-normal forms.
When the main loop exits, the \textit{worklist} is empty,
making the case that $(p,q,s) \in \textit{worklist}$
in the second invariant impossible.
Therefore,
we can show by induction on the derivation
that for any derivation of $s \in W(p,q)$
by the inference rules in Figure~\ref{fig:unr-1},
$w(p) = s$ holds,
using the second invariant.
Consequently, the resulting partial function $w$
witnesses the fact that the first \UNR-condition holds.
Therefore, the check of the first \UNR-condition is correct.

Now look at the check of the second \UNR-condition (Listing~\ref{lst:unr-2}).
Using the first invariant of Lemma~\ref{lem:unr-inv-2},
we see that the check on line~\ref{e15}
succeeds only if the second \UNR-condition is violated,
since at that point, either $s \in W'(p,q)$ and $s \neq w(p)$,
or $|W'(p,q)| > 1$,
ensuring that $W'(p,q)$ contains an element distinct from $w(p)$.
On the other hand,
if we reach line~\ref{e29},
the \textit{worklist} is empty,
and by induction on the derivation we can show that
for all derivations of $s \in W'(p,q)$ using the rules in Figure~\ref{fig:unr-2},
either $w'(p,q) = s$ or $w'(p,q) = \infty$,
using the second invariant.
Furthermore, the check on line~\ref{e15} has failed
for all defined values of $w'(p,q)$,
which means that whenever both $w'(p,q)$ and $w(p)$ are defined,
then they are equal; in particular, $w'(p,q) \neq \infty$.
Therefore, in~\eqref{peak-6}, if $w(p)$ is defined,
we must have $t = w'(p,q) = w(p) = s$,
and the second \UNR-condition follows, establishing \UNR by Lemma~\ref{lem-unr}.

Next we establish the complexity bound. Let $n = \Vert\RR\Vert$.
We claim that the check of the first \UNR-condition (Listing~\ref{lst:unr-1})
takes $\OO(n^3)$ time.
First note that the precomputation (line~\ref{d1}) can be performed cubic time.
Moreover, the bottom part of the main loop (lines~\ref{d13}--\ref{d23})
is executed at most $\OO(n)$ times, once for each possible value of $p$.
So even without indexing the rules of $\EER$,
the bottom part takes at most $\OO(n^2 \log n)$ time,
where the $\log n$ factor stems from the query of $\NNR$ and
the maintenance of maximal sharing when constructing $s_1 \circ s_2$.
Furthermore, only $\OO(n^2)$ items are ever added to the \textit{worklist},
so the top part of the loop (lines~\ref{d5}--\ref{d12}) also takes
$\OO(n^2)$ time. Overall, the check of the first \UNR-condition is
dominated by the cubic time precomputation.

For the complexity second \UNR-condition (Listing~\ref{lst:unr-2}),
note that computation of $\uparrow$ takes $\OO(n^3)$ time.
We focus on the main loop (lines~\ref{e4}--\ref{e28}).
Because $w'(p,q)$ is updated at most twice for each combination $(p,q)$,
lines~\ref{e15}--\ref{e27} are executed at most $\OO(n)$ times for
each possible value of $p$, for a total of $\OO(n^2)$ times.
By indexing the rules of $\EER$ we
can perform the enumeration on line~\ref{e19} in a total $\OO(n^3)$ time,
accounting for $\OO(n^2)$ selected rules of $\EER$
(each of which is used for at most two values of $p$),
and $\OO(n)$ possible values for $q_1$ or $q_2$,
depending on whether $(p,q) = (p_1,q_1)$ or $(p,q) = (p_2,q_2)$.
By the same analysis, lines~\ref{e20}--\ref{e24} are also executed $\OO(n^3)$ time in total,
for a total runtime of $\OO(n^3 \log n)$
(as for the first check, the $\log n$ factor
stems from the query of $\NNR$
and the maximal sharing of $s_1 \circ s_2$).
Lines~\ref{e25}--\ref{e27} are also executed $\OO(n^3)$ times.
Overall at most $\OO(n^3)$ items are added to \textit{worklist},
so lines~\ref{e6}--\ref{e17},
which are executed once per \textit{worklist} item,
take $\OO(n^3)$ time.
In summary, the complexity is $\OO(n^3 \log n)$ as claimed.
\end{proof}

\section{Deciding \NFP}
\label{sec:nfp}

In this section we show how to decide
\NFP for a finite ground TRS $\RR$ in $\OO(\Vert\RR\Vert^3)$ time.
As preprocessing, we curry the TRS to bound its arity
(Section~\ref{sec:curry}),
compute the automaton $\NNR$ that accepts the $\RR^\circ$-normal forms
(Section~\ref{sec:nf}),
the congruence closure $\CCR$ for efficient convertibility checking
(Section~\ref{sec:cc}),
and the rewrite closure $(\RCR, \EER)$ that
allows testing reachability (Section~\ref{sec:rc}).

\begin{remark}
The decision procedure for \NFP is an extension of that for \UNC,
so reading Section~\ref{sec:unc} first is recommended.
\end{remark}

\subsection{Conditions for \NFP}

For the analysis in this subsection,
we adapt the concept of top-stabilizable sides,
which was introduced for analyzing confluence of ground TRSs~\cite{CGN01,F12,GTV04}.
\begin{definition}
\label{def:ts}
A left-hand side $\ell$ of the transitions of $\CCR$
is \emph{top-stabilizable} if there is an $\EER/\RCR$-normal form $s$ with
$(s)\EQU \xr[\CCR]{*} \ell$.
\end{definition}

\begin{remark}
A term $s$ is top-stable (or root-stable)
if no reduction starting at $s$ has a root step.
In the particular case of $\RCR \cup \EER^\pm$,
all rules have a side in $\FF^{[]}$.
So $s$ is top-stable if there is no reduction from $s$ to an element of $\FF^{[]}$.
By Lemma~\ref{lem:rc},
any reduction $s \xr[\RCR \cup \EER^\pm]{*} q \in \FF^{[]}$
factors as $s \xr[\EER \cup \RCR]{*} q$,
so $s$ is top-stable if none of its $\EER/\RCR$-normal forms is in $\FF^{[]}$.
A term is top-stabilizable if it is convertible to a top-stable term.
In our setting,
convertibility is treated by the congruence closure $\CCR$,
which works over a different signature than the rewrite closure.
The connection is made by Lemma~\ref{lem:cc},
which explains why the operation $(\cdot)\EQU$
features in Definition~\ref{def:ts}.
\end{remark}

Assume that \NFP holds for $\RR^\circ$.
In particular, \UNC holds, and by the reasoning from Section~\ref{sec:unc},
there is a partial function $w(\cdot)$
mapping $q \in \FFR$
to the unique $\RR^\circ$-normal form that reaches $q$ by $\CC$ steps,
if such a normal form exists.
Because \NFP holds,
$s \xlr[\RR^\circ]{*} t$ with $s, t \in \TT(\FF^\circ)$ and
$t$ in $\RR^\circ$-normal form implies $s \xr[\RR^\circ]{*} t$.
The former is equivalent to $s \xr[\CCR]{*} u \xl[\CCR]{*} t$.
Let us assume that $u = q \in \FF\EQU$.
This implies $t = w(q)$,
so by Lemma~\ref{lem:cc} and Lemma~\ref{lem:rc} we obtain
a term $s'$ with
\[
s \xr[\CCR]{*} q \xl[\CCR]{*} t = w(q)
\qquad
\text{and}
\qquad
s \xr[\EER/\RCR]{*} s' \xl[\EER/\RCR^-]{*} t
\]
Assume that $s'$ is an $\EER/\RCR$-normal form of $s$.
Note that if $t = t_1 \circ t_2$ (i.e., $t$ is not a constant),
then there are states $q_1$, $q_2$ of $\CCR$
such that $t_1 = w(q_1)$, $t_2 = w(q_2)$ and
$t \xr[\CCR]{*} q_1 \circ q_2 \xr[\CCR]{} q$.
Further note that if $s' = s'_1 \circ s'_2$ (i.e., $s'$ is not a constant),
then we have states $p_1$, $p_2$ of $\CCR$ with
$(s')\EQU \xr[\CCR]{*} p_1 \circ p_2  \xr[\CCR]{} q$,
which means that $p_1 \circ p_2$ is top-stabilizable.
Conversely, $p_1 \circ p_2 \to q \in \CCR$ with
a top-stabilizable side $p_1 \circ p_2$ implies the existence of such
a term $s$ with non-constant $\EER/\RCR$-normal form $s'$
($s'$ exists by definition and we can let $s = s'\nf[\EER^-]$).
If $s' \in \FF$ then $s = s' = t$.
We consider four remaining cases with $s' \notin \FF$,
based on whether or not $s'$ and $t$ are constants.
\begin{enumerate}
\item
$s' \in \FF^{[]}$ and $t \in \FF$.
Then $s' \xr[\RCR]{} \cdot \xr[\EER^-]{} t$
(using that $\RCR$ is reflexive and transitive).
\item
$s' = s'_1 \circ s'_2$, and $t \in \FF$.
We have a contradiction to $t \xr[\EER/\RCR^-]{*} s'$.
\item
$s' \in \FF^{[]}$, and $t = t_1 \circ t_2$.
Then there must be a rule $p'_1 \circ p'_2 \to p' \in \EER$ such that
$p'_1 \circ p'_2 \xr[\EER]{} p' \xr[\RCR^-]{} s'$,
$q_1 = (p'_1)\EQU$ and $q_2 = (p'_2)\EQU$.
(The latter two conditions are necessary for
$t_1 \circ t_2 \xr[\EER/\RCR]{*} p'_1 \circ p'_2$ to hold,
cf.\ Lemma~\ref{lem:cc})
\item
$s' = s'_1 \circ s'_2$ and $t = t_1 \circ t_2$.
Then $s'_1 \xr[\EER/\RCR]{*} t_1$,
$s'_2 \xr[\EER/\RCR]{*} t_2$ imply $p_1 = q_1$ and $p_2 = q_2$
by Lemma~\ref{lem:cc}.
\end{enumerate}
From these four cases we obtain the following necessary conditions,
using the fact that $w(q) = t$.
\begin{definition}
The \NFP-conditions are
\begin{enumerate}
\item
if $s' \in \FF^{[]}$ and $t = w(q) \in \FF$ are constants,
then $s' \xr[\RCR]{} {\cdot} \xr[\EER^-]{} t$.
\item
if $p_1 \circ p_2 \to q \in \CCR$, and
$w(q) \in \FF$ is a constant,
then $p_1 \circ p_2$ is not top-stabilizable.
\item
if $s' \in \FF^{[]}$ is a constant,
$w(q_1) \circ w(q_2)$ is a $\RR^\circ$-normal form, and
$q_1 \circ q_2 \to q \in \CCR$,
then there is a rule $p'_1 \circ p'_2 \to p' \in \EER$ with
$p' \xr[\RCR^-]{} s'$ and $q_1 = (p'_1)\EQU$ and $q_2 = (p'_2)\EQU$.
\item
if $p_1 \circ p_2 \to q \in \CCR$
with top-stabilizable $p_1 \circ p_2$,
and $w(q_1) \circ w(q_2)$ is a $\RR^\circ$-normal form, then
$p_1 = q_1$ and $p_2 = q_2$.
\end{enumerate}
\end{definition}

\begin{lemma}
The \NFP-conditions are necessary and sufficient for \NFP to hold,
provided that $\RR^\circ$ is \UNC.
\end{lemma}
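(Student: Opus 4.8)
The plan is to prove the two implications separately, treating necessity as essentially already done and concentrating on sufficiency. For necessity I would appeal directly to the case analysis carried out just before the definition: assuming \NFP (hence \UNC, so that $w(\cdot)$ is well defined), any conversion $s \xlr[\RR^\circ]{*} t$ with $t$ in normal form yields $q = (s)\EQU = (t)\EQU$ with $t = w(q)$, together with a decomposition $s \xr[\EER/\RCR]{*} s' \xl[\EER/\RCR^-]{*} t$ through an $\EER/\RCR$-normal form $s'$; the four structural cases on whether $s'$ and $t$ are constants then force exactly conditions~(1)--(4), with~(2) arising as the contrapositive of the contradictory case. The only point that is not a pure restatement is that each hypothesis configuration of the conditions is actually realizable, i.e.\ comes from a genuine conversion to which \NFP applies; this is exactly the ``conversely'' observation that a top-stabilizable side $p_1 \circ p_2 \to q \in \CCR$ produces a witness $s$ with $s = s'\nf[\EER^-]$.

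The substantial direction is sufficiency. Assuming \UNC and conditions~(1)--(4), I would prove the statement: for every $\RR^\circ$-normal form $t$ and every $s \in \TT(\FF^\circ)$ with $s \xlr[\RR^\circ]{*} t$, we have $s \xr[\RR^\circ]{*} t$, by induction on $|t|$. Given such $s,t$, Lemma~\ref{lem:cc} gives $q = (s)\EQU = (t)\EQU$ and \UNC forces $t = w(q)$. Choosing an $\EER/\RCR$-normal form $s'$ of $s$ and using that (by Proposition~\ref{prop:flat} and Lemma~\ref{lem:rc}) $\RCR$ steps between elements of $\FF^{[]}$ reflect forward $\RR^\circ$-reductions between their unflattenings while $\EER$ steps are invisible at the $\RR^\circ$ level, I obtain $s \xr[\RR^\circ]{*} s'\nf[\EER^-]$, so it remains to drive $s'\nf[\EER^-]$ to $t$.

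The induction then splits along the same four cases. If $s' \in \FF$ then $s' = t$ and we are done. In case~(1) ($s' \in \FF^{[]}$, $t \in \FF$), condition~(1) directly yields $s' \xr[\RCR]{} {\cdot} \xr[\EER^-]{} t$, hence $s'\nf[\EER^-] \xr[\RR^\circ]{*} t$. Case~(2) ($s' = s_1' \circ s_2'$, $t \in \FF$) cannot occur, since the side $p_1 \circ p_2$ with $p_i = (s_i')\EQU$ is top-stabilizable through $s'$ while $w(q) = t \in \FF$, contradicting condition~(2). In case~(3) ($s' \in \FF^{[]}$, $t = t_1 \circ t_2$), condition~(3) supplies a rule $p_1' \circ p_2' \to p' \in \EER$ with $s' \xr[\RCR]{} p'$ and $(p_i')\EQU = q_i$, which unflattens to $s'\nf[\EER^-] \xr[\RR^\circ]{*} (p_1')\nf[\EER^-] \circ (p_2')\nf[\EER^-]$; since $(p_i')\EQU = q_i = (t_i)\EQU$ and $t_i = w(q_i)$ is a strictly smaller normal form, the induction hypothesis gives $(p_i')\nf[\EER^-] \xr[\RR^\circ]{*} t_i$, and closure of $\xr[\RR^\circ]{}$ under the context $\square \circ \square$ finishes the case. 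Case~(4) ($s' = s_1' \circ s_2'$, $t = t_1 \circ t_2$) is analogous: condition~(4) gives $p_i = q_i$, so each $s_i'$ is convertible to the strictly smaller normal form $t_i$, and the induction hypothesis together with context closure yields the result.

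The main obstacle, I expect, is the flattening and unflattening bookkeeping that turns the flat, congruence-closure-level facts into a single \emph{forward} reduction $s \xr[\RR^\circ]{*} t$ rather than a mere conversion. Concretely, one must pin down, via Proposition~\ref{prop:flat}, Lemma~\ref{lem:rc} and Lemma~\ref{lem:cc}, that $\RCR$ between elements of $\FF^{[]}$ corresponds precisely to $\RR^\circ$-reachability of the unflattened terms and that $\EER$-steps collapse under $\nf[\EER^-]$, so that the chosen $\EER/\RCR$-normal form $s'$ is compatible with the decomposition and the argument-wise reductions recompose through $\circ$. The remaining care points are checking that case~(2) is genuinely excluded --- that top-stabilizability of $p_1 \circ p_2$ is exactly the obstruction ruled out by condition~(2) --- and that the recursion in cases~(3) and~(4) always descends to strictly smaller normal forms $t_i$, guaranteeing well-foundedness of the induction on $|t|$.
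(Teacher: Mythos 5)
Your proof follows essentially the same route as the paper's: necessity is delegated to the preceding case analysis, and sufficiency is proved by induction on the normal form $t$, with the same four-way split on whether the $\EER/\RCR$-normal form $s'$ and the term $t$ are constants, condition~(2) excluding the impossible configuration and conditions~(3) and~(4) driving the recursion into the arguments of $\circ$. There is, however, one case you skip. You write that Lemma~\ref{lem:cc} yields a single state $q = (s)\EQU = (t)\EQU$; what the lemma actually yields is a common $\CCR$-reduct $u$ with $(s)\EQU \xr[\CCR]{*} u \xl[\CCR]{*} (t)\EQU$, and $u$ need not lie in $\FFR$, because $\CCR$ is not completely defined --- only terms admitting a conversion with a root step are accepted in a state (take, e.g., convertible terms whose conversion happens strictly below the root). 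When $u \notin \FFR$ the four-way analysis does not apply, and the paper dispatches this situation as a preliminary case: either $(s)\EQU = u = t$ are equal constants, or $u = u_1 \circ u_2$, the joining sequences contain no root steps, and one recurses directly into the two arguments. This fits without friction into your induction on $|t|$, but as written your argument only covers conversions whose $\CCR$-join collapses to a congruence-class constant. A second, purely cosmetic point: in your case~(2) the top-stabilizable side is the $\CCR$-reduct $p_1 \circ p_2$ of $(s_1')\EQU \circ (s_2')\EQU$ (one has $(s_i')\EQU \xr[\CCR]{*} p_i$), not $(s_1')\EQU \circ (s_2')\EQU$ itself; the contradiction with condition~(2) goes through unchanged.
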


\begin{proof}
Necessity has already be established.
Assume that $\RR^\circ$ is \UNC and satisfies the \NFP-conditions.
Let $s \xlr[\RR^\circ]{*} t$ with $s,t \in \TT(\FF^\circ)$ and $t$ in normal form.
We have $s \xr[\CCR]{*} u \xl[\CCR]{*} t$.
Let $s'$ be an $\EER/\RCR$-normal form of $s$
(note that each application of an $\EER$-rule decreases the size of the term,
while $\RCR$-rules do not change the size of terms,
so $\EER/\RCR$ is terminating).
We have $s' \xlr[\EER\cup\RR^\flat]{*} t$, and
consequently $(s')\EQU \xr[\CCR]{*} u \xl[\CCR]{*} (t)\EQU = t$
by Lemma~\ref{lem:cc}.
In the remainder of the proof,
we show that $s' \xr[\EER^-/\RCR]{*} t$.
This will establish $s \xr[\RR^\circ]{*} t$
by Lemma~\ref{lem:rc} in conjunction with Proposition~\ref{prop:flat}:
\[
s = s\nf[\EER^-] \xr[\RR^\circ]{*} s'\nf[\EER^-] \xr[\RR^\circ]{*} t\nf[\EER^-] = t
\]
We proceed by induction on $t$.
If $u \notin \FF^{[]}$,
then either $(s)\EQU = u = t$ are constants,
or
\[
(s)\EQU = (s_1)\EQU \circ (s_2)\EQU \xr[\CCR]{*} u_1 \circ u_2 \xl[\CCR]{*} t_1 \circ t_2 = t
\]
for some terms $s_1$, $s_2$, $t_1$, $t_2$,
and we conclude by the induction hypotheses for $t_1$ and $t_2$.
So assume that $u = q \in \FF^{[]}$.
Then we have $(s')\EQU \xr[\CCR]{*} q \xl[\CCR]{*} t = w(q)$.
We consider three cases.
\begin{enumerate}
\item
If $(s')\EQU$ and $t$ are constants, then so is $s'$.
Because $s'$ is an $\EER$-normal-form,
we must have $s' \in \FF^{[]}$.
By the first \NFP-condition,
$s \xr[\EER/\RCR]{*} s' \xr[\RCR]{} {\cdot} \xr[\EER^-]{} t$,
which implies $s \xr[\RR^\circ]{*} t$
by Lemma~\ref{lem:rc} and Proposition~\ref{prop:flat}.
\item
If $(s')\EQU$ is not constant,
then $s' = s_1' \circ s_2'$ for some terms $s_1'$, $s_2'$,
and there are $p_1,p_2 \in \FFR$ with
$(s_1')\EQU \circ (s_2')\EQU \xr[\CCR]{*} p_1 \circ p_2 \xr[\CCR]{} q$.
This means that $p_1 \circ p_2$ is a top-stabilizable side,
so $t$ cannot be a constant by the second \NFP-condition.
Hence there are terms $t_1$, $t_2$ with $t = t_1 \circ t_2$,
and $q_1, q_2 \in \FFR$ such that
$t_1 \circ t_2 \xr[\CCR]{*} q_1 \circ q_2 \xr[\CCR]{} q$.
By the fourth \NFP-condition, we have $p_1 = q_1$ and $p_2 = q_2$,
and consequently, $(s'_i) \xr[\CCR]{*} q_i \xl[\CCR]{*} t_i$
for $i \in \{1,2\}$. We conclude by the induction hypothesis.
\item
If $(s')\EQU$ is a constant, but $t$ is not,
then there are terms $t_1$, $t_2$ with $t = t_1 \circ t_2$,
and $q_1, q_2 \in \FFR$ such that
$t_1 \circ t_2 \xr[\CCR]{*} q_1 \circ q_2 \xr[\CCR]{} q$.
By the third \NFP-condition, we obtain $p'_1$, $p'_2$ such that
$s' \xr[\EER/\RCR]{} p'_1 \circ p'_2$ and
$(p'_i) = q_i \xl[\CCR]{*} t_i$ for $i \in \{1,2\}$,
and we conclude by the induction hypothesis.
\qedhere
\end{enumerate}
\end{proof}

\subsection{Computing Top-Stabilizable Sides}
\label{sec:ts}

First note that for rules $c \to [c]\EQU \in \CCR$ with $c \in \FF$,
$c$ is never top-stabilizable, since $(s)\EQU \xr[\CCR]{*} c$ implies $s = c$,
and $c \xr[\EER]{} [c]$.
So any top-stabilizable side must have shape $p \circ q$.
In order to compute the top-stabilizable sides,
let us first consider $\EER/\RCR$ reducible terms
of shape $p \circ q$ with $p,q \in \FF^{[]}$.
Because $\RCR$ is reflexive and transitive,
this means that there are rules $p \to p' \in \RCR$ and $q \to q' \in \RCR$
such that $p' \circ q'$ is a left-hand side of $\EER$.
We can compute
\[
\NF^\circ = \{p \circ q \mid p,q \in \text{$\FF^{[]}$ and $p \circ q$ is in $\EER/\RCR$-normal form}\}
\]
as the complement of the $\EER/\RCR$-reducible terms;
the latter can be computed by enumerating the $\OO(\Vert\RR\Vert)$
left-hand sides $p \circ q$ of $\EER$,
and the $\OO(\Vert\RR\Vert^2)$ possible pairs $(p',q')$,
taking $\OO(\Vert\RR\Vert^3)$ time in total.
The size of $\NF^\circ$ is $\OO(\Vert\RR\Vert^2)$.

Let $\TS(p \circ q)$ denote the fact that $p \circ q$ is a top-stabilizable side.
For convenience, we extend the notion to the right-hand sides of $\CCR$:
$\TS(q)$ for $p_1 \circ p_2 \to q \in \CCR$ with $\TS(p_1 \circ p_2)$.
In this case, we call $q$ a top-stabilizable constant.
The top-stabilizable constants and sides can be found using an
incremental computation.
Every $\EER/\RCR$-normal form $p \circ q \in \NF^\circ$
for which $(p)\EQU \circ (q)\EQU$ is a left-hand side of $\CCR$
induces a top-stabilizable side $(p)\EQU \circ (q)\EQU$.
If $p_1 \circ p_2$ is top-stabilizable and $p_1 \circ p_2 \xr[\CCR]{} q$, then
$q$ is a top-stabilizable constant.
For any top-stabilizable constant $p$, $p \circ q$, $q \circ p$ for
constant $q \in \FF\EQU$ are also top-stabilizable.
Consequently, we obtain the following inference rules,
where $i \in \{1,2\}$.
\[
\inferr[nf]{\TS((p_1)\EQU \circ (p_2)\EQU)}{p_1 \circ p_2 \in \NF^\circ}
\qquad
\inferr[ts$_0$]{\TS(q)}{p_1 \circ p_2 \to q \in \CCR & \TS(p_1 \circ p_2)}
\qquad
\inferr[ts$_i$]{\TS(q_1 \circ q_2)}{\TS(q_i)}
\]
Restricting to left-hand sides of $\CCR$,
there are $\OO(\Vert \RR \Vert)$
instances of (\irule{nf}), (\irule{ts$_0$}),
(\irule{ts$_1$}) and (\irule{ts$_2$}).
Using Horn inference, they allow computing the top-stabilizable sides
in $\OO(\Vert \RR \Vert)$ time.
The run time is dominated by the computation of $\NF^\circ$,
which takes $\OO(\Vert \RR \Vert^3)$ time.

\begin{example}[continued from Examples~\ref{running-cc} and~\ref{running-rc}]
\label{running-ts}
For $\UU$ the $\EEU/\RCU$-reducible terms are
$\NNF_\UU = \{ [\m f] \circ [\m a], [\m f] \circ [\mfa] \}$.
Note that normal forms also include terms like
$[\m{f}] \circ [\m{f}]$ or $[\mfa] \circ [\m{a}]$
that have no correspondence in the original TRS.
For $\VV$ we obtain the following $\EEV/\RCV$-reducible terms:
$\NNF_\VV = \{[\m{f}] \circ [\m{b}], [\m{f}] \circ [\mfa],
 [\m{f}] \circ [\mfb],[\m{f}] \circ [\m{a}], [\m{f}] \circ [\mffb],
 [\m{f}] \circ [\mfffb] \}$.

In the $\UU$ case,
we have $\TS([\m f]_\UU \circ [\m a]_\UU)$
(e.g., because $[\m f] \circ [\m b] \in \NF_\UU$)
and $\TS([\m a]_\UU)$,
whereas for $\VV$ no top-stabilizable sides or constants exist.
\end{example}

\subsection{Checking \NFP}

We base the procedure on
the decision procedure for \UNC (Listing~\ref{lst:unc}).
A closer look at the \NFP-condition reveals
that they are fairly easy to check,
provided one starts with an enumeration of all pairs $(w(q),q)$
where $w(q)$ is defined.
The \UNC decision procedure works by doing exactly that:
each item pushed to its \textit{worklist} corresponds to
a pair $(w(q),q)$,
and if \UNC holds, each such pair is enumerated exactly once.
Therefore we can modify the procedure to check \NFP instead of \UNC,
see Listing~\ref{lst:nfp}.

\begin{listing}
\definecolor{newlines}{RGB}{120,120,120}
\newcommand{\NEW}[1]{\textcolor{newlines}{#1}}
\begin{algorithmic}[1]
\STATE compute $\CCR$ and a representation of $\NNR$ \label{f1}
\STATE let $\mathit{seen}(p)$ be undefined for all $p \in \FFR$ (to be updated below)
\color{newlines}
\STATE compute $\RCR$ and top-stabilizable sides $\TS$ \label{f2}
\color{black}
\FORALL{constants $c \trianglelefteq \RR^\circ$ that are normal forms} \label{f3}
 \STATE push $([c]\EQU,[c], c)$ to \textit{worklist} \label{f4}
\color{newlines}
 \IF {any constant $p \in \FF^{[]}$ with $(p)\EQU = [c]\EQU$
   does not satisfy $p \xr[\RCR]{} [c]$} \label{f5}
  \RETURN {$\NFP(\RR)$ is false} \COMMENT{first \NFP-condition violated} \label{f6}
 \ENDIF \label{f7}
 \IF {there is any $p_1 \circ p_2 \in \TS$ with
  $p_1 \circ p_2 \to [c]\EQU \in \CCR$} \label{f8}
  \RETURN {$\NFP(\RR)$ is false} \COMMENT{second \NFP-condition violated} \label{f9}
 \ENDIF \label{f10}
\color{black}
\ENDFOR \label{f11}
\WHILE{\textit{worklist} not empty} \label{f12}
 \STATE $(p,q,s) \gets$ pop \textit{worklist} \label{f13}
 \IF{$\mathit{seen}(p)$ is defined} \label{f14}
  \RETURN $\NEW{\NFP}(\RR)$ is false \COMMENT{\NEW{not \UNC}}\label{f15}
 \ENDIF \label{f16}
 \STATE $\mathit{seen}(p) \gets (q,s)$ \label{f17}
 \FORALL{transitions $p_1 \circ p_2 \to p_r \in \CCR$ with $p \in \{p_1,p_2\}$} \label{f18}
  \IF{$\mathit{seen}(p_1) = (q_1,s_1)$ and $\mathit{seen}(p_2) = (q_2,s_2)$ are defined} \label{f19}
   \IF{there is a transition $q_1 \circ q_2 \to q_r \in \NNR$} \label{f20}
    \STATE push $(p_r,q_r,s_1 \circ s_2)$ to \textit{worklist} \label{f21}
\color{newlines}
    \STATE $G \gets \{ p' \mid p'_1 \circ p'_2 \to p' \in \EER,\,
      p_1 = (p'_1)\EQU,\, p_2 = (p'_2)\EQU \}$ \label{f22}
    \IF {there is $q \in \FF^{[]}$ with $(q)\EQU = p_r$ such that
      $q \not\xr[\RCR]{} q'$ for all $q' \in G$} \label{f23}
     \RETURN {$\NFP(\RR)$ is false} \COMMENT{third \NFP-condition violated} \label{f24}
    \ENDIF \label{f25}
    \IF {there is $p_1' \circ p_2' \in \TS$ with
      $p_1' \circ p_2' \to p_r \in \CCR$ and
      $p_1' \circ p_2' \neq p_1 \circ p_2$} \label{f26}
     \RETURN {$\NFP(\RR)$ is false} \COMMENT{fourth \NFP-condition violated} \label{f27}
    \ENDIF \label{f28}
\color{black}
   \ENDIF \label{f29}
  \ENDIF \label{f30}
 \ENDFOR \label{f31}
\ENDWHILE \label{f32}
\RETURN $\NEW{\NFP}(\RR)$ is true \label{f33}
\end{algorithmic}

\caption{Deciding $\NFP(\RR)$.
\NEW{Gray} parts differ from the \UNC procedure (Listing~\ref{lst:unc}).}
\label{lst:nfp}
\end{listing}

\begin{example}[continued from Examples~\ref{running-unc} and~\ref{running-ts}]
\label{running-nfp}
The underlying \UNC procedure executes in the same way as Example~\ref{running-unc},
but with additional checks whenever an item is added to the \textit{worklist}.
For $\UU$,
after $([\m a]_\UU,[\m b],\m b)$ is added to the \textit{worklist},
we find that $([\m b])_\UU = [\m a]_\UU$ but not $[\m b] \to [\m a] \in \RCU$,
so the first \NFP-condition is violated;
indeed we have $\m a \xlr{*} \m b$ but not $\m a \xr{*} \m b$.
We would also have a violation of the second \NFP-condition,
because $[\m f]_\UU \circ [\m a]_\UU \to [\m a] \in \CCU$ and
$\TS([\m f]_\UU \circ [\m a]_\UU)$.
A possible counterexample arising from this
is $\m f \circ \m b \xlr{*} \m b$, but not $\m f \circ \m b \xr{*} \m b$.

For $\VV$, only $([\m f]_\VV,[\m f],\m f)$ is added to the \textit{worklist},
and the corresponding checks of the first and second \NFP-conditions
succeed.
\end{example}

\begin{theorem}
\label{thm:nfp}
The procedure in Listing~\ref{lst:nfp}
decides $\NFP(\RR)$ in $\OO(\Vert\RR\Vert^3)$ time.
\end{theorem}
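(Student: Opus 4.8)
The plan is to build on the already-established correctness of the \UNC procedure together with the characterization of \NFP by the four \NFP-conditions. Since the non-gray lines of Listing~\ref{lst:nfp} are copied verbatim from Listing~\ref{lst:unc}, I would first invoke Theorem~\ref{thm:unc} and its invariants (Lemma~\ref{lem:unc-inv}): as long as no early return fires, the control flow is exactly that of the \UNC procedure. In particular, when \UNC holds the enumeration visits each class $p\in\FFR$ that accepts a normal form exactly once, recording $\mathit{seen}(p)=(q,s)$ where $s=w(p)$ is the unique normal form reaching $p$ and $q$ is its $\NNR$-state; equivalently each defined pair $(w(q),q)$ is produced once, as a \textit{worklist} item pushed on line~\ref{f4} or~\ref{f21}. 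The first order of business is to observe that the gray code never returns \emph{true}, and that the sole \emph{true}-exit (line~\ref{f33}) is reached precisely when the \UNC loop runs to completion without a repeated $\mathit{seen}$ entry. Hence if \UNC fails, line~\ref{f15} fires (or an earlier gray check returns false), and returning false is correct because $\NFP\implies\UNC$.

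Assuming \UNC holds, I would then match each gray check against one \NFP-condition, splitting on the shape of $t=w(q)$. When $t$ is a constant, $q=[c]\EQU$ for a constant normal form $c$ and the item is pushed in the initialization loop; the checks on lines~\ref{f5}--\ref{f7} and~\ref{f8}--\ref{f10} transcribe the first and second \NFP-conditions. For the first, $[c]\xr[\EER^-]{}c$ is automatic, so $s'\xr[\RCR]{}{\cdot}\xr[\EER^-]{}t$ collapses to $p\xr[\RCR]{}[c]$ ranging over the constants $p$ of the class; for the second, $\TS$ records top-stabilizability of a side mapping to $q$. When $t=s_1\circ s_2$ is composite, the item is pushed on line~\ref{f21} via the $\CCR$-transition $p_1\circ p_2\to p_r$ whose arguments are the classes of $s_1,s_2$; identifying $p_r$ with $q$ and $(p_1,p_2)$ with $(q_1,q_2)$, the checks on lines~\ref{f22}--\ref{f25} and~\ref{f26}--\ref{f28} implement the third and fourth \NFP-conditions. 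The care points are translating $p'\xr[\RCR^-]{}s'$ into $s'\xr[\RCR]{}p'$ with $s'$ over constants of class $p_r$ and $p'$ over the auxiliary set $G$ of line~\ref{f22}, and using determinism of $\CCR$ so that $(p_1,p_2)$ fixes the transition and each composite normal form is tested once. Since the conditions split cleanly by the shape of $w(q)$ and every defined $w(q)$ is enumerated, every instance of every condition is checked; by the characterizing lemma, reaching line~\ref{f33} certifies \NFP.

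For the complexity, writing $n=\Vert\RR\Vert$, the preprocessing dominates: $\CCR$ and $\NNR$ cost $\OO(n\log n)$, while $\RCR$ and the top-stabilizable sides (through $\NF^\circ$) cost $\OO(n^3)$. The underlying \UNC loop runs in $\OO(n\log n)$ and performs $\OO(n)$ pushes (Theorem~\ref{thm:unc}), so the gray initialization checks execute $\OO(n)$ times and, with a pass over the members of each class for line~\ref{f5} and a precomputed table of top-stabilizable constants for line~\ref{f8}, cost $\OO(n)$ in total. The gray main-loop checks run once per push. For the third condition, the sets $G$ have total size $\OO(n)$ across the distinct transitions (each $\EER$-rule lands in a single $G$), and a marking array recording those $q$ with $q\xr[\RCR]{}q'$ for some $q'\in G$ lets line~\ref{f23} run in $\OO(|G|\,n)$ per push, i.e.\ $\OO(n^2)$ overall; even a crude member-by-member bound stays within the $\OO(n^3)$ budget. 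For the fourth condition, precomputing per right-hand side the number of top-stabilizable sides mapping to it makes line~\ref{f26} $\OO(1)$. Thus the total is $\OO(n^3)$, matching the preprocessing.

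The hard part is the second paragraph: exactly aligning the four code-level checks with the four \NFP-conditions while keeping the two signatures straight --- the $(\cdot)\EQU$ images living on the $\CCR$ side versus the $\FF^{[]}$ constants on the $\RCR/\EER$ side --- and getting the directions of the reversed $\RCR^-$ and $\EER^-$ steps right. The complexity is comparatively routine, since the cubic preprocessing already provides ample slack for the incremental checks.
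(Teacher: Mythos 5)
Your proposal is correct and follows essentially the same route as the paper: reduce to the correctness of the \UNC procedure (so that a false answer is sound because $\NFP\implies\UNC$ and, when \UNC holds, each defined pair $(w(q),q)$ is enumerated exactly once), match the four added checks to the four \NFP-conditions split by whether $w(q)$ is a constant, and observe that the cubic preprocessing plus the $\OO(n^2)$-per-execution check of the third condition over $\OO(n)$ executions gives the $\OO(\Vert\RR\Vert^3)$ bound. The only differences are presentational refinements (e.g.\ your suggested indexing for lines~\ref{f23} and~\ref{f26}), which the paper does not need since the crude bound already suffices.
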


\begin{proof}
Listing~\ref{lst:nfp} is an extension of Listing~\ref{lst:unc}.
In particular note that if the procedure returns that $\NFP(\RR)$
is true, then line~\ref{f33} is reached,
so \UNC holds for the input TRS as well;
in other words, whenever $\UNC(\RR)$ is false,
the procedure returns false as well.
So let us assume that $\UNC(\RR)$ is true and we have
a corresponding partial function $w : \FF\EQU \to \TT(\FF^\circ)$
mapping convertibility classes to normal forms.

Compared to Listing~\ref{lst:unc},
Listing~\ref{lst:nfp} has additional checks on lines \ref{f5}--\ref{f10}
and \ref{f22}--\ref{f28}.
The enumeration on line~\ref{f4} covers
all constants $c$ and states $q \in \FF\EQU$ with $w(q) = c$,
noting that $c \xr[\EER]{} [c]\EQU$,
so $q = [c]\EQU$ is forced.
On lines~\ref{f5} and~\ref{f6},
we check the first \NFP-condition, where $s' = p$,
noting that the only $\EER^-$ step leading to $c$ is $[c] \xr[\EER^-]{} c$.
On lines~\ref{f8} and~\ref{f9},
the second \NFP-condition is checked.
By the proof of Theorem~\ref{thm:unc} (correctness of the \UNC procedure),
Line~\ref{f17} is executed exactly once for each state $q_r$
for which $w(q_r)$ is defined but not a constant: $w(q_r) = s_1 \circ s_2$.
Lines \ref{f22}--\ref{f25} check the third \NFP-condition
(the code uses $p_1 \circ p_2 \to p_r \in \CCR$ instead of
$q_1 \circ q_2 \to q_r \in \CCR$),
and lines~\ref{f26} and~\ref{f27} check the fourth \NFP-condition
(using $p'_1 \circ p'_2 \to p_r \in \CCR$ instead of
$p_1 \circ p_2 \to q \in \CCR$).

For analyzing the complexity let $n = \Vert\RR\Vert$.
Note that the additional precomputation takes cubic time,
and that the added checks in Listing~\ref{lst:nfp} are executed $\OO(n)$ times
(see the proof of Theorem~\ref{thm:unc}).
The most expensive addition is the check of the third \NFP-condition,
which may take $\OO(n^2)$ time each time it is executed,
for a total of $\OO(n^3)$.
Overall the computation time is $\OO(n^3)$ as claimed.
\end{proof}

\begin{remark}
As far as we know, this is the first polynomial time algorithm for
deciding \NFP for ground systems.
\end{remark}

\begin{remark}
Rather than making a copy of the \UNC procedure with the modifications
in Listing~\ref{lst:nfp},
one can parameterize the \UNC procedure with callbacks that are invoked
at lines~\ref{c3} and~\ref{c14},
to avoid duplication of code.
\end{remark}

\section{Deciding Confluence}
\label{sec:cr}

We are given a finite ground TRS $\RR$ over a finite signature $\FF$.
As preprocessing, we curry and flatten the TRS in order to bound its arity
(Sections~\ref{sec:curry} and~\ref{sec:flat}),
obtaining $\RR^\circ$, $\EER$ and $\RR^\flat$.
We then compute the congruence closure $\CCR$ and rewrite closure $\RCR$,
enabling effective convertibility and reachability checking
(Sections~\ref{sec:cc} and~\ref{sec:rc}).
First we observe the following.

\begin{lemma}
\label{lem:rcr-cr}
The TRS $\RR^\circ$ is confluent
if and only if $\RCR \cup \EER^\pm$ is confluent.
\end{lemma}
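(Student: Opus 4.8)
The plan is to prove the equivalence in two stages, passing through the intermediate system $\EER^\pm \cup \RR^\flat$. The first stage is essentially free: in the proof of Lemma~\ref{lem:rc} we already observed $\RR^\flat \subseteq \RCR \subseteq {\xr[\EER^\pm \cup \RR^\flat]{*}}$, which yields the equality of reachability relations ${\xr[\EER^\pm \cup \RR^\flat]{*}} = {\xr[\EER^\pm \cup \RCR]{*}}$. Since confluence of a rewrite relation depends only on its reflexive--transitive closure, $\RCR \cup \EER^\pm$ is confluent if and only if $\EER^\pm \cup \RR^\flat$ is. It therefore remains to show that $\RR^\circ$ is confluent if and only if $\EER^\pm \cup \RR^\flat$ is; this is the content that ``flattening preserves confluence''.

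To relate the two systems, which act on $\TT(\FF^\circ)$ and $\TT(\FF^\circ \cup \FF^{[]})$ respectively, I would use the projection $\overline{t} := t\nf[\EER^-]$. This is well defined because $\EER^-$ is confluent and terminating; it maps into $\TT(\FF^\circ)$, and it fixes every term already in $\TT(\FF^\circ)$. The two facts I would establish are: (i) every $\EER^\pm$ step preserves $\overline{\,\cdot\,}$, because an $\EER^\pm$ step relates $\EER$-convertible terms, which share the same $\EER^-$-normal form; and (ii) every step $t_1 \xr[\RR^\flat]{} t_2$ projects to a step $\overline{t_1} \xr[\RR^\circ]{} \overline{t_2}$, using that $\nf[\EER^-]$ can be computed independently on context and redex (the observation from the proof of Proposition~\ref{prop:flat}) together with $\overline{[\ell]} = \ell$. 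Combining (i) and (ii) gives the projection property $t_1 \xr[\EER^\pm \cup \RR^\flat]{*} t_2$ implies $\overline{t_1} \xr[\RR^\circ]{*} \overline{t_2}$. In the other direction, Proposition~\ref{prop:flat} already provides the lifting ${\xr[\RR^\circ]{*}} \subseteq {\xr[\EER^\pm \cup \RR^\flat]{*}}$.

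With projection and lifting in hand, both implications follow by routine diagram chasing. For ``$\EER^\pm \cup \RR^\flat$ confluent implies $\RR^\circ$ confluent'', I lift a peak $b \xl[\RR^\circ]{*} s \xr[\RR^\circ]{*} c$ into $\TT(\FF^\circ \cup \FF^{[]})$, close it to a valley there, and project the valley back down, using $\overline{b} = b$ and $\overline{c} = c$ to land in $\TT(\FF^\circ)$. For the converse, given a peak $b \xl[\EER^\pm \cup \RR^\flat]{*} t \xr[\EER^\pm \cup \RR^\flat]{*} c$, I project it to $\overline{b} \xl[\RR^\circ]{*} \overline{t} \xr[\RR^\circ]{*} \overline{c}$, close it to a common reduct $e \in \TT(\FF^\circ)$ by confluence of $\RR^\circ$, and then observe $b \xr[\EER^-]{*} \overline{b} \xr[\RR^\circ]{*} e$ and $c \xr[\EER^-]{*} \overline{c} \xr[\RR^\circ]{*} e$; lifting these (and using $\EER^- \subseteq \EER^\pm$) shows that $b$ and $c$ are joinable at $e$.

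The main obstacle is the bookkeeping in step (ii): one must verify that an $\RR^\flat$ redex $[\ell]$, after unfolding all surrounding and nested $\FF^{[]}$-constants via $\nf[\EER^-]$, sits inside a genuine single-hole $\RR^\circ$-context as $\overline{C}[\ell]$, so that the $\RR^\flat$ step projects to a single root-matching $\RR^\circ$ step rather than merely a conversion. This is precisely the locality of $\nf[\EER^-]$ recorded in the proof of Proposition~\ref{prop:flat}; once granted, every remaining step is a short diagram chase.
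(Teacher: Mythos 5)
Your proposal is correct and follows essentially the same route as the paper: both pass through the intermediate system $\EER^\pm \cup \RR^\flat$ using the reachability equality from Lemma~\ref{lem:rc}, and both transfer confluence between $\RR^\circ$ and $\EER^\pm \cup \RR^\flat$ via the projection $t \mapsto t\nf[\EER^-]$ and the lifting of Proposition~\ref{prop:flat}. The only differences are presentational: the paper argues with the Church-Rosser formulation (projecting whole conversions) where you argue with peaks and valleys, and you spell out explicitly the step-preservation facts that the paper attributes somewhat tersely to Proposition~\ref{prop:flat}.
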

\begin{proof}
Assume that $\RR^\circ$ is confluent.
By Lemma~\ref{lem:rc},
$\xr[\RR^\flat \cup \EER^\pm]{*}$ and $\xr[\RCR \cup \EER^\pm]{*}$ coincide,
so we show confluence (in fact, the Church-Rosser property)
of $\RR^\flat \cup \EER^\pm$.
Assume that $s \xlr[\RR^\flat \cup \EER^\pm]{*} t$.
This implies $s \nf[\EER^-] \xlr[\RR^\circ]{*} t \nf[\EER^-]$
by Proposition~\ref{prop:flat}.
Confluence of $\RR^\circ$ implies that there is a term $u$ with
$s \nf[\EER^-] \xr[\RR^\circ]{*} u \xl[\RR^\circ]{*} t \nf[\EER^-]$.
Using Proposition~\ref{prop:flat} again
we obtain a joining sequence for $s$ and $t$
using rules from $\RR^\flat \cup \EER^\pm$:
\[
s \xr[\EER^-]{*} s \nf[\EER^-]
\xr[\RR^\flat \cup \EER^\pm]{*} u
\xl[\RR^\flat \cup \EER^\pm]{*} t \nf[\EER^-]
\xl[\EER^-]{*} t
\]
Next assume that $\RCR \cup \EER^\pm$ is confluent.
By Lemma~\ref{lem:rc}, this implies confluence of $\RR^\flat \cup \EER^\pm$.
Assume that $s \xlr[\RR^\circ]{*} t$,
where we may assume that $s,t \in \TT(\FF^\circ)$,
because confluence is preserved by signature extension.
Then $s \xlr[\RR^\flat \cup \EER^\pm]{*} t$ by Proposition~\ref{prop:flat}.
Consequently, $s \downarrow_{\RR^\flat \cup \EER^\pm} t$ follows
by confluence of $\RR^\flat \cup \EER^\pm$.
Using Proposition~\ref{prop:flat} again,
\[
s = s \nf[\EER^-] \xr[\RR^\circ]{*} {\cdot} \xl[\RR^\circ]{*} t \nf[\EER^-] = t
\tag*{\qedhere}
\]
\end{proof}

\subsection{Conditions for Confluence}
In this subsection, we derive necessary conditions for
confluence of $\RR^\circ$ (and hence $\RR$),
and then show that they are sufficient as well.

We follow the approach in  \cite{GTV04} and \cite{Tiw02}, which
is based on the analysis of two convertible terms $s$, $t$ and their
normal forms with respect to a system of so-called \emph{forward rules}
of the rewrite closure, in our case using the system $\EER / \RCR$.
Let us assume that $\RR^\circ$ is confluent.
By Lemma~\ref{lem:rcr-cr}, $\RCR \cup \EE^\pm$ is confluent as well.
Let $s$ and $t$ be $\EER \cup \RCR$-convertible terms and
let $s'$ and $t'$ be $\EER/\RCR$-normal forms of $s$ and $t$.
Clearly, $s' \xlr[\RCR \cup \EER^\pm]{*} t'$ follows.
Equivalently, there is a term $u$ such that
\begin{equation}
\label{eq1}
(s')\EQU \xr[\CCR]{*} u \xl[\CCR]{*} (t')\EQU
\end{equation}
Let us assume that $u \in \FF\EQU$.
To capture the conditions on $s'$ and $t'$,
we make use of the concept of top-stabilizable sides (Definition~\ref{def:ts}).
If the sequence $(s')\EQU \xr[\CCR]{*} u$ is empty,
then $s' \in \FF^{[]}$;
otherwise, there are $s_1,s_2 \in \FF\EQU$ with $(s')\EQU \xr[\CCR]{*} s_1 \circ s_2 \xr[\CCR]{} u$,
which means that $s_1 \circ s_2$ is a top-stabilizable side.
Conversely,
for any $s_1 \circ s_2 \to u \in \CCR$
where $s_1 \circ s_2$ is top-stabilizable,
we obtain a $\EE/\RCR$-normal form $s'$ with $(s')\EQU \xr[\CCR]{*} u$
for which $(s')\EQU \xr[\CCR]{*} u$ holds.
An analogous analysis applies to $t'$.
Because $\RCR \cup \EER^\pm$ is confluent,
$s'$ and $t'$ are joinable,
which by Lemma~\ref{lem:rc} implies
\begin{equation}
\label{eq2}
s' \xr[\EER^- \cup \RCR]{*} {\cdot} \xl[\EER^- \cup \RCR]{*} t'
\end{equation}
Furthermore, by Lemma~\ref{lem:cc} and Proposition~\ref{prop:rcx},
\eqref{eq2} implies
\begin{equation}
\label{eq3}
(s')\EQU \xl[\CCR]{*} {\cdot} \xr[\CCR]{*} (t')\EQU
\end{equation}
We distinguish three cases.
\begin{enumerate}
\item
There are top-stabilizable sides
$s_1 \circ s_2$ and $t_1 \circ t_2$
with corresponding $\EER/\RCR$-normal forms $s'$ and $t'$
such that
\[
(s')\EQU \xr[\CCR]{*} s_1 \circ s_2 \xr[\CCR]{} u
\xl[\CCR]{} t_1 \circ t_2 \xl[\CCR]{*} (t')\EQU
\]
In this case, for $i \in \{ 1, 2 \}$,
the terms $s_i$ and $t_i$ are meetable by $\CCR$ steps by \eqref{eq3},
which implies $s_i = t_i$ noting that $\CCR$ is confluent
and that $s_i$ and $t_i$ are $\CCR$-normal forms.
\item
$t' \in \FF^{[]}$
and there is a top-stabilizable side $s_1 \circ s_2$
with associated $\EER/\RCR$-normal form $s'$
such that
\[
(s')\EQU \xr[\CCR]{*} s_1 \circ s_2 \xr[\CCR]{} u = (t')\EQU
\]
To satisfy \eqref{eq2} and \eqref{eq3}, there must be $t_1, t_2 \in \FF^{[]}$ such that
$t' \xr[\EER^-/\RCR]{} t_1 \circ t_2$, and
$(s_i)\EQU = (t_i)\EQU$ for $i \in \{ 1, 2 \}$.
\item
$s', t' \in \FF^{[]}$ with $(s')\EQU = (t')\EQU$.
Then $s' \downarrow_{\EER^- \cup \RCR} t'$.
\end{enumerate}
Hence we obtain the following necessary conditions for
confluence of $\RR^\circ$:

\begin{definition}
\label{def:concon}
The \emph{confluence conditions} for confluence of $\RCR \cup \EER^\pm$ are
as follows.
\begin{enumerate}
\item
If $s_1 \circ s_2$ and $t_1 \circ t_2$ are top-stabilizable sides such that 
$s_1 \circ s_2 \xr[\CCR]{} {\cdot} \xl[\CCR]{} t_1 \circ t_2$
then $s_i = t_i$ for $i \in \{1,2\}$.
\item
If $s_1 \circ s_2$ is a top-stabilizable side
and $s_1 \circ s_2 \to (t')\EQU \in \CCR$
for $t' \in \FF^{[]}$,
then there must be $t_1, t_2 \in \FF^{[]}$
such that $t' \xr[\EER^-/\RCR]{} t_1 \circ t_2$, and
$s_i = (t_i)\EQU$ for $i \in \{ 1, 2 \}$.
\item If $s', t' \in \FF^{[]}$ with $(s')\EQU = (t')\EQU$ then
$s' \downarrow_{\EER^- \cup \RCR} t'$.
\end{enumerate}
\end{definition}

\begin{lemma}
\label{lem:concon}
The confluence conditions are necessary and sufficient for confluence
of $\RR^\circ$.
\end{lemma}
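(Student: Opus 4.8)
Necessity is exactly the case analysis carried out just before Definition~\ref{def:concon}, so the plan is to prove sufficiency. Assume the three confluence conditions hold. By Lemma~\ref{lem:rcr-cr} it suffices to show that $\RCR \cup \EER^\pm$ is confluent, and I would establish the Church--Rosser property directly. Given $s \xlr[\RCR \cup \EER^\pm]{*} t$, first reduce $s$ and $t$ to $\EER/\RCR$-normal forms $s'$ and $t'$ (this terminates since every $\EER$ step shrinks the term while $\RCR$ preserves size). It then suffices to prove $s' \downarrow_{\EER^- \cup \RCR} t'$: prepending the forward reductions $s \xr[\EER \cup \RCR]{*} s'$ and $t \xr[\EER \cup \RCR]{*} t'$ to such a valley yields $s \downarrow_{\RCR \cup \EER^\pm} t$. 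Since $\RR^\flat \subseteq \RCR \subseteq {\xr[\EER^\pm \cup \RR^\flat]{*}}$ (cf.\ the proof of Lemma~\ref{lem:rc}), convertibility by $\RCR \cup \EER^\pm$ agrees with convertibility by $\RR^\flat \cup \EER^\pm$, so Lemma~\ref{lem:cc} supplies a term $u$ with $(s')\EQU \xr[\CCR]{*} u \xl[\CCR]{*} (t')\EQU$, i.e.\ \eqref{eq1}.

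The core is the claim, proved by induction on $|s'| + |t'|$: whenever $s'$ and $t'$ are $\EER/\RCR$-normal forms admitting a $\CCR$-valley $(s')\EQU \xr[\CCR]{*} u \xl[\CCR]{*} (t')\EQU$, they satisfy $s' \downarrow_{\EER^- \cup \RCR} t'$. If $u = u_1 \circ u_2$ is not a constant, then, as the elements of $\FFR$ are $\CCR$-normal forms, the root $\circ$ is never contracted; hence $s' = s'_1 \circ s'_2$, $t' = t'_1 \circ t'_2$ with $(s'_i)\EQU \xr[\CCR]{*} u_i \xl[\CCR]{*} (t'_i)\EQU$, and the induction hypothesis together with closure under contexts finishes the case. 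If $u = q \in \FFR$, I split according to whether $s'$ and $t'$ are constants in $\FF^{[]}$ or applications $p \circ p'$. When both are constants, $(s')\EQU = q = (t')\EQU$ and condition~3 applies directly. When both are applications, the last $\CCR$-steps reveal top-stabilizable sides $s_1 \circ s_2 \xr[\CCR]{} q \xl[\CCR]{} t_1 \circ t_2$ (top-stabilizable by Definition~\ref{def:ts}, witnessed by $s'$ and $t'$); condition~1 gives $s_i = t_i$, so $(s'_i)\EQU \xr[\CCR]{*} s_i \xl[\CCR]{*} (t'_i)\EQU$ are valleys on proper subterms and the induction hypothesis applies.

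The remaining, and I expect hardest, case is the mixed one, say $s' = s'_1 \circ s'_2$ while $t' \in \FF^{[]}$. Here $s_1 \circ s_2$, with $(s')\EQU \xr[\CCR]{*} s_1 \circ s_2 \xr[\CCR]{} q = (t')\EQU$, is a top-stabilizable side, so condition~2 yields $t_1, t_2 \in \FF^{[]}$ with $t' \xr[\EER^-/\RCR]{} t_1 \circ t_2$ and $s_i = (t_i)\EQU$. The delicate point is that the normal form $t'$ must be \emph{expanded} into an application, so the induction appears to move to non-normal terms. What rescues the argument is that $t_1$ and $t_2$ lie in $\FF^{[]}$ and are therefore themselves $\EER/\RCR$-normal forms of size one: the resulting valleys $(s'_i)\EQU \xr[\CCR]{*} s_i = (t_i)\EQU$ are on pairs with $|s'_i| + |t_i| < |s'| + |t'|$, so the induction hypothesis gives $s'_i \downarrow_{\EER^- \cup \RCR} t_i$; and since $t' \xr[\EER^- \cup \RCR]{*} t_1 \circ t_2$ is an increasing reduction, combining these valleys componentwise keeps the whole join inside $\EER^- \cup \RCR$. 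Feeding the claim back to $s'$, $t'$ then completes the proof of confluence of $\RCR \cup \EER^\pm$, hence of $\RR^\circ$ by Lemma~\ref{lem:rcr-cr}. The main obstacle is precisely engineering this mixed case so that both the induction measure decreases and the join stays in the increasing direction; the other cases reduce to context closure once the right top-stabilizable sides are exposed.
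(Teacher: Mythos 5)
Your proposal is correct and follows essentially the same route as the paper: after reducing to $\EER/\RCR$-normal forms $s'$, $t'$ and invoking Lemma~\ref{lem:cc}, you perform the same case split (both constants in $\FF^{[]}$, both applications, mixed) matched to the three confluence conditions, with the same induction measure $|s'|+|t'|$; the paper merely phrases the induction as a minimal-counterexample argument and states joinability with respect to $\RCR\cup\EER^\pm$ rather than insisting the valley stays in $\EER^-\cup\RCR$. Your observation that constants in $\FF^{[]}$ are themselves $\EER/\RCR$-normal forms of size one, which makes the measure decrease in the mixed case, is exactly the point the paper relies on when it notes $|t_1|+|s_1'|<|t'|+|s'|$.
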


\begin{proof}
Necessity has already been shown above.
For sufficiency, assume that the confluence conditions are satisfied.
We show confluence of $\RCR \cup \EER^\pm$,
from which confluence of $\RR^\circ$ follows by Lemma~\ref{lem:rcr-cr}.
Assume that there are terms $s,t$ that are convertible
($s \xlr[\RCR \cup \EER^\pm]{*} t$) but not joinable.
Then any corresponding $\EER/\RCR$-normal forms are not joinable either.
Let $s'$ and $t'$ be convertible $\EER/\RCR$-normal forms with no
common reduct such that $|s'|+|t'|$ is minimal.
By Lemma~\ref{lem:cc}, we have
\begin{equation}
\label{eqA}
(s')\EQU \xr[\CCR]{*} {\cdot} \xl[\CCR]{*} (t')\EQU
\end{equation}
First note that $s' \in \FF$ (or $t' \in \FF$) is impossible because
of the rules $c \to [c] \in \EE$ for $c \in \FF$.
We distinguish three cases.
\begin{enumerate}[align=left]
\item
If $s', t' \in \FF^{[]}$.
Then $(s')\EQU = (t')\EQU$ from \eqref{eqA}
and the fact that $(s')\EQU, (t')\EQU \in \FF\EQU$ are $\CCR$-normal forms.
Hence we obtain a joining sequence from the third confluence
condition, contradicting the non-joinability of $s'$ and $t'$.
\item
If $s' = s_1' \circ s_2'$ and $t' \in \FF^{[]}$, then
\eqref{eqA} becomes
$(s')\EQU \xr[\CCR]{*} s_1 \circ s_2 \xr[\CCR]{} (t')\EQU$,
noting that $(t')\EQU$ is an $\CCR$-normal form.
In particular, $s_1 \circ s_2$ is a top-stabilizable side.
By the second confluence condition we obtain
a term $t_1 \circ t_2$ such that $t' \xr[\EER^-/\RCR]{} t_1 \circ t_2$,
and $s_i = (t_i)\EQU$ for $i\in\{1,2\}$.
Because $(s_1')\EQU \xr[\CCR]{*} s_1 = (t_1)\EQU$,
$t_1$ and $s_1'$ are convertible by Lemma~\ref{lem:cc}.
Furthermore, since
$|t_1| + |s_1'| < |t'| + |s'|$, this implies that $t_1$ and $s_1'$
are joinable. Analogously, $t_2$ and $s_2'$ are also joinable, and
therefore $s'$ is joinable with $t_1 \circ t_2 \xl[\EER^-/\RCR]{} t'$,
contradicting our assumptions.
\item[(2$'$)]
The case that $s' \in \FF^{[]}$ and $t' = t_1' \circ t_2'$
is handled symmetrically.
\item
If $s' = s_1' \circ s_2'$ and
$t' = t_1' \circ t_2'$, then \eqref{eqA} becomes
$(s')\EQU \xr[\CCR]{*} r \xl[\CCR]{*} (t')\EQU$ for some $r$.
If $r \in \FF$ then $s' = r = t'$ is trivially joinable.
If $r = r_1 \circ r_2$ (i.e., $r$ is not a constant), then $s_1'$ and $t_1'$
are convertible via $r_1$ and
likewise $s_2'$ and $t_2'$ are convertible via $r_2$.
However, one of these pairs cannot be joinable, and we obtain a smaller
counterexample to confluence, a contradiction.
Therefore, we must have $r \in \FF\EQU$.
So \eqref{eqA} can be decomposed as
\[
(s')\EQU \xr[\CCR]{*} s_1 \circ s_2 \xr[\CCR]{} r
\xl[\CCR]{} t_1 \circ t_2 \xl[\CCR]{*} (t')\EQU
\]
In particular, $s_1 \circ s_2$ and $t_1 \circ t_2$ are top-stabilizable sides.
From the first confluence condition, we conclude that $s_1 = t_1$ and
therefore $s_1'$ and $t_1'$ are convertible.
By minimality of $|s'| + |t'|$, $s_1'$ and $t_1'$ must be joinable.
Analogously, $s_2'$ and $t_2'$ must also be joinable,
from which we conclude that $s' = s_1' \circ s_2'$ and $t' = t_1' \circ t_2'$
are joinable as well, a contradiction.
\qedhere
\end{enumerate}
\end{proof}

\subsection{Computation of Confluence Conditions}
First we compute all top-stabilizable sides of shape $u \circ v$.
We have already done this in the decision procedure for \NFP,
see Section~\ref{sec:ts}.
For the third confluence condition (cf.\ Definition~\ref{def:concon}),
we have to identify constants $p,q \in \FF^{[]}$ with
$p \downarrow_{\EER^- \cup \RCR} q$.

Note that $p \downarrow_{\EER^- \cup \RCR} q$ is equivalent to
$p \xl[\EER \cup \RCR^-]{*} {\cdot} \xr[\EER \cup \RCR^-]{*} q$.
This matches the definition of meetable constants (Definition~\ref{def:mc}),
with $\RCR$ replaced by $\RCR^-$.
Consequently,
we can use the inference rules from Section~\ref{sec:mc}
with $p \to q \in \RCR$ replaced by $q \to p \in \RCR$
for computing the joinable constants, see Figure~\ref{fig:jc}.
\begin{definition}
\label{def:jc}
We call $p,q \in \FF^{[]}$ \emph{joinable constants}
(written $p \downarrow q$)
if $p \downarrow_{\EER^- \cup \RCR} q$.
\end{definition}
\begin{figure}
\begin{gather*}
\inferr[refl]{p \downarrow p\mathstrut}{p \in \FF^{[]}}
\qquad
\inferr[cong]{p \downarrow q}{p_1 \circ p_2 \to p \in \EER & p_1 \downarrow q_1 & p_2 \downarrow q_2 & q_1 \circ q_2 \to q \in \EER}
\\
\inferr[step\/$_l$]{p \downarrow r}{p \to q \in \RCR & q \downarrow r & r \in \FF^{[]}}
\qquad
\inferr[step\/$_r$]{p \downarrow r}{p \in \FF^{[]} & p \downarrow q & r \to q \in \RCR}
\end{gather*}
\caption{Inference rules for joinable constants}
\label{fig:jc}
\end{figure}

With the precomputation done,
checking the confluence conditions is straightforward.
Note that the map from $(\cdot)\EQU : \FF^{[]} \to \FF\EQU$
that is needed for the second and third confluence conditions
is a byproduct of the congruence closure computation.

\begin{example}[continued from Example~\ref{running-rc}]
\label{running-jc}
The joinability relations for $\UU$ and $\VV$ are given below.
As in Example~\ref{running-rc}, the letters and superscripts indicate the
rule being used to derive the entry and computation stage.
\label{running5}
\[
{\downarrow}_\UU =
\begin{array}{r|ccccccc}
&\m{f}&\m{a}&\mfa&\m{b}\\
\hline
\m{f}& r^0 &     &     &     \\
\m{a}&     & r^0 &s^1_r&     \\
\mfa &     &s^1_l& r^0 &s^1_l\\
\m{b}&     &     &s^1_r& r^0 \\
\end{array}
\qquad
{\downarrow}_\VV =
\begin{array}{r|cccccccc}
&\m{f}&\m{a}&\mfa&\m{b}&\mfb&\mffb&\mfffb\\
\hline
\m{f} & r^0 &     &     &     &     &     &     \\
\m{a} &     & r^0 &t_l^1&t_l^1&t_l^1&t_l^1&t_l^1\\
\mfa  &     &t_r^1& r^0 &t_l^1& c^2 & c^2 & c^2 \\
\m{b} &     &t_r^1&t_r^1& r^0 &t_r^1& c^3 & c^3 \\
\mfb  &     &t_r^1& c^2 &t_l^1& r^0 & c^2 & c^4 \\
\mffb &     &t_r^1& c^2 & c^3 & c^2 & r^0 & c^3 \\
\mfffb&     &t_r^1& c^2 & c^3 & c^4 & c^3 & r^0 \\
\end{array}
\]
It is now easy to verify that $\UU$ violates the third confluence condition
(we have $[\m a]_\UU = [\m b]_\UU$ but not $[\m a]\downarrow_\UU [\m b]$),
and therefore non-confluent.
The other two confluence conditions are satisfied for $\UU$.
For the first confluence condition, note that $[\m f]_\UU \circ [\m a]_\UU$
is the only top-stabilizable side.
The second confluence condition follows from
the fact that $s_1 = [\m f]_\UU$, $s_2 = [\m a]_\UU$,
$t' \in \{ [\m a], [\mfa], [\m b] \}$, and
$[\m f] \circ [\m a] \xr[\EEU]{} [\mfa] \xr[\RCU]{} [\m a],[\m b]$.
On the other hand,
$\VV$ satisfies all confluence conditions and is, therefore, confluent.
\end{example}

Putting everything together, we obtain the following theorem.

\begin{theorem}
Confluence of a ground TRS $\RR$ can be decided in cubic time.
\end{theorem}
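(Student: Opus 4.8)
The plan is to bolt together the machinery of the preceding sections and reduce confluence of $\RR$ to the confluence conditions of Definition~\ref{def:concon}. After currying and flattening the input, I would compute the congruence closure $\CCR$ (Section~\ref{sec:cc}) and the rewrite closure $\RCR$ (Section~\ref{sec:rc}), and then decide confluence of $\RR^\circ$ by checking the three conditions of Definition~\ref{def:concon}. Correctness is immediate from Lemma~\ref{lem:concon}, which establishes that these conditions are necessary and sufficient for confluence of $\RR^\circ$, combined with the fact that currying reflects and preserves confluence of $\RR$ (Section~\ref{sec:curry}). Thus the real content of the proof is the complexity accounting: I would verify that each preprocessing step, each auxiliary computation, and each condition check runs in $\OO(\Vert\RR\Vert^3)$ time.

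For the preprocessing phase, currying costs $\OO(\Vert\RR\Vert)$ time, flattening and the congruence closure cost $\OO(\Vert\RR\Vert\log\Vert\RR\Vert)$ each, and the rewrite closure costs $\OO(\Vert\RR\Vert^3)$; the whole phase is therefore cubic. The two auxiliary relations that Definition~\ref{def:concon} refers to are likewise cubic. Top-stabilizable sides are computed as in Section~\ref{sec:ts}, in $\OO(\Vert\RR\Vert^3)$ time (dominated by the computation of $\NF^\circ$). The joinable-constants relation $\downarrow$ of Definition~\ref{def:jc} is produced by the Horn system of Figure~\ref{fig:jc}, which is precisely the meetable-constants inference of Section~\ref{sec:mc} with $\RCR$ replaced by $\RCR^-$; counting rule instances exactly as there, its computation runs in $\OO(\Vert\RR\Vert^3)$ time.

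It remains to check the three conditions. The first condition asks that for every target state $u$ of $\CCR$, at most one top-stabilizable side be mapped to $u$; since $\CCR$ is deterministic and has $\OO(\Vert\RR\Vert)$ transitions, grouping the transitions by target handles this in linear time. The third condition asks for $p \downarrow q$ whenever $p,q \in \FF^{[]}$ satisfy $(p)\EQU = (q)\EQU$; there are $\OO(\Vert\RR\Vert^2)$ such pairs and each joinability test is a constant-time lookup into the precomputed relation, so this costs $\OO(\Vert\RR\Vert^2)$. The second condition is the only delicate one. Using Proposition~\ref{prop:rcx}, that $\RCR$ preserves $(\cdot)\EQU$, I would rephrase the requirement ``$t' \xr[\EER^-/\RCR]{} t_1 \circ t_2$ with $(t_i)\EQU = s_i$'' as the equivalent statement that there is an $\EER$-rule $p_1' \circ p_2' \to p'$ with $(p_1')\EQU = s_1$, $(p_2')\EQU = s_2$ and $t' \to p' \in \RCR$. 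Indexing the $\EER$-rules by the pair of classes of their arguments, the check then ranges over the $\OO(\Vert\RR\Vert)$ top-stabilizable sides $s_1 \circ s_2 \to u$, the constants $t'$ with $(t')\EQU = u$, and the matching $\EER$-rules, and stays well within $\OO(\Vert\RR\Vert^3)$ time.

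I expect the main obstacle to lie in this second condition: reformulating the ``reduce-then-expand'' requirement $t' \xr[\EER^-/\RCR]{} t_1 \circ t_2$ into a shape whose witnesses can be enumerated directly, and checking that the resulting indexed iteration stays cubic rather than degenerating into a quartic search. Everything else is either a direct appeal to Lemma~\ref{lem:concon} or a reuse of the cubic bounds already established for the rewrite closure and the meetable-constants computation, so once the second condition is pinned down the theorem follows by combining the pieces.
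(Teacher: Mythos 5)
Your proposal is correct and follows essentially the same route as the paper: curry, flatten, compute $\CCR$, $\RCR$, the top-stabilizable sides and the joinable-constants relation $\downarrow$ in $\OO(\Vert\RR\Vert^3)$ time, then check the three conditions of Definition~\ref{def:concon} and conclude by Lemma~\ref{lem:concon}. Your reformulation of the second condition (an $\EER$-rule $p_1' \circ p_2' \to p'$ with $t' \to p' \in \RCR$ and $(p_i')\EQU = s_i$) is exactly the check the paper performs, and your indexed enumeration keeps every step within the cubic bound.
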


\begin{proof}
Let $n = \Vert \RR \Vert$. We follow the process outlined above.
First we curry $\RR$ in linear time, obtaining $\RR$ with
$\Vert \RR \Vert \in \OO(n)$.
Then we flatten $\RR$, obtaining
$(\RR^\flat,\EER)$ with $\Vert \EER \Vert \in \OO(n)$ and
$\Vert \RR^\flat \Vert \in \OO(n)$ in time $\OO(n \log n)$.
In the next step we compute the rewrite and congruence closures
$(\RCR,\EER)$ and $(\CCR,\EER)$ of $(\RR^\flat,\EER)$ in $\OO(n^3)$ time.
We then compute
the joinable constants relation $\downarrow$
(which are analogous to meetable constants),
and the top-stabilizable sides $\TS({-}\circ{-})$,
which as seen in Sections~\ref{sec:mc} and~\ref{sec:ts} takes $\OO(n^3)$ time.
Finally we check the three confluence conditions.
\begin{enumerate}
\item
For the first condition,
we check each of the $\OO(n^2)$ pairs of rules
$s_1 \circ s_2 \xr[\CCR]{} p, t_1 \circ t_2 \xr[\CCR]{} q$ with $p = q$,
$\TS(s_1 \circ s_2)$ and $\TS(t_1 \circ t_2)$
and check that $s_1 = t_1$ and $s_2 = t_2$.
\item
For the second condition, we consider the $\OO(n^2)$ triples
$s_1 \circ s_2 \to p \in \CCR$,
$t'$ with $p = (t')\EQU$,
and check that one of the $\OO(n)$ rules
$t_1 \circ t_2 \to q \in \EER$ with $q \to t' \in \RCR$
satisfies $s_1 = (t_1)\EQU$ and $s_2 = (t_2)\EQU$.
\item
Finally, for the third condition,
for all $\OO(n^2)$ pairs $s',t' \in \FF^{[]}$
with $(s')\EQU = (t')\EQU$
we check that $s' \downarrow t'$ holds.
\end{enumerate}
Overall these steps take $\OO(n^3)$ time.
Correctness follows from Lemma~\ref{lem:concon}.
\end{proof}

\begin{remark}
In our previous work~\cite{F12},
the congruence closure $\CCR$ was computed as a rewrite closure,
resulting in a congruence relation $\CCR'$
such that for all convertible terms $s$ and $t$,
$s \xr[\CCR' \cup \EER]{*} {\cdot} \xl[\CCR' \cup \EER]{*} t$.
In the present article, we instead work on the quotient obtained
by replacing each constant $[s]$ by $[s]/\CCR' = [s]\EQU$.
This results in some simplification,
because $\CCR$ is now deterministic,
but also in some complication,
because we are now working with two sets of extra constants,
$\FF^{[]}$ and $\FF\EQU$,
which have to be related using the map $(\cdot)\EQU$.
Apart from these differences,
the procedure presented here is identical to that presented in~\cite{F12}.
\end{remark}

\section{Related Work}
\label{sec:rel}

\subsection{First-order Theory of Rewriting}
\label{sec:rel:dt}

The first-order theory of rewriting
was shown to be decidable by Dauchet and Tison~\cite{DT85}.
We can express the four properties of interest as follows.

\begin{align*}
{\xlr{*}} &\subseteq {\xr{*}} \cdot {\xl{*}}
\tag{\CR}
\\
{\xlr{*}} \cap (\TT \times \NF) &\subseteq {\xr{*}}
\tag{\NFP}
\\
{\xlr{*}} \cap (\NF \times \NF) &\subseteq {\equiv}
\tag{\UNC}
\\
({\xl{*}} \cdot {\xr{*}}) \cap (\NF \times \NF) &\subseteq {\equiv}
\tag{\UNR}
\end{align*}

Here, $\TT$ denotes the set of all ground terms,
$\NF$ represents the set of all ground normal forms
with respect to the given ground TRS,
and $\equiv$ stands for the identity relation on all terms.
We may assume that the maximum arity of the input TRS is at most two,
by currying the system first if necessary.
Thus, when following the procedure by Dauchet and Tison~\cite{DT85},
we can construct automata for the left-hand and right-hand sides
of the subset tests in polynomial time.
However, in order to perform the subset test,
the right-hand side needs to be represented by a deterministic automaton.
For \CR and \NFP this incurs an exponential cost,
but for \UNC and \UNR, the automaton is already deterministic.
Consequently, for the latter two properties,
a polynomial time decision procedure is obtained.

\subsection{Automation}

\begin{figure}
\begin{tikzpicture}
\begin{scope}[only marks, x=0.1cm, y=0.5cm]
\coordinate (O) at (0,10.5);
\pgfsetplotmarksize{0.15em}
\draw[->] (0,0) -- coordinate (x axis mid) (102,0);
\draw[->] (0,0) -- coordinate (y axis mid) (0,10.5);
\foreach \x in {0,20,40,60,80,100}
    \draw (\x,1pt) -- (\x,-3pt) node[anchor=north] {$\x\%$};
\foreach \y/\ytext in {0/0,2/2,4/4,6/6,8/8,10/{\geqslant}10}
    \draw (1pt,\y) -- (-3pt,\y) node[anchor=east] {$\ytext$};
\node[below=1cm,anchor=center] at (x axis mid) {solved};
\node[left=1cm,rotate=90,anchor=center] at (y axis mid) {time(s)};

\draw plot[mark=square*,   mark options={color=red,fill=white}] file {scatter/CSI-UNR.data};
\draw plot[mark=diamond*,  mark options={color=orange,fill=white}] file {scatter/CSI-UNC.data};
\draw plot[mark=+,        mark options={color=black,fill=white}] file {scatter/CSI-NFP.data};
\draw plot[mark=asterisk,        mark options={color=gray,fill=white}] file {scatter/CSI-CR.data};
\draw plot[mark=triangle*, mark options={color=blue,fill=white}] file {scatter/FORT-UNR.data};
\draw plot[mark=*,        mark options={color=green!70!black,fill=white}] file {scatter/FORT-UNC.data};
\draw plot[mark=x,        mark options={color=brown,fill=white}] file {scatter/FORT-NFP.data};
\draw plot[mark=pentagon,        mark options={color=cyan,fill=white}] file {scatter/FORT-CR.data};
\draw[dashed] (0,10) -- (100,10);
\draw[dashed] (100,0) -- (100,10);
\end{scope}

\footnotesize
\begin{scope}[shift={(0.5,4.5)}]
\draw (0,0) -- plot[mark=square*,   mark options={color=red,fill=white}] (0.25,0) -- (0.5,0) node[right]{\verb'CSI-UNR'};
\draw[yshift=-1\baselineskip] (0,0) -- plot[mark=diamond*,   mark options={color=orange,fill=white}] (0.25,0) -- (0.5,0) node[right]{\verb'CSI-UNC'};
\draw[yshift=-2\baselineskip] (0,0) -- plot[mark=+,   mark options={color=black,fill=white}] (0.25,0) -- (0.5,0) node[right]{\verb'CSI-NFP'};
\draw[yshift=-3\baselineskip] (0,0) -- plot[mark=asterisk,   mark options={color=gray,fill=white}] (0.25,0) -- (0.5,0) node[right]{\verb'CSI-CR'};
\draw[yshift=-4\baselineskip] (0,0) -- plot[mark=triangle*,   mark options={color=blue,fill=white}] (0.25,0) -- (0.5,0) node[right]{\verb'FORT-UNR'};
\draw[yshift=-5\baselineskip] (0,0) -- plot[mark=*,   mark options={color=green!70!black,fill=white}] (0.25,0) -- (0.5,0) node[right]{\verb'FORT-UNC'};
\draw[yshift=-6\baselineskip] (0,0) -- plot[mark=x,   mark options={color=brown,fill=white}] (0.25,0) -- (0.5,0) node[right]{\verb'FORT-NFP'};
\draw[yshift=-7\baselineskip] (0,0) -- plot[mark=pentagon,   mark options={color=cyan,fill=white}] (0.25,0) -- (0.5,0) node[right]{\verb'FORT-CR'};
\end{scope}

\end{tikzpicture}
\caption{Running times of \CSI and \FORT on 98 ground Cops}
\label{fig:experiments}
\end{figure}

Based on Dauchet and Tison's work,
the tool \FORT~\cite{RM16} implements decision procedures
for the first-order theory of rewriting
of left-linear, right-ground TRSs.
The algorithms described in this article are implemented in the
automated confluence prover \CSI~\cite{NFM17}.
In Figure~\ref{fig:experiments},
we compare the running times of \FORT to that of \CSI on the
98 ground TRSs in the Cops database%
\footnote{%
More information on Cops can be found at
\url{http://coco.nue.riec.tohoku.ac.jp/problems/}}
of \textbf{co}nfluence \textbf{p}roblem\textbf{s}.
(As the labels in Figure~\ref{fig:experiments} indicate,
we tested each of the for properties \UNR, \UNC, \NFP, and \CR,
for both tools, \CSI and \FORT.)
Most of the problems are easy for both tools,
but while \CSI never takes more than 0.5s on any of the input problems,
\FORT sometimes takes longer,
and even exceeds a timeout of 60 seconds on two of the confluence problems.
Full results are available online%
\footnote{%
\url{http://cl-informatik.uibk.ac.at/users/bf3/ground/statistic.php}
}.

\subsection{\UNR}

\UNR was shown to be decidable in polynomial time
by Verma~\cite{V09} and Godoy and Jaquemard~\cite{GJ09}.
As far as we know, these are the best previously published bounds.
The main focus of Verma's work is an abstract framework
for \UNC and \UNR that is also applicable to right-ground systems,
while Godoy and Jacquemard focus on \UNR for linear shallow TRSs.

\subsection{\UNC}

As far as we know,
the fastest previous algorithm for deciding \UNC of ground TRSs
is by Verma, Rusinovich and Lugiez~\cite{VRL01},
and takes $\OO(\Vert\RR\Vert^2\log\Vert\RR\Vert)$ time.
It is worth noting that the algorithm by Verma et al.\ 
is also based on the idea of using
a product construction for the intersection of the congruence closure
and an automaton recognizing normal forms.
However, the latter automaton is constructed in full,
leading to an essentially quadratic complexity bound.
Our algorithm is also closely related to another algorithm by Verma~\cite[Theorem~31]{V09},
but some care is needed to achieve an almost linear bound.

\subsection{Confluence}

\renewcommand{\FF}{\x F}
To derive a polynomial time decision procedure for confluence of ground TRSs,
Comon et al.~\cite{CGN01} use an approach based on a transformation by
Plaisted~\cite{P93} that flattens the TRS.
Then they test \emph{deep joinability} of sides of rules.
The authors sketch an implementation with complexity $\OO(n^5)$, where
$n$ is the size of the given TRS.
Tiwari \cite{Tiw02} and Godoy et al.~\cite{GTV04} base their approach on
a rewrite closure that constructs tree transducers---the given flattened TRS $\RR$
is converted into two TRSs $\FF$ and $\BB$ such that $\FF$ and $\BB^{-1}$
are left-flat, right-constant, $\FF$ is terminating,
and ${\xr[\RR]{*}} = {\xr[\FF]{*}} \cdot {\xr[\BB]{*}}$. They then
consider \emph{top-stabilizable} terms to derive conditions for confluence.
Tiwari obtains a bound of $\OO(n^9)$ (but a more careful implementation would
end up with $\OO(n^6)$), while Godoy et al.\ obtain a bound of $\OO(n^6)$.
The algorithm of \cite{CGN01} is limited to ground TRSs, but
\cite{Tiw02} extends the algorithm to certain shallow, linear systems,
and \cite{GTV03} treats shallow, linear systems in full generality.\footnote{%
The same claim can be found in \cite{GTV04}. However, rule splitting,
a key step in the proof of their Lemma~3.1, only works if left-hand side and
right-hand side variables are disjoint for every rule.}
In these extensions, however, the exponent depends on the maximum arity
of the function symbols of the given TRS. In our work we combine ideas
from \cite{CGN01,GTV04,Tiw02} in order to improve the complexity bound
to $\OO(n^3)$. The key ingredients are a Plaisted-style rewrite closure,
which results in TRSs $\FF$ and $\BB$ of only quadratic size
(namely $\EER \cup \RCR$ and $\EER^- \cup \RCR$), and
top-stabilizability, which is cheaper to test than deep joinability.

\section{Conclusion}
\label{sec:end}

We have presented efficient polynomial time decision procedures
for deciding normal form properties and confluence of ground TRSs.
In particular, we showed how to decide
\UNR in $\OO(n^3 \log n)$ time,
\UNC in $\OO(n \log n)$ time, and
\NFP and \CR in $\OO(n^3)$ time,
where $n = \Vert \RR \Vert$ is the size of the given ground TRS.
As far as we know,
the bounds for the normal form properties are improvements on the literature;
most notably,
we obtain the first polynomial bound for deciding \NFP of ground TRSs.
The main innovation is the interleaving of
an enumeration of a potentially infinite set of witnesses
with checking conditions for the respective properties,
which also serve to ensure termination.
This is a theme that can be found in the procedures of
all three normal form properties presented here,
cf.~Listings~\ref{lst:unc}, \ref{lst:unr-1}, \ref{lst:unr-2} and~\ref{lst:nfp}.

There is a common theme in how the criteria are derived as well:
Starting with a conversion (or peak) that has a root step
(without loss of generality),
one uses the rewrite closure or congruence closure to restrict
which intermediate terms may be constants.
This opens the door to doing a top-down analysis of the intermediate terms,
ultimately giving rise to a finite criterion for the investigates property.
With the exception of \UNC, wich property translates directly to
the fact that a certain tree automaton accepts at most one term in any state,
the criteria arise from a careful case analysis
(taking inspiration from previous work on confluence of ground TRSs),
with little intuitive understanding.

As future work,
it may be interesting to investigate whether these ideas apply to
the larger class of left-linear,
right-ground systems as treated by Verma in~\cite{V09}.
It would also be interesting to prove correctness of the procedures
in an interactive theorem prover.

\subsection*{Acknowledgments}
The author is grateful to the anonymous reviewers for comments that
helped to improve the presentation.
Further thanks go to Franziska Rapp for \FORT,
which turned out to be an invaluable debugging tool,
to Aart Middeldorp for feedback on this article,
and to Vincent van Oostrom for moral support.

\bibliographystyle{plain}
\bibliography{long,confluence}

\end{document}